\definecolor{darkgreen}{RGB}{20,200,10}
\newtheorem{theorem}{Theorem}
\newtheorem{proposition}{Proposition}
\newtheorem{corollary}{Corollary}
\newtheorem{lemma}{Lemma}
\newtheorem{definition}{Definition}
\newtheorem{nexample}{Example} 
\newtheorem{nclaim}{Claim} 
\newcommand\defn[1]{{\bf #1}}
\newcommand\eqdf{\mathbin{=_{\mathrm{df}}}}
\newcommand\I{\mathsf{I}}
\newcommand\ot{\otimes}
\newcommand\C{\mathcal{C}}
\newcommand\M{\mathcal{M}}
\newcommand\al{\alpha}
\newcommand\lam{\lambda}
\newcommand\n{\mathop{-}}
\newcommand\Var{\mathsf{At}}
\newcommand\tto[1][]{\overset{#1}\Rightarrow}
\newcommand\Tm{\mathsf{Fma}}
\newcommand\id{\mathsf{id}}
\newcommand\comp{\circ}
\newcommand\dcomp{\mathsf{comp}}
\newcommand\ax{\mathsf{id}}
\newcommand\uf{\mathsf{shift}}
\newcommand\IL{\I\mathsf{L}}
\newcommand\otL{\ot\mathsf{L}}
\newcommand\IR{\I\mathsf{R}}
\newcommand\otR{\ot\mathsf{R}}
\newcommand\axatm{\ax^{\mathrm{foc}}}
\newcommand\otRfoc{\otR^{\mathrm{foc}}}
\newcommand\IRfoc{\IR^{\mathrm{foc}}}
\newcommand\ILinv{\IL^{-1}}
\newcommand\otLinv{\otL^{-1}}
\newcommand\otLstar{\mathsf{L}}
\newcommand\otLinvstar{\mathsf{L}^{-1}}
\newcommand\scut{\mathsf{scut}}
\newcommand\ccut{\mathsf{ccut}}
\newcommand\scutR{\mathsf{scut}_{\mathsf{R}}}
\newcommand\ccutR{\mathsf{ccut}_{\mathsf{R}}}
\newcommand\cut{\mathsf{cut}}
\newcommand\sound{\mathsf{sound}}
\newcommand\cmplt{\mathsf{cmplt}}
\newcommand\strcmplt{\mathsf{strcmplt}}
\newcommand\switch{\mathsf{switch}}
\newcommand\focus{\mathsf{focus}}
\newcommand\emb{\mathsf{emb}}
\newcommand\embL{\mathsf{emb}_{\mathsf{L}}}
\newcommand\focderivs{\mathsf{focderivs}}
\newcommand\fskmaps{\mathsf{fskmaps}}
\newcommand\proofbox[1]{\begin{tabular}{c} #1 \end{tabular}}
\newcommand\lb{\langle\hspace*{-0.9mm}\langle}
\newcommand\asem[1]{\llbracket #1 \rrbracket}
\newcommand\csem[2]{#1\, \lb #2 \rrbracket}
\newcommand\ssem[1]{\llbracket #1 \lb\, }
\newcommand\dotminus{\dotdiv} 
\newcommand\Ptd{\mathbf{Ptd}}
\newcommand\inl{\mathsf{inl}}
\newcommand\inr{\mathsf{inr}}
\newcommand\e{\mathsf{e}}
\newcommand\m{\mathsf{m}}
\newcommand\JJ{\mathcal{J}}
\newcommand\Lan{\mathrm{Lan}}
\newcommand\Cat{\mathbf{Cat}}
\newcommand\Fsk{\mathbf{Fsk}}
\newcommand\mseq[2]{#1 \longrightarrow #2}
\newcommand\stseq[4][]{#2 \mid #3 \overset{#1}{\longrightarrow} #4}
\newcommand\stseqL[4][]{#2 \mid #3 \overset{#1}{\longrightarrow_{\mathsf{L}}} #4}
\newcommand\stseqR[4][]{#2 \mid #3 \overset{#1}{\longrightarrow_{\mathsf{R}}} #4}
\newcommand\stseqLR[4]{#1 \mid #2 \longrightarrow_{#4} #3}
\newcommand\multi[3][]{#2 \overset{#1}\longrightarrow #3}
\newcommand\tight[4][]{#2\mid#3 \overset{#1}\longrightarrow #4}
\newcommand\loose[3][]{\n\mid#2 \overset{#1}\longrightarrow #3}
\let\flexi=\tight
\renewcommand\arraystretch{3}
\begin{document}
\title[The Sequent Calculus of Skew Monoidal Categories]{The Sequent Calculus of \\ Skew Monoidal Categories}
\thanks{This article is a revised and extended version of a paper presented at MFPS 2018 \cite{UVZ:seqsmcMFPS18}.}
\author{Tarmo Uustalu}
\address{Reykjavik University, Iceland, and Tallinn University of Technology, Estonia}
\email{tarmo@ru.is}
  
\author{Niccol\`o Veltri}
\address{Tallinn University of Technology, Estonia}
\email{niccolo@cs.ioc.ee}

\author{Noam Zeilberger}
\address{\'Ecole Polytechnique, Palaiseau, France}
\email{noam.zeilberger@lix.polytechnique.edu}

\begin{abstract}
  Szlach\'anyi's skew monoidal categories are a well-motivated
  variation of monoidal categories in which the unitors and associator
  are not required to be natural isomorphisms, but merely natural
  transformations in a particular direction. We present a sequent
  calculus for skew monoidal categories, building on the recent
  formulation by one of the authors of a sequent calculus for the
  Tamari order (skew semigroup categories).  In this calculus,
  antecedents consist of a \emph{stoup} (an optional formula)
  followed by a context, 
  and the connectives 
  behave like in the standard
  monoidal sequent calculus except that the left rules may only be
  applied in stoup position. We prove that this calculus is sound and
  complete with respect to existence of maps in the free skew monoidal
  category, and moreover that it captures equality of maps once a
  suitable equivalence relation is imposed on derivations.  We then
  identify a subsystem of focused derivations and establish that it
  contains exactly one canonical representative from each equivalence
  class. This \emph{coherence theorem} leads directly to simple procedures
  for deciding equality of maps in the free skew monoidal category and for
  enumerating any homset without duplicates.
  Finally, and in the spirit of
  Lambek's work, we describe the close connection between this
  proof-theoretic analysis and Bourke and Lack's recent
  characterization of skew monoidal categories as left representable
  skew multicategories.
  We have formalized this development in the dependently typed programming language Agda.

\end{abstract}

\maketitle
  

  
\section{Introduction}

Skew monoidal categories of Szlach\'anyi \cite{Szl:skemcb} are a
variation of monoidal categories \cite{ML:natac,Benabou}
in which the unitors and associator are not required to be natural isomorphisms but only natural transformations
in a particular direction.  Szlach\'anyi's original motivation for naming the concept was
the observation that bialgebroids provide natural examples of skew monoidal categories.
In a different context, the first author of
this paper ran into skew monoidal categories studying a generalization
of monads to functors between different categories: relative monads
\cite{ACU:monnnb}.

Szachl\'anyi's paper was immediately noticed by Street, Lack and colleagues who have by now published a whole series of followup works
\cite{LS:skemsw,LS:triosm,BGLS:catss,BL:free,BL:multi}.
Although the definition of skew monoidal category is simple, it entails some remarkably subtle properties.
For example, while Mac~Lane's coherence theorem for monoidal categories is often
summarized as ``all diagrams commute'', this is no longer true in the skew monoidal case: it is possible to have more than one map between a pair of objects in the free skew monoidal category on a set of generators (even for one generator).
Also, it is not so easy to give a simple necessary and sufficient condition for the existence of
such a map. Curiously, there \emph{is} at most one map between any two
objects in the free skew \emph{semigroup} category: indeed, this generates a natural partial order on fully bracketed words, known as the \emph{Tamari order} \cite{TamariFestschrift,Tamari1951phd}.
Multiple maps therefore somehow originate from the presence of the unit.

As a step towards the coherence
problem and taking a rewriting approach, Uustalu \cite{Uus:cohsmc}
showed that there is at most one map between an object and an object
in a certain normal form, and exactly one map between an object and that
object's normal form. In another direction, Lack and Street \cite{LS:triosm} addressed
the problem of determining equality of maps by proving that there is a faithful,
structure-preserving functor $\Fsk \to \Delta_\bot$ from 
the free skew monoidal category on one generating object to the category of finite non-empty ordinals and
first-element-and-order-preserving functions (which is a strictly
associative skew-monoidal category under the ordinal sum with skew unit 1).  This approach was
further elaborated by Bourke and Lack \cite{BL:free} with a more
explicit description of the morphisms of $\Fsk$, both papers
taking for granted an older analysis of the Tamari order by Huang and Tamari (1972).

In this paper we introduce a sequent calculus formulation of skew monoidal categories, building on a recent proof-theoretic analysis of the
Tamari order by the third author \cite{Zei}.
He observed that the Tamari order is precisely captured by a sequent calculus very
similar to Lambek's original ``syntactic calculus'' \cite{Lam:matss} (what is nowadays referred to as the Lambek calculus, or as a fragment of non-commutative intuitionistic linear logic),
but with the following restrictions: tensor is the only logical connective, and the tensor
left rule is restricted to only apply to the leftmost formula in the antecedent. 
This calculus admits a strong form of cut-elimination known as \emph{focusing} (the terminology comes from linear logic \cite{Andreoli92}), which yields the coherence theorem that valid entailments of the Tamari order are in one-to-one correspondence with focused derivations.

As we will see, the situation becomes more subtle with the addition of a unit.
Sequents now need to have an explicit ``stoup'' (again, the terminology is from linear logic \cite{Girard1991LC}) corresponding to a distinguished position at the left end of an antecedent that can either be empty or contain a formula.
The left rules are still restricted to only apply at the leftmost end of an antecedent to the formula within the stoup, but now there is also an explicit structural rule for moving a formula from the context to the stoup (on the bottom-up 
reading of the rule).
We will see that this has interesting consequences for the metatheory of the sequent calculus, and the end result will be a new coherence theorem for skew monoidal categories with two practical applications: 1.~a simple algorithm for deciding equality of maps in $\Fsk$, and 2.~a simple algorithm for duplicate-free enumeration of all of the maps between any pair of objects in $\Fsk$.



The approach that we take in this paper draws strong inspiration from Lambek's pioneering work applying proof-theoretic techniques towards category theory and vice versa.
These mutual influences can already be clearly seen in Lambek's original paper on the syntactic calculus (and even more so in an immediate followup article \cite{Lambek1961}), but the connection between proof theory and category theory was also explored explicitly in his papers on ``Deductive systems and categories'' \cite{Lambek1968,Lambek1969}.
In the latter of those two papers he formally introduced the concept of \emph{multicategory} (cf.~\cite{Lambek1989}), which is useful in better understanding the proof-theoretic analysis that we develop here.
Indeed, in recent work independent of ours, Bourke and Lack \cite{BL:multi} have related skew monoidal categories to what they call \emph{skew multicategories,} establishing a correspondence between skew monoidal categories and \emph{left representable} skew multicategories.
The two analyses are in fact closely related.
In a certain sense that we will make precise, the sequent calculus for skew monoidal categories can be seen as providing an explicit construction of the free left representable skew multicategory over a set of generators.

This paper is organized as follows. In Section~\ref{sec:skewmoncat}, we
review skew monoidal categories and present the free skew monoidal category
(over a set $\Var$ of generators that we view as atoms) as a simple deductive system whereby each map is an equivalence class of derivations wrt.\ a suitable equivalence relation.
In Section~\ref{sec:seqcalc}, we present the skew monoidal sequent calculus,
and show that it captures existence of maps in $\Fsk(\Var)$ via soundness and completeness theorems.
In Section~\ref{sec:adequacy}, we refine this correspondence to reflect
equality of maps in $\Fsk(\Var)$ by introducing an appropriate equivalence
relation on cut-free derivations.
In Section~\ref{sec:focusing}, we identify the focused subsystem of the sequent calculus providing canonical representatives of each equivalence class, and prove the above-mentioned coherence theorem(s) for the free skew monoidal category.
In Section~\ref{sec:bourke-lack}, we discuss the relationship to Bourke and Lack's work (and Lambek's) in more detail, also introducing an equivalent reformulation of skew multicategories inspired by the sequent calculus.
Finally, in Section~\ref{sec:concl-future}, we conclude and outline some future directions.

We have fully formalized the development of 
Sections~\ref{sec:skewmoncat}--\ref{sec:focusing} (and most of Section~\ref{sec:bourke-lack}) in the dependently
typed programming language Agda.
Our formalization is available at \url{http://cs.ioc.ee/~niccolo/skewmonseqcalc/}.
It is based on Agda 2.5.3 with standard library 0.14.



\section{Skew Monoidal Categories}
\label{sec:skewmoncat}


A \emph{skew monoidal category} \cite{Szl:skemcb} is a category $\C$ together
with a distinguished object $\I$, a functor $\ot : \C \times \C \to
\C$ and three natural transformations 
\[
\lam_{A} : \I \ot A \to A \qquad
\rho_{A} : A \to A \ot \I\qquad
\al_{A,B,C} : (A \ot B) \ot C \to A \ot (B \ot C)
\]
satisfying the following laws:
\[
\mathrm{(a)}  
\xymatrix@R=1.5pc@C=0.2pc{
    & \I \ot \I \ar[dr]^-{\lambda_\I} & \\
    \I \ar[ur]^-{\rho_\I} \ar@{=}[rr] & & \I
    }
\qquad
\mathrm{(b)}\   
\xymatrix@R=1.3pc{
      (A\ot \I) \ot B \ar[r]^{\alpha_{A,\I,B}}
      & A \ot (\I\ot B) \ar[d]^{A\ot \lambda_{B}}\\
      A \ot B \ar@{=}[r] \ar[u]^{\rho_{A}\ot B}&  A \ot B 
    }
\]
\[
\mathrm{(c)}\ 
\xymatrix@C=0.2pc@R=1.5pc{
  (\I \ot A) \ot B \ar[dr]_{\lambda_A \ot B} \ar[rr]^{\alpha_{\I,A,B}} 
           & &  \I \ot (A \ot B) \ar[dl]^{\lambda_{A \ot B}}\\
  & A \ot B & 
    }
\quad
\mathrm{(d)}\ 
\xymatrix@C=0.2pc@R=1.5pc{
  (A \ot B) \ot \I \ar[rr]^{\alpha_{A,B,\I}} 
           & &  A \ot (B \ot \I) \\
  & A \ot B \ar[ul]^{\rho_{A \ot B}} \ar[ur]_{A \ot \rho_B} & 
    }
\]
\[
\mathrm{(e)}\ 
\xymatrix@R=1.5pc@C=2.2pc{
(A\ot (B \ot C)) \ot D \ar[rr]^{\alpha_{A,B \ot C,D}}
  & & A\ot ((B \ot C)\ot D) \ar[d]^{A\ot \alpha_{B,C,D}}
  \\
((A\ot B) \ot C) \ot D \ar[u]^{\alpha_{A,B,C} \ot D}
      \ar[r]^{\alpha_{A\ot B,C,D}}
  & (A\ot B) \ot (C \ot D) \ar[r]^{\alpha_{A,B,C\ot D}}
    & A\ot (B \ot (C \ot D))
}
\]
Notice that (a)--(e) are directed versions of the original Mac Lane
axioms \cite{ML:natac}. Kelly \cite{Kel:maclcc} observed that
when $\lam$, $\rho$ and $\al$ are natural isomorphisms, laws (a), (c),
and (d) can be derived from (b) and (e). However, for skew monoidal
categories, this is not the case.

\medskip

Skew monoidal categories arise more often than one would perhaps first
think, see \cite{Szl:skemcb,LS:skemsw,BGLS:catss,Uus:cohsmc}. The
following are some examples from \cite{Uus:cohsmc}.

\begin{nexample}
A simple example of a skew monoidal category results from
skewing a numerical addition monoid.

View the partial order $(\mathbb{N}, \leq)$ of natural numbers as a
thin category. Fix some natural number $n$ and define $\I = n$ and $x
\otimes y = (x \dotminus n) + y$ where $\dotminus$ is ``truncating
subtraction'' $a \dotminus b = \max(a-b,0)$. We have $\lam_x : (n \dotminus n) + x = 0 + x = x$,
$\rho_x : x \leq \max(x, n) = (x \dotminus n) + n$, and
$\al_{x,y,z} : (((x \dotminus n) + y) \dotminus n) + z \leq (x
\dotminus n) + (y \dotminus n) + z$ (the last fact by a small case analysis).
\end{nexample}

\begin{nexample}
The category $\Ptd$ of pointed sets and point-preserving functions has the
following skew monoidal structure. 

Take $\I = (1, \ast)$ and $(X, p) \otimes (Y, q) = (X + Y, \inl~p)$
(notice the ``skew'' in choosing the point). We define $\lam_X : (1,
\ast) \ot (X, p) = (1+X, \inl~\ast) \to (X, p)$ by $\lam_X~(\inl~\ast)
= p$, $\lam_X~(\inr~x) = x$ (this is not injective).  We let $\rho_X :
(X, p) \to (X + 1, \inl~p) =(X, p) \ot (1, \ast)$ by $\rho_X~x =
\inl~x$ (this is not surjective). Finally we let $\al_{X,Y,Z} : ((X, p)
\ot (Y, q)) \ot (Z, r) = ((X + Y) + Z, \inl~(\inl~p)) \to (X + (Y + Z),
\inl~p) = (X, p) \ot ((Y, q) \ot (Z, r))$ be the obvious isomorphism.

(We note that $\Ptd$ has coproducts too:
$(X, p) + (Y, q) = ((X + Y)/{\sim}, [\inl~p])$ where $\sim$ is the
equivalence relation on $X + Y$ induced by $\inl~p \sim \inr~q$.)
\end{nexample}

\begin{nexample}
Suppose given a monoidal category $(\C, \I, \ot)$ together with a lax
monoidal comonad $(D, \e, \m)$ on $\C$. The category $\C$ has a
skew monoidal structure with $\I^D = \I$, $A \ot^D B = A \ot
D\, B$. The unitors and associator are the following:
\begin{tabbing}
\quad\= $\lam^D_A = \xymatrix{\I \ot D\,A \ar[r]^-{\I \ot \varepsilon_A} & \I \ot A \ar[r]^-{\lambda_A} & A}$ \\
\> $\rho^D_A = \xymatrix{A \ar[r]^-{\rho_A} & A \ot \I \ar[r]^-{A \ot \e} & A \ot D\,\I}$ \\
\> $\al^D_{A,B,C} = \xymatrix{(A \ot D\,B) \ot D\,C 
                   \ar[rr]^-{(A \ot D\,B) \ot \delta_C}
                   && (A \ot D\,B) \ot D\,(D\,C)}$ \\
\> \ \qquad $\xymatrix{\ar[rr]^-{\al_{A,DB,D(DC)}}
                 && A \ot (D\,B \ot D\,(D\,C)) \ar[rr]^-{A \ot \m_{B,DC}}
                 && A \ot D\,(B \ot D\,C)}$
\end{tabbing}
A similar skew monoidal category is obtained from an oplax
monoidal monad.
\end{nexample}

\begin{nexample}
Consider two categories $\JJ$ and $\C$ with a functor $J : \JJ \to \C$,
and assume that the left Kan extension $\Lan_J\, F : \C \to \C$ exists
for every $F : \JJ \to \C$.
Then the functor category $[\JJ, \C]$ has a skew monoidal structure given
by $\I = J$, $F \otimes G = \mathrm{Lan}_J F \cdot G$.  The unitors and associator are the canonical natural
transformations $\lam_F : \Lan_J~J \cdot F \to F$, $\rho_F : F \to
\Lan_J~F \cdot J$, $\al_{F,G,H} : \Lan_J~(\Lan_J~F \cdot G) \cdot H
\to \Lan_J~F \cdot \Lan_J~G \cdot H$. This category becomes properly
monoidal under certain conditions on $J$: $\rho$ is an isomorphism if
$J$ is fully-faithful, and $\lam$ is an isomorphism if $J$ is
dense. (This is the example from our relative monads work
\cite{ACU:monnnb}. Relative monads on $J$ are skew monoids in the
skew monoidal category $[\JJ, \C]$.)
\end{nexample}

As our aim is to analyze the relationship of skew monoidal categories
to a sequent calculus with the methods of structural proof theory, we
will find it convenient to have an explicit description of the free
skew monoidal category $\Fsk(\Var)$ over a set $\Var$ as a deductive
calculus (in the style of Lambek \cite{Lambek1968,Lambek1969}); 
we refer to it as the skew monoidal \emph{categorical calculus}.

Objects of $\Fsk(\Var)$ are called \emph{formulae}, and are defined inductively
as follows: a formula is either an element $X$ of $\Var$ (an atomic formula); 
$\I$; or $A \ot B$ where $A$, $B$ are formulae.
We write $\Tm$ for the set of formulae.

Morphisms of $\Fsk(\Var)$ are derivations of
(singleton-antecedent, singleton-succedent) sequents $A \tto C$ where $A$, $C$ 
are formulae\footnote{In the sequent calculus to be introduced in the next 
section, we have a different form of sequents.} that are constructed 
using the rules of Figure~\ref{fig:catcalc} and are
identified up to the least congruence $\doteq$ given by the equations in
Figure~\ref{fig:skewmon}.  In addition to the laws (a)--(e) above,
these equations state that $\id$ and $\dcomp$ satisfy the laws of a
category, that $\otimes$ is functorial, and that $\lam$, $\rho$ and
$\al$ are natural transformations. In the term notation for
derivations, we write $g \comp f$ for $\dcomp\, f\, g$ to agree with
the standard categorical notation.

\begin{figure}
\[
\small
\renewcommand{\arraystretch}{2}
\begin{array}{c}
\infer[\id]{A \tto A}{}
\qquad
\infer[\dcomp]{A \tto C}{A \tto B & B \tto C}
\qquad
\infer[\ot]{A \ot B \tto C \ot D}{A \tto C & B \tto D}
\\
\infer[\lam]{\I \ot A \tto A}{}
\qquad
\infer[\rho]{A \tto A \ot \I}{}
\qquad
\infer[\al]{(A \ot B) \ot C \tto A \ot (B \ot C)}{}
\end{array}
\]
\caption{Rules of the skew monoidal categorical calculus}
\label{fig:catcalc}
\end{figure}


\begin{proposition}[cf.~Proposition 2 of \cite{Lambek1968}]
  Let $\C$ be any skew monoidal category equipped with a function $G : \Var \to \C$ interpreting atoms as objects of $\C$.
  Then $G$ extends uniquely to a strict monoidal functor $\bar G : \Fsk(\Var) \to \C$ compatible with the inclusion $\Var \to \Fsk(\Var)$.
\end{proposition}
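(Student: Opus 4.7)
The plan follows the standard recipe for verifying a universal property of a free structure. First I would define $\bar G$ on formulae (i.e.\ objects of $\Fsk(\Var)$) by recursion: $\bar G(X) = G(X)$ for $X \in \Var$, $\bar G(\I) = \I$, and $\bar G(A \ot B) = \bar G(A) \ot \bar G(B)$, interpreting the right-hand sides in $\C$. This clause is forced by compatibility with $G$ together with the requirement that $\bar G$ strictly preserve the unit and tensor, so there is no choice to be made.

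The second step is to define $\bar G$ on derivations by recursion on the rules of Figure~\ref{fig:catcalc}: send $\id$ at $A$ to $\id_{\bar G(A)}$, $g \comp f$ to $\bar G(g) \comp \bar G(f)$, $f \ot g$ to $\bar G(f) \ot \bar G(g)$, and the generators $\lam$, $\rho$, $\al$ to the homonymous components in $\C$. A priori this only defines a function on raw derivations, so the central verification is that it descends along the quotient by $\doteq$, i.e.\ that the two sides of every equation in Figure~\ref{fig:skewmon} receive the same interpretation. This is checked case by case, and each case reduces directly to a corresponding property of $\C$: the laws (a)--(e), the category axioms for $\id$ and $\dcomp$, the bifunctoriality of $\ot$, and the naturality of $\lam$, $\rho$, $\al$. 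Once this $\doteq$-soundness is in hand, functoriality of $\bar G$ and its strict monoidality are immediate from the definitions.

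For uniqueness, let $H : \Fsk(\Var) \to \C$ be any strict monoidal functor compatible with $G$. Compatibility pins down $H$ on atomic formulae, and strict preservation of $\I$ and $\ot$ pins it down on all remaining formulae, so $H$ and $\bar G$ agree on objects. On morphisms, functoriality fixes the $\id$ and $\dcomp$ cases, strict monoidality fixes the $\ot$ case, and the condition that $H$ send the named structural transformations of $\Fsk(\Var)$ to those of $\C$ (part of being a strict monoidal functor between skew monoidal categories) fixes $\lam$, $\rho$, $\al$; a routine induction on derivations then yields $H = \bar G$. I do not foresee a real obstacle: the argument is bookkeeping, because the congruence $\doteq$ was designed to be precisely the equational theory of a skew monoidal category, which $\C$ satisfies by assumption. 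The only labor-intensive part is enumerating the defining equations of $\doteq$ in the soundness step.
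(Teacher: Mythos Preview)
Your proposal is correct and follows the standard argument. Note that the paper itself does not supply a proof of this proposition: it is stated with a reference to Lambek~\cite{Lambek1968} and left as a routine fact about the free construction, so there is no ``paper's own proof'' to compare against beyond the implicit one your outline reconstructs.
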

%
\begin{sidewaysfigure}
  \vspace{0.5\textwidth}
{\tiny
\[
\begin{array}{c}
\proofbox{
\infer[\dcomp]{A \tto B}{
  A \tto[f] B &
  \infer[\id]{B \tto B}{}
}
}
\,\doteq\,
A \tto[f] B
\hspace*{5mm}
A \tto[f] B
\,\doteq\,
\proofbox{
\infer[\dcomp]{A \tto B}{
  \infer[\id]{A \tto A}{} & 
  A \tto[f] B
}
}
\hspace*{5mm}
\proofbox{
\infer[\dcomp]{A \tto D}{
  A \tto[f] B &
  \infer[\dcomp]{B \tto D}{
    B \tto[g] C &
    C \tto[h] D
  }    
}
}
\,\doteq\,
\proofbox{
\infer[\dcomp]{A \tto D}{
  \infer[\dcomp]{A \tto C}{
    A \tto[f] B &
    B \tto[g] C
  } &   
  C \tto[h] D
}
}
\\
\proofbox{
\infer[\ot]{A \ot B \tto A \ot B}{
  \infer[\id]{A \tto A}{} &
  \infer[\id]{B \tto B}{}
}
}
~ \doteq ~
\proofbox{
\infer[\id]{A \ot B \tto A \ot B}{
}
}
\hspace*{1cm}
\proofbox{
\infer[\ot]{A \ot B \tto E \ot F}{
  \infer[\dcomp]{A \tto E}{
    A \tto[f] C &
    C \tto[h] E
  } &
  \infer[\dcomp]{B \tto F}{
    B \tto[g] D &
    D \tto[k] F
  }
}
}
~ \doteq ~
\proofbox{
\infer[\dcomp]{A \ot B \tto E \ot F}{
  \infer[\ot]{A \ot B \tto C \ot D}{
    A \tto[f] C &
    B \tto[g] D
  } &   
  \infer[\ot]{C \ot D \tto E \ot F}{
    C \tto[h] E &
    D \tto[k] F
  } 
}
}
\\
\proofbox{
\infer[\dcomp]{\I \ot A \tto B}{
  \infer[\ot]{\I \ot A \tto \I \ot B}{
    \infer[\id]{\I \tto \I}{} &
    A \tto[f] B
  } &
  \infer[\lam]{\I \ot B \tto B}{}
}
}
~ \doteq ~
\proofbox{
\infer[\dcomp]{\I \ot A \tto B}{
  \infer[\lam]{\I \ot A \tto A}{} &
  A \tto[f] B    
}
}
\hspace*{1cm}
\proofbox{
\infer[\dcomp]{A \tto B \ot \I}{
  \infer[\rho]{A \tto A \ot \I}{} &
  \infer[\ot]{A \ot \I \tto B \ot \I}{
    A \tto[f] B &
    \infer[\id]{\I \tto \I}{} 
  }
}
}
~ \doteq ~
\proofbox{
\infer[\dcomp]{A \tto B \ot \I}{
  A \tto[f] B &
  \infer[\rho]{B \tto B \ot \I}{}
}
}
\\
\proofbox{
\infer[\dcomp]{(A \ot B) \ot C \tto D \ot (E \ot F)}{
  \infer[\al]{(A \ot B) \ot C \tto A \ot (B \ot C)}{} &
  \infer[\ot]{A \ot (B \ot C) \tto D \ot (E \ot F)}{
    A \tto[f] D &
    \infer[\ot]{B \ot C \tto E \ot F}{
      B \tto[g] E &
      C \tto[h] F &
    }
  }
}
}
~ \doteq ~
\proofbox{
\infer[\dcomp]{(A \ot B) \ot C \tto D \ot (E \ot F)}{
  \infer[\ot]{(A \ot B) \ot C \tto (D \ot E) \ot F}{
    \infer[\ot]{B \ot C \tto E \ot F}{
      A \tto[f] D &
      B \tto[g] E &      
    } &
  C \tto[h] F
  } &
  \infer[\al]{(D \ot E) \ot F \tto D \ot (E \ot F)}{}
}
}
\\
\proofbox{
\infer[\dcomp]{\I \tto \I}{
  \infer[\rho]{\I \tto \I \ot \I}{} &
  \infer[\lam]{\I \ot \I \tto \I}{}
}  
}
~ \doteq ~
\proofbox{
\infer[\id]{\I \tto \I}{}
}
\hspace*{1cm}
\proofbox{
\infer[\dcomp]{A \ot B \tto A \ot B}{
  \infer[\ot]{A \ot B \tto (A \ot \I) \ot B}{
    \infer[\rho]{A \tto A \ot I}{}
    &
    \infer[\id]{B \tto B}{}
  }  
  &
  \infer[\dcomp]{(A \ot \I) \ot B \tto A \ot B}{
    \infer[\al]{(A \ot \I) \ot B \tto A \ot (\I \ot B)}{} 
    &
    \infer[\ot]{A \ot (\I \ot B) \tto A \ot B}{
      \infer[\id]{A \tto A}{} 
       &
      \infer[\lam]{\I \ot B\tto B}{}
    }
  }
}    
}
\doteq ~
\proofbox{
\infer[\id]{A \ot B \tto A \ot B}{}
}
\\[12pt]
\proofbox{
\infer[\dcomp]{(\I \ot A) \ot B \tto A \ot B}{
  \infer[\al]{(\I \ot A) \ot B \tto \I \ot (A \ot B)}{
  }
  &
  \infer[\lam]{\I \ot (A \ot B) \tto A \ot B}{
  }
}
}
~ \doteq ~
\proofbox{
\infer[\ot]{(\I \ot A) \ot B \tto A \ot B}{
  \infer[\lam]{\I \ot A \tto A}{
  }
  &
  \infer[\id]{B \tto B}{
  }
}
}
\\[12pt]
\proofbox{
\infer[\dcomp]{A \ot B \tto A \ot (B \ot I)}{
  \infer[\rho]{A \ot B \tto (A \ot B) \ot \I}{
  }
  &
  \infer[\al]{(A \ot B) \ot \I \tto  A \ot (B \ot I)}{
  }
}
}
~ \doteq ~
\proofbox{
\infer[\ot]{A \ot B \tto A \ot (B \ot I)}{
  \infer[\id]{A \tto A}{
  }
  &
  \infer[\rho]{B \tto B \ot \I}{
  }
}
}
\\[12pt]
\proofbox{
\infer[\dcomp]{((A \ot B) \ot C) \ot D \tto A \ot (B \ot (C \ot D))}{
  \infer[\al]{((A \ot B) \ot C) \ot D \tto (A \ot B) \ot (C \ot D)}{
  }
  &
  \infer[\al]{(A \ot B) \ot (C \ot D) \tto A \ot (B \ot (C \ot D))}{
  }
}
}
\hspace*{10cm} \\
\hspace*{3cm}
~ \doteq ~
\proofbox{
\infer[\dcomp]{((A \ot B) \ot C) \ot D \tto A \ot (B \ot (C \ot D))}{
  \infer[\ot]{((A \ot B) \ot C) \ot D \tto (A \ot (B \ot C)) \ot D}{
    \infer[\al]{(A \ot B) \ot C \tto A \ot (B \ot C)}{}
    &
    \infer[\id]{D \tto D}{}
  }  
  &
  \infer[\dcomp]{(A \ot (B \ot C)) \ot D \tto A \ot (B \ot (C \ot D))}{
    \infer[\al]{(A \ot (B \ot C)) \ot D \tto A \ot ((B \ot C) \ot D)}{} 
    &
    \infer[\ot]{A \ot ((B \ot C) \ot D) \tto A \ot (B \ot (C \ot D))}{
      \infer[\id]{A \tto A}{} 
       &
      \infer[\al]{(B \ot C) \ot D \tto B \ot (C \ot D)}{}
    }
  }
}    
}
\end{array}
\]}
\caption{Equations on the categorical calculus derivations}
\label{fig:skewmon}
\end{sidewaysfigure}

We make here some simple observations about maps in $\Fsk(\Var)$.
Let $\partial A$ denote the underlying list of atoms forming the frontier of a formula $A$, where $\partial X = X, \partial \I = ()$ and $\partial(A\ot B) = \partial A,\partial B$.
A necessary condition for the existence of a map $A \tto B$ in $\Fsk(\Var)$ is that $\partial A = \partial B$, but this is not sufficient. For example, there are no maps
$X \tto \I \ot X$, $X \ot \I \tto X$ or
$X \ot (Y \ot Z) \tto (X \ot Y) \ot Z$ in the free skew monoidal category, although these exist in any monoidal category as the inverses of the
unitors and the associator. 
Moreover, it is possible to have more than one map with the same 
domain and codomain: the prototypical examples
are
$\rho_\I \comp \lam_\I \not\doteq \id_{\I \ot \I}$, 
$\id_{(X\ot \I) \ot Y} \not\doteq (\rho_X \ot \lam_Y) \comp \al_{X,\I,Y}$
and
$\id_{X\ot (\I \ot Y)} \not\doteq \al_{X,\I,Y} \comp (\rho_X \ot \lam_Y)$.
(In contrast, all of these equations hold in any monoidal category.)

\section{A Skew Monoidal Sequent Calculus}
\label{sec:seqcalc}

We now introduce the \emph{sequent calculus} for skew monoidal categories
inspired by the sequent calculus for the Tamari order \cite{Zei}.

The inference rules of this sequent calculus are given in Figure~\ref{fig:seqcalc}, with the standard ``monoidal sequent calculus'' included for comparison in Figure~\ref{fig:lambek}.
(The latter corresponds exactly to the division-free fragment of Lambek's original syntactic calculus \cite{Lam:matss} extended with a unit $\I$, as considered for example in \cite{Lambek1969}.)
Sequents of this calculus are of the form $\stseq{S}{\Gamma}{C}$, where
the antecedent is a pair of a \emph{stoup} $S$ together with a \emph{context} $\Gamma$ and the succedent $C$ is a single formula.
A stoup can be either empty (written $S = \n$) or contain a single formula, while a context is an arbitrary-length list of formulae.%
\begin{figure}[t]
\[
\infer[\ax]{\mseq{A}{A}}{}
\qquad
\infer[\cut]{\mseq{\Delta_0, \Gamma, \Delta_1}{C}}{
  \mseq{\Gamma}{A}
  & 
  \mseq{\Delta_0, A, \Delta_1}{C}
}
\]

\[
\infer[\IL]{\mseq{\Gamma_0,\I,\Gamma_1}{C}}{
  \mseq{\Gamma_0,\Gamma_1}{C}
}
\quad
\infer[\IR]{\mseq{~}{\I}}{
}
\quad
\infer[\otL]{\mseq{\Gamma_0, A\ot B, \Gamma_1}{C}}{
  \mseq{\Gamma_0, A, B, \Gamma_1}{C}
}
\quad
\infer[\otR]{\mseq{\Gamma, \Delta}{A \otimes B}}{
  \mseq{\Gamma}{A}
  &
  \mseq{\Delta}{B}
}
\]
\caption{Rules of the (ordinary) monoidal sequent calculus (cf.~\cite{Lam:matss,Lambek1969})}
\label{fig:lambek}
\end{figure}
\begin{figure}[t]
\[
\infer[\ax]{\stseq{A}{~}{A}}{
}
\qquad
\infer[\uf]{\stseq{\n}{A, \Gamma}{C}}{
  \stseq{A}{\Gamma}{C}
}
\]

\[
\infer[\scut]{\stseq{S}{\Gamma, \Delta}{C}}{
  \stseq{S}{\Gamma}{A}
  & 
  \stseq{A}{\Delta}{C}
}
\qquad
\infer[\ccut]{\stseq{S}{\Delta_0, \Gamma, \Delta_1}{C}}{
  \stseq{\n}{\Gamma}{A}
  & 
  \stseq{S}{\Delta_0, A, \Delta_1}{C}
}
\]
\\[-8pt]
\[
\infer[\IL]{\stseq{\I}{\Gamma}{C}}{
  \stseq{\n}{\Gamma}{C}
}
\quad
\infer[\IR]{\stseq{\n}{~}{\I}}{
}
\quad
\infer[\otL]{\stseq{A \ot B}{\Gamma}{C}}{
  \stseq{A}{B, \Gamma}{C}
}
\quad
\infer[\otR]{\stseq{S}{\Gamma, \Delta}{A \otimes B}}{
  \stseq{S}{\Gamma}{A}
  &
  \stseq{\n}{\Delta}{B}
}
\]
\caption{Rules of the skew monoidal sequent calculus}
\label{fig:seqcalc}
\end{figure}
Before considering some examples, we highlight a few important properties of the calculus:
\begin{enumerate}
\item As in the Lambek calculus (but in contrast to Gentzen's original sequent calculi for classical and intuitionistic logic \cite{Gentzen35}), there are no structural rules of exchange, weakening or contraction.
\item As in the sequent calculus for the Tamari order \cite{Zei}, the left logical rules are restricted to apply only at the leftmost end of the antecedent, specifically to the formula within the stoup.
\item Finally (and this is a new aspect), the stoup
is allowed to be empty, permitting a distinction between antecedents
of the form $A \mid \Gamma$ (with $A$ inside the stoup) and
antecedents of the form $\n \mid A, \Gamma$ (with $A$ outside the
stoup).
The $\uf$ rule is used to move from one form of antecedent to another, while there are now two forms of cut rule, one for substitution into the stoup ($\scut$), and one for substitution into the context ($\ccut$).
\end{enumerate}

A consequence of all these restrictions will be the following:
\begin{nclaim}\label{claim:adequacy}
  $A \tto C$ is derivable in the categorical calculus (i.e, there exists a map $A \tto C$ in $\Fsk(\Var)$) iff $\stseq{A}{~}{C}$ is derivable in the sequent calculus.
\end{nclaim}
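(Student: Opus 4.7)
The plan is to prove the two implications by explicit translations between derivations, denoted $\cmplt$ for the $(\Rightarrow)$ direction (categorical to sequent) and $\sound$ for the $(\Leftarrow)$ direction (sequent to categorical).

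For $\cmplt$, I would proceed by induction on the given categorical derivation of $A \tto C$, producing a sequent calculus derivation of $\stseq{A}{~}{C}$. The translation of each primitive rule is straightforward: $\id$ corresponds to $\ax$; $\dcomp$ to $\scut$; the tensor functor $\otimes(f,g)\colon A\ot B \tto C\ot D$ to an application of $\otL$ yielding $\stseq{A}{B}{C \ot D}$ followed by $\otR$ with left premise the translation of $f$ and right premise the translation of $g$ preceded by $\uf$; the unitor $\lam$ to the chain $\otL,\IL,\uf,\ax$; the unitor $\rho$ to $\otR$ applied to $\ax$ and $\IR$; and the associator $\al$ to a double $\otL$ followed by nested $\otR$s with $\ax$s at the leaves (with $\uf$ inserted as needed to shift atoms out of the stoup).

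For $\sound$, I would interpret every stoup-context pair $S \mid \Gamma$ as a single formula $\asem{S\mid\Gamma}$ of the categorical calculus, and then show by induction on a sequent derivation that any derivation of $\stseq{S}{\Gamma}{C}$ gives rise to a categorical map $\asem{S\mid\Gamma} \tto C$. A convenient choice is the left-associated fold $\asem{A \mid B_1,\dots,B_n} = (\cdots((A \ot B_1) \ot B_2) \ot \cdots) \ot B_n$, together with $\asem{\n \mid \Gamma} = \asem{\I \mid \Gamma}$, so that the empty stoup behaves as an implicit $\I$. Under this interpretation most cases are routine: $\ax$, $\otL$, $\IR$, $\IL$ yield identities or a single $\lam$; the $\uf$ case applies $\lam$ at the innermost position (and identities elsewhere) using functoriality of $\ot$; and the cut rules are handled by composition with the reshaping described next. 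Restricting to derivations of $\stseq{A}{~}{C}$, where $\asem{A \mid ~} = A$, yields the desired map $A \tto C$.

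The main obstacle is the $\otR$ case of $\sound$: given $f : \asem{S \mid \Gamma} \tto A$ and $g : \asem{\n \mid \Delta} \tto B$, one must construct $\asem{S \mid \Gamma,\Delta} \tto A \ot B$, and the skew monoidal setting provides no inverse for the associator or unitors. The crux is to build, by induction on the length of $\Delta = D_1,\dots,D_n$, an auxiliary ``pull-out'' map $((M \ot D_1) \ot \cdots) \ot D_n \tto M \ot \asem{\n \mid \Delta}$ where $M = \asem{S \mid \Gamma}$; this is obtained by first applying $\rho$ to insert an $\I$ just after $M$, then using iterated forward associators $\al$ to extract $M$ out of the left-associated nest. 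Composing this reshape with $f \ot g$ delivers the required map, and an analogous reshaping (now performed in the middle of the context) handles $\ccut$.
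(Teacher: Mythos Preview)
Your proposal is correct and follows essentially the same route as the paper: the $\cmplt$ translation matches the paper's Theorem~\ref{thm:cmplt} rule-for-rule, and your $\sound$ translation via the left-associated interpretation $\asem{S\mid\Gamma}$ together with the ``pull-out'' map built from $\rho$ followed by iterated $\al$ is exactly the paper's Lemma~\ref{lem:varphi} (the map $\varphi_{S,\Gamma,\Delta}$, itself defined via the auxiliary $\psi$ of Lemma~\ref{lem:psi}), used in the $\otR$ and $\ccut$ cases of Theorem~\ref{thm:sound}. The only minor slip is that under your interpretation $\IL$ yields a strict identity rather than a $\lam$, but this does not affect the argument.
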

We will prove Claim~\ref{claim:adequacy} at the end of this section by providing
effective translations between the two calculi. In Section~\ref{sec:adequacy}, we will strengthen it by showing, on the one hand, that the translations respect equivalence of derivations (we will introduce an equivalence relation also on the derivations of the sequent calculus) and, on the other hand, that they are inverses.

For now, let us demonstrate the calculus in action on a few examples.

As a first example, here is a complete derivation corresponding to the skew associator $\al_{X,Y,Z} : (X \ot Y) \ot Z \tto X \ot (Y \ot Z)$:
\begin{equation}\label{deriv:al}
\small
\vcenter{
\infer[\otL]{\stseq{(X \ot Y) \ot Z}{~}{X \ot (Y \ot Z)}}{
  \infer[\otL]{\stseq{X \ot Y}{Z}{X \ot (Y \ot Z)}}{
    \infer[\otR]{\stseq{X}{Y, Z}{X \ot (Y \ot Z)}}{
      \infer[\ax]{\stseq{X}{~}{X}}{} &
      \infer[\uf]{\stseq{\n}{Y, Z}{Y \ot Z}}{
        \infer[\otR]{\stseq{Y}{Z}{Y \ot Z}}{
          \infer[\ax]{\stseq{Y}{~}{Y}}{}
          &
          \infer[\uf]{\stseq{\n}{Z}{Z}}{
            \infer[\ax]{\stseq{Z}{~}{Z}}{}
          }
        }
      }
    }
  }
}
}
\end{equation}
In the monoidal sequent calculus, one can also build a derivation corresponding to the inverse associator $\al^{-1}_{X,Y,Z}$: 
\[
\small
\infer[\otL]{\mseq{X \ot (Y \ot Z)}{(X \ot Y) \ot Z}}{
  \infer[\otL]{\mseq{X, Y \ot Z}{(X \ot Y) \ot Z}}{
    \infer[\otR]{\mseq{X, Y, Z}{(X \ot Y) \ot Z}}{
      \infer[\otR]{\mseq{X, Y}{X \ot Y}}{
        \infer[\ax]{\mseq{X}{X}}{
        }   
        &
        \infer[\ax]{\mseq{Y}{Y}}{ 
        }
      }
      &
      \infer[\ax]{\mseq{Z}{Z}}{
      }
    }
  }
}
\]
Notice, however,
that the second application of the $\otL$ rule from the bottom is
to the formula second from the left ($Y \ot Z$) in the antecedent.
Such an application of the $\otL$ rule is invalid for the skew monoidal calculus, and indeed the
corresponding sequent is not derivable: 
\[
\small
\infer[\otL]{\stseq{X \ot (Y \ot Z)}{~}{(X \ot Y) \ot Z}}{
  \infer[\otR]{\stseq{X}{Y \ot Z}{(X \ot Y) \ot Z}}{
   \stseq[??]{X}{Y \ot Z}{X \ot Y}
   &
   \stseq[??]{\n}{~}{Z}
 }
}
\ \quad
\infer[\otL]{\stseq{X \ot (Y \ot Z)}{~}{(X \ot Y) \ot Z}}{
  \infer[\otR]{\stseq{X}{Y \ot Z}{(X \ot Y) \ot Z}}{
   \stseq[??]{X}{~}{X \ot Y}
   &
   \stseq[??]{\n}{Y \ot Z}{Z}
 }
}
\]
Here we have shown two incomplete attempts at building a derivation starting from the root, and both fail because there is no good way to split the context ($Y\ot Z$) between the two premises of the $\otR$ rule.
Of course, just seeing that these two attempts fail does not allow us to infer that \emph{all} proof attempts will fail, but indeed this is an immediate consequence of the focusing completeness theorem for the sequent calculus (a strengthening of cut-elimination), which we will prove in Section~\ref{sec:focusing}.


As another similar example, the sequent corresponding to the right unitor $\rho_X : X \tto X \ot \I$ is derivable, but the converse sequent is not:
\begin{equation}\label{deriv:rho}
\small
\vcenter{
\infer[\otR]{\stseq{X}{~}{X \ot \I}}{
  \infer[\ax]{\stseq{X}{~}{X}}{} &
  \infer[\IR]{\stseq{\n}{~}{\I}}{}
}}
\qquad
\vcenter{
\infer[\otL]{\stseq{X \ot \I}{~}{X}}{
  \stseq[??]{X}{\I}{X}
}}
\end{equation}
By contrast, both directions are derivable in the monoidal sequent calculus, where the derivation of $\rho_X^{-1}$ is completed by applying $\IL$ to the second formula in an antecedent.

Likewise, the sequent corresponding to the left unitor $\lam_X : \I \ot X \tto X$ is derivable in the sequent calculus for skew monoidal categories, but its inverse is not:
\begin{equation}\label{deriv:lam}
\small
\vcenter{
\infer[\otL]{\stseq{\I \ot X}{~}{X}}{
  \infer[\IL]{\stseq{\I}{X}{X}}{
    \infer[\uf]{\stseq{\n}{X}{X}}{
      \infer[\ax]{\stseq{X}{~}{X}}{}
    }
  }
}}
\qquad
\vcenter{
\infer[\otR]{\stseq{X}{}{\I \ot X}}{
  \stseq[??]{X}{~}{\I}
  &
  \stseq[??]{\n}{~}{X}
}}
\end{equation}
Here the reason why the attempt at a derivation of $\lam^{-1}_X$ fails is that, although the context can be split freely in an $\otR$ inference, the stoup formula must go to the first premise.
By contrast, both directions are derivable in the monoidal sequent calculus.


As a more involved example, here is a derivation corresponding to the
(incidentally) unique map
$(X \ot (\I\ot Y))\ot Z \tto (X \ot \I) \ot (Y \ot Z)$ in the free skew
monoidal category:
\[\footnotesize
\infer[\otL]{\stseq{(X \ot (\I\ot Y))\ot Z}{~}{(X \ot \I) \ot (Y \ot Z)}}{
\infer[\otL]{\stseq{X\ot (\I\ot Y)}{Z}{(X \ot \I) \ot (Y \ot Z)}}{
\infer[\otR]{\stseq{X}{\I\ot Y,Z}{(X \ot \I) \ot (Y \ot Z)}}{
  \infer[\otR]{\stseq{X}{~}{X \ot \I}}{
   \infer[\ax]{\stseq{X}{~}{X}}{} &
   \infer[\IR]{\stseq{\n}{~}{\I}}{}} &
  \infer[\uf]{\stseq{\n}{\I\ot Y,Z}{Y\ot Z}}{
  \infer[\otL]{\stseq{\I\ot Y}{Z}{Y\ot Z}}{
  \infer[\IL]{\stseq{\I}{Y,Z}{Y\ot Z}}{
  \infer[\otR]{\stseq{\n}{Y,Z}{Y\ot Z}}{
    \infer[\uf]{\stseq{\n}{Y}{Y}}{
    \infer[\ax]{\stseq{Y}{~}{Y}}{}} & 
    \infer[\uf]{\stseq{\n}{Z}{Z}}{
    \infer[\ax]{\stseq{Z}{~}{Z}}{}}}}}}}}}
\]
The reader is invited to check that, in contrast, there is no derivation
of the converse sequent, although the sequent
$\mseq{(X \ot \I) \ot (Y \ot Z)}{(X \ot (\I\ot Y))\ot Z}$ is derivable
in the monoidal sequent calculus.
\medskip

We dedicate the remainder of this section to the proof of Claim~\ref{claim:adequacy}, which can be split into separate claims of soundness and completeness.
Completeness is the easier direction, so we show that first.

\begin{theorem}[Completeness]\label{thm:cmplt}
  For any derivation $f : A \tto C$ in the skew monoidal categorical calculus, 
  there is a derivation
  $\cmplt\,f: \stseq{A}{~}{C}$ in the skew monoidal sequent calculus.
\end{theorem}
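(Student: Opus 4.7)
The plan is to define $\cmplt$ by structural induction on derivations $f : A \tto C$ of the categorical calculus, producing in each case a sequent calculus derivation of $\stseq{A}{~}{C}$. Since the sequent calculus explicitly includes $\scut$ and $\ccut$ as primitive rules (these will only later be shown admissible in the cut-free subsystem), every logical constructor of the categorical calculus has a direct counterpart, so the induction is essentially a translation table and I do not expect any real obstacle.

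In more detail, the base cases corresponding to $\id$, $\lam$, $\rho$, $\al$ are handled by the small templates already illustrated in the examples: $\cmplt\,\id_A = \ax$; $\cmplt\,\rho_A$ is the obvious $\otR$ with premises $\ax$ and $\IR$ as in derivation~\eqref{deriv:rho}; $\cmplt\,\lam_A$ is an $\otL/\IL/\uf/\ax$ stack as in derivation~\eqref{deriv:lam}; and $\cmplt\,\al_{A,B,C}$ is the tower of two $\otL$ rules followed by an $\otR$ and a $\uf$, exactly as in derivation~\eqref{deriv:al} but with the atoms $X,Y,Z$ replaced by arbitrary formulae $A,B,C$. For the inductive case of composition, given translations of $f : A \tto B$ and $g : B \tto C$, take
\[
\cmplt(\dcomp\,f\,g) \;=\; \infer[\scut]{\stseq{A}{~}{C}}{\cmplt\,f : \stseq{A}{~}{B} & \cmplt\,g : \stseq{B}{~}{C}}
\]
which uses $\scut$ with $\Delta$ empty.

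The remaining inductive case is the tensor rule. Given $f : A \tto C$ and $g : B \tto D$ already translated to $\cmplt\,f : \stseq{A}{~}{C}$ and $\cmplt\,g : \stseq{B}{~}{D}$, I build
\[
\infer[\otL]{\stseq{A \ot B}{~}{C \ot D}}{
  \infer[\otR]{\stseq{A}{B}{C \ot D}}{
    \cmplt\,f : \stseq{A}{~}{C}
    &
    \infer[\uf]{\stseq{\n}{B}{D}}{\cmplt\,g : \stseq{B}{~}{D}}
  }
}
\]
splitting the $\otR$ context as $(\,) , B$ so that the stoup formula $A$ stays in the left premise, and pushing $B$ into the stoup of the right premise via $\uf$. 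This exhausts all the rules of the categorical calculus and completes the induction.

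The only subtlety worth flagging is that this definition of $\cmplt$ makes essential use of the built-in $\scut$ rule; later sections will need to show that the resulting derivations can be cut-eliminated to focused normal forms in order to obtain the stronger adequacy and coherence results, but for the statement of Theorem~\ref{thm:cmplt} as given, the inductive definition above is already enough.
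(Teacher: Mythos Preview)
Your proposal is correct and follows essentially the same approach as the paper: a case-by-case translation of the categorical calculus rules into derived rules of the sequent calculus, with $\id$ and $\dcomp$ mapped to $\ax$ and $\scut$, the $\ot$ rule handled by the $\otL/\otR/\uf$ combination you give, and $\lam$, $\rho$, $\al$ translated using the example derivations with atoms generalized to arbitrary formulae. Your remark that $\scut$ is used here as a primitive rule (with cut-elimination deferred) is also exactly in line with the paper's organization.
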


\begin{proof}
  This follows from the fact that all of the rules of the skew monoidal categorical calculus (Figure~\ref{fig:catcalc}) are \emph{derived rules} of the skew monoidal sequent calculus under the interpretation of $A \tto C$ as $\stseq{A}{~}{C}$.
  Identity $\id$ and composition $\dcomp$ are translated directly to $\ax$ and $\scut$, respectively, while the $\ot$ rule is derived as follows:

\[
\small
\infer[\otL]{\stseq{A \ot B}{~}{C \ot D}}{
  \infer[\otR]{\stseq{A}{B}{C \ot D}}{
    \stseq{A}{~}{C} &
    \infer[\uf]{\stseq{\n}{B}{D}}{
      \stseq{B}{~}{D}
    }
  }
}
\]
Finally, the structural maps $\al_{A,B,C}$, $\rho_A$, and $\lam_A$ can all be derived as we have already seen, as derivations \eqref{deriv:al}, \eqref{deriv:rho}, and \eqref{deriv:lam} above, with atoms replaced with general formulae.
\end{proof}

In order to prove soundness, we first explain how to interpret general sequents $\stseq{S}{\Gamma}{C}$.
This is performed in several steps, by first specifying how to interpret stoups and contexts.

Stoups are interpreted as formulae by reading the empty stoup as the unit:
\[
\ssem{\n} \eqdf \I
\hspace*{2cm}
\ssem{A} \eqdf A
\]
Contexts are interpreted as a right action on formulae, given by iterated application of the tensor product $\ot$:
\[
\csem{C}{~} \eqdf C
\quad \quad
\csem{C}{A, \Gamma} \eqdf \csem{(C \ot A)}{\Gamma}
\]
Combining these two interpretations, we define the
interpretation of antecedents as formulae by applying the action of the context to the stoup:
\[
\asem{S \mid \Gamma} \eqdf \csem{\ssem{S}}{\Gamma}
\]
Explicitly, for any context $\Gamma = A_1,\dots,A_n$ and any non-empty stoup $A$ we have
$\asem{A \mid \Gamma} = (\dots((A \ot A_1) \ot A_2)\dots) \ot A_n$,
while for the empty stoup we have 
$\asem{\n \mid \Gamma} = (\dots((\I \ot A_1) \ot A_2)\dots) \ot A_n$.

The proof of soundness now relies on several simple properties of this interpretation of antecedents.


%
%
%
%
%
%

\begin{lemma}\label{lem:theta}
  For any stoup $S$ and contexts $\Gamma$, $\Delta$, $\asem{S \mid \Gamma, \Delta} = \asem{\asem{S\mid \Gamma} \mid \Delta}$.
%
\end{lemma}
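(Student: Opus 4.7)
The plan is to reduce the statement about antecedents to an auxiliary identity about the context action alone, and then prove that identity by straightforward list induction on $\Gamma$.

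First I would unfold $\asem{-}$ on both sides. On the left, $\asem{S \mid \Gamma, \Delta} = \csem{\ssem{S}}{\Gamma, \Delta}$. On the right, note that $\asem{S\mid \Gamma}$ is a formula (in particular a non-empty stoup), so $\ssem{\asem{S\mid \Gamma}} = \asem{S\mid\Gamma} = \csem{\ssem{S}}{\Gamma}$, whence $\asem{\asem{S\mid \Gamma} \mid \Delta} = \csem{\csem{\ssem{S}}{\Gamma}}{\Delta}$. Writing $C \eqdf \ssem{S}$, the lemma thus reduces to the auxiliary identity
\[
\csem{C}{\Gamma, \Delta} \;=\; \csem{\csem{C}{\Gamma}}{\Delta}
\]
for all formulae $C$ and contexts $\Gamma, \Delta$.

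I would prove this auxiliary identity by induction on $\Gamma$. In the base case $\Gamma = (~)$, both sides collapse to $\csem{C}{\Delta}$ using $\csem{C}{~} = C$. In the step case $\Gamma = A, \Gamma'$, we unfold the list concatenation as $(A, \Gamma'), \Delta = A, (\Gamma', \Delta)$ and compute
\[
\csem{C}{A, \Gamma', \Delta} = \csem{C \ot A}{\Gamma', \Delta} \;\overset{\mathrm{IH}}{=}\; \csem{\csem{C \ot A}{\Gamma'}}{\Delta} = \csem{\csem{C}{A, \Gamma'}}{\Delta},
\]
using only the defining equation $\csem{C}{A, \Gamma} = \csem{C \ot A}{\Gamma}$ on the first and last steps. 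This yields exactly what we want.

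I expect no real obstacle here: the identity is essentially the statement that the iterated left-tensor action $\csem{-}{-}$ is a right action of the monoid of lists of formulae, and the induction is routine structural recursion on $\Gamma$ treated as a list. The only mildly subtle point is the unfolding in the right-hand side, where one must notice that $\asem{S\mid\Gamma}$, being a formula, is interpreted as itself when re-inserted as a stoup; after that observation, the calculation is mechanical and could even be presented by unfolding definitions without any induction if one introduces the auxiliary lemma $\csem{C}{\Gamma,\Delta} = \csem{\csem{C}{\Gamma}}{\Delta}$ as a separate sublemma proved first.
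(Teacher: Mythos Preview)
Your proposal is correct and is exactly the approach the paper takes: the paper's proof is the single line ``Immediate by induction on $\Gamma$'', and your argument spells out precisely that induction after unfolding the definitions of $\asem{-}$, $\ssem{-}$, and $\csem{-}{-}$.
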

\begin{proof}
  Immediate by induction on $\Gamma$.
\end{proof}
\begin{lemma}\label{lem:actfun}
  For any derivation $f : A \tto B$ and context $\Gamma$, there is a
  derivation
  $\asem{f \mid \Gamma} : \asem{A \mid \Gamma} \tto \asem{B \mid
    \Gamma}$.
\end{lemma}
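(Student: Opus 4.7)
The plan is to prove this by straightforward induction on the context $\Gamma$, constructing the derivation within the categorical calculus using the $\ot$ and $\id$ rules together with the inductive hypothesis.

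For the base case $\Gamma = ~$, we have $\asem{A \mid ~} = A$ and $\asem{B \mid ~} = B$ by definition of $\csem{-}{-}$, so we can simply take $\asem{f \mid ~} \eqdf f$.

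For the inductive step $\Gamma = C, \Gamma'$, we have $\asem{A \mid C, \Gamma'} = \asem{A \ot C \mid \Gamma'}$ and $\asem{B \mid C, \Gamma'} = \asem{B \ot C \mid \Gamma'}$, again directly from the definitions (noting that a non-empty stoup $A$ gives $\ssem{A} = A$, so $\csem{A}{C,\Gamma'} = \csem{A \ot C}{\Gamma'}$). First we build the derivation $f \ot \id_C : A \ot C \tto B \ot C$ in the categorical calculus by applying the $\ot$ rule to $f$ and $\id_C$. Then, applying the inductive hypothesis to this derivation and the shorter context $\Gamma'$, we obtain the required derivation $\asem{A \ot C \mid \Gamma'} \tto \asem{B \ot C \mid \Gamma'}$, which is definitionally equal to $\asem{A \mid \Gamma} \tto \asem{B \mid \Gamma}$. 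So we set $\asem{f \mid C, \Gamma'} \eqdf \asem{f \ot \id_C \mid \Gamma'}$.

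No step here looks like a real obstacle: the construction is a direct recursion on the length of $\Gamma$, and everything typechecks thanks to Lemma~\ref{lem:theta} (or its definitional version unfolding $\csem{-}{A,\Gamma}$). The only thing to be mindful of is that this lemma is really producing a derivation rather than just asserting existence, and that its definition will be used later to check coherence conditions — in particular, naturality-style equalities of the form $\asem{g \comp f \mid \Gamma} \doteq \asem{g \mid \Gamma} \comp \asem{f \mid \Gamma}$ and $\asem{\id_A \mid \Gamma} \doteq \id_{\asem{A \mid \Gamma}}$, which follow by induction on $\Gamma$ using functoriality of $\ot$. These will likely be needed in the soundness proof that follows, but the present lemma itself is just the construction.
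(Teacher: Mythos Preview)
Your proof is correct and matches the paper's own proof exactly: induction on $\Gamma$, with $\asem{f \mid ~} \eqdf f$ in the base case and $\asem{f \mid C,\Gamma'} \eqdf \asem{f \ot \id_C \mid \Gamma'}$ in the inductive step. Your additional remarks about the functoriality-style equations are reasonable anticipation of what is needed downstream, but the lemma itself is just this construction.
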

\begin{proof}
  We proceed by induction on $\Gamma$. If $\Gamma$ is empty, then we
  take $\asem{f \mid ~} \eqdf f$. If
  $\Gamma = C,\Gamma'$, then we take
  $\asem{f \mid C , \Gamma'} \eqdf \asem{f \ot \id_C \mid \Gamma'}$.
\end{proof}

\begin{lemma}\label{lem:psi}
  For any formulae $A$, $B$ and context $\Gamma$, there is a derivation
$\psi_{A,B,\Gamma} : \asem{A \ot B \mid \Gamma} \tto A \ot
  \asem{B \mid \Gamma}$.
\end{lemma}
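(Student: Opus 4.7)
The plan is to proceed by induction on the context $\Gamma$, exploiting Lemma~\ref{lem:actfun} together with the skew associator $\al_{A,B,C}$ of the categorical calculus.

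In the base case $\Gamma = ~$, both $\asem{A \ot B \mid ~}$ and $A \ot \asem{B \mid ~}$ unfold to $A \ot B$, so I would simply take $\psi_{A,B,~} \eqdf \id_{A \ot B}$.

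For the inductive step, suppose $\Gamma = C, \Gamma'$. By unfolding the definitions, $\asem{A \ot B \mid C, \Gamma'} = \asem{(A \ot B) \ot C \mid \Gamma'}$ and $A \ot \asem{B \mid C, \Gamma'} = A \ot \asem{B \ot C \mid \Gamma'}$. The strategy is to first re-associate at the head using $\al_{A,B,C} : (A \ot B) \ot C \tto A \ot (B \ot C)$, then push this morphism under the action of $\Gamma'$ via Lemma~\ref{lem:actfun}, obtaining a derivation $\asem{\al_{A,B,C} \mid \Gamma'} : \asem{(A \ot B) \ot C \mid \Gamma'} \tto \asem{A \ot (B \ot C) \mid \Gamma'}$. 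Finally, I invoke the induction hypothesis for the shorter context $\Gamma'$ (applied to $A$ and $B \ot C$) to obtain $\psi_{A, B \ot C, \Gamma'} : \asem{A \ot (B \ot C) \mid \Gamma'} \tto A \ot \asem{B \ot C \mid \Gamma'}$, and compose the two, setting
\[
\psi_{A, B, C, \Gamma'} \eqdf \psi_{A, B \ot C, \Gamma'} \comp \asem{\al_{A,B,C} \mid \Gamma'}.
\]

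There is no real obstacle here: the construction is purely a matter of iterating the associator through the context, with each step justified by Lemma~\ref{lem:actfun}. The only subtle point to verify is that the target type after applying the induction hypothesis genuinely matches $A \ot \asem{B \mid C, \Gamma'}$, which holds by the definitional equality $\asem{B \mid C, \Gamma'} = \asem{B \ot C \mid \Gamma'}$ from the definition of the context action.
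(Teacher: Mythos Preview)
Your proof is correct and essentially identical to the paper's own argument: both proceed by induction on $\Gamma$, taking $\id_{A\ot B}$ in the base case and $\psi_{A,B\ot C,\Gamma'} \comp \asem{\al_{A,B,C} \mid \Gamma'}$ in the inductive step. The only (cosmetic) slip is that in your displayed definition the subscript should read $\psi_{A,B,(C,\Gamma')}$ rather than $\psi_{A,B,C,\Gamma'}$, since $\psi$ takes three arguments with the third being the context.
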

\begin{proof}
We proceed by induction on $\Gamma$. If $\Gamma$ is empty, then we
take $\psi_{A,B,(~)} \eqdf \id_{A\ot B}$. If $\Gamma = C,\Gamma'$, then we take $\psi_{A,B,(C,\Gamma')}
\eqdf \psi_{A,B\ot C,\Gamma'} \comp \asem{\al \mid \Gamma'}$.
\end{proof}

\begin{lemma}\label{lem:varphi}
  For any stoup $S$ and contexts $\Gamma$, $\Delta$, there is
  a derivation
  $\varphi_{S,\Gamma,\Delta} : \asem{S \mid \Gamma , \Delta} \tto
  \asem{S \mid \Gamma} \ot \asem{\n \mid \Delta}$.
\end{lemma}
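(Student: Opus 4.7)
The plan is to construct $\varphi_{S,\Gamma,\Delta}$ directly as a composite built from the three preceding lemmas, without any fresh induction on $\Delta$. Set $A \eqdf \asem{S \mid \Gamma}$, viewed as a single formula. By Lemma~\ref{lem:theta} the source $\asem{S \mid \Gamma, \Delta}$ is definitionally equal to $\asem{A \mid \Delta}$, and by the clause $\ssem{\I} = \I = \ssem{\n}$ the right-hand factor of the target, $\asem{\n \mid \Delta}$, is definitionally equal to $\asem{\I \mid \Delta}$. So it suffices to produce a derivation $\asem{A \mid \Delta} \tto A \ot \asem{\I \mid \Delta}$.

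I would build this in two steps. First, I apply Lemma~\ref{lem:actfun} to the map $\rho_A : A \tto A \ot \I$ and the context $\Delta$, obtaining
\[
\asem{\rho_A \mid \Delta} : \asem{A \mid \Delta} \tto \asem{A \ot \I \mid \Delta}.
\]
Second, I instantiate Lemma~\ref{lem:psi} at $A$, $\I$, $\Delta$, obtaining
\[
\psi_{A,\I,\Delta} : \asem{A \ot \I \mid \Delta} \tto A \ot \asem{\I \mid \Delta}.
\]
Composing these yields the desired derivation, and I would define
\[
\varphi_{S,\Gamma,\Delta} \eqdf \psi_{\asem{S\mid\Gamma},\,\I,\,\Delta} \comp \asem{\rho_{\asem{S\mid\Gamma}} \mid \Delta}.
\]

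I do not expect any real obstacle: Lemmas~\ref{lem:theta},~\ref{lem:actfun} and~\ref{lem:psi} have already absorbed all of the inductive work. The only thing to be careful about is bookkeeping on the two definitional equations that match the types up, namely $\asem{S \mid \Gamma, \Delta} = \asem{\asem{S\mid\Gamma} \mid \Delta}$ (Lemma~\ref{lem:theta}) at the source and $\asem{\I \mid \Delta} = \asem{\n \mid \Delta}$ at the target. Intuitively, to split $\Delta$ off from underneath the stoup we first use $\rho$ to pad the stoup formula with a fresh unit, and then let $\psi$ slide that unit, together with the action of $\Delta$, past $A$ using nested associators; the asymmetry of $\rho$ (compared to the absent $\lam^{-1}$) is exactly what makes the construction available in the skew setting.
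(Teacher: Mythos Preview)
Your proposal is correct and is essentially the paper's proof: the paper defines $\varphi'_{A,\Gamma,\Delta}$ by a trivial induction on $\Gamma$ whose base case is exactly your composite $\psi_{A,\I,\Delta}\comp\asem{\rho\mid\Delta}$ and whose inductive step merely absorbs the head of $\Gamma$ into $A$, which is precisely what your appeal to Lemma~\ref{lem:theta} accomplishes in one stroke. The resulting derivation is the same map; you have just replaced the explicit induction on $\Gamma$ by the already-proved equality $\asem{S\mid\Gamma,\Delta}=\asem{\asem{S\mid\Gamma}\mid\Delta}$.
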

\begin{proof}
It is sufficient to construct $\varphi'_{A,\Gamma,\Delta} : \csem{A}{\Gamma , \Delta} \tto \csem{A}{\Gamma} \ot \asem{\n \mid
\Delta}$ for any formula $A$, and define
$\varphi_{S,\Gamma,\Delta} \eqdf \varphi'_{\ssem{S},\Gamma,\Delta}$. We proceed by induction on
$\Gamma$.  If $\Gamma$ is empty,
we take $\varphi'_{A,(~),\Delta} \eqdf \psi_{A,\I,\Delta} \comp \asem{\rho \mid \Delta}$ making use Lemma~\ref{lem:psi}.
If $\Gamma = C,\Gamma'$, then we take $\varphi'_{A,(C,\Gamma'),\Delta} \eqdf \varphi'_{A\ot C,\Gamma',\Delta}$.
\end{proof}

\begin{theorem}[Soundness]\label{thm:sound}
  For any derivation $f : \stseq{S}{\Gamma}{C}$ in the skew monoidal sequent calculus, there is a
  derivation
  $\sound\, f: \asem{S \mid \Gamma} \tto C$ in the skew monoidal categorical calculus. As a special case, for any derivation $\stseq{A}{~}{C}$ there is a derivation $A \tto C$.
\end{theorem}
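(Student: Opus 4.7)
The plan is to proceed by induction on the structure of the sequent calculus derivation $f$. The key observation that trivializes several cases is that by the very definition of the interpretation, $\asem{\I \mid \Gamma} = \asem{\n \mid \Gamma}$ (both equal $\csem{\I}{\Gamma}$) and $\asem{A \ot B \mid \Gamma} = \asem{A \mid B, \Gamma}$ (both equal $\csem{A \ot B}{\Gamma}$), so the cases for the left logical rules $\IL$ and $\otL$ reduce to taking the inductive hypothesis unchanged. Likewise $\ax$ yields $\id_A$ and $\IR$ yields $\id_\I$.

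For $\uf$, the inductive hypothesis on $\stseq{A}{\Gamma}{C}$ gives $g : \asem{A \mid \Gamma} \tto C$, i.e. $g : \csem{A}{\Gamma} \tto C$. Since $\asem{\n \mid A, \Gamma} = \csem{\I \ot A}{\Gamma}$, I would precompose $g$ with the map $\asem{\lam_A \mid \Gamma} : \csem{\I \ot A}{\Gamma} \tto \csem{A}{\Gamma}$ obtained from Lemma~\ref{lem:actfun}. For $\otR$, given inductive hypotheses $g$ and $h$, I would take $\varphi_{S,\Gamma,\Delta} : \asem{S \mid \Gamma, \Delta} \tto \asem{S \mid \Gamma} \ot \asem{\n \mid \Delta}$ from Lemma~\ref{lem:varphi} and postcompose with $g \ot h$.

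The two cut rules require threading the inductive derivations through contexts using all three lemmas. For $\scut$, with $g : \asem{S \mid \Gamma} \tto A$ and $h : \asem{A \mid \Delta} \tto C$, Lemma~\ref{lem:theta} rewrites the goal $\asem{S \mid \Gamma, \Delta} \tto C$ as $\asem{\asem{S \mid \Gamma} \mid \Delta} \tto C$, so I would lift $g$ through $\Delta$ via Lemma~\ref{lem:actfun} and postcompose with $h$. For $\ccut$, with $g : \asem{\n \mid \Gamma} \tto A$ and $h : \asem{S \mid \Delta_0, A, \Delta_1} \tto C$, the strategy in two steps is: first build $\asem{S \mid \Delta_0, \Gamma} \tto \asem{S \mid \Delta_0, A}$ by composing $\varphi_{S, \Delta_0, \Gamma}$ with $\id_{\asem{S \mid \Delta_0}} \ot g$ (using that $\asem{S \mid \Delta_0, A} = \asem{S \mid \Delta_0} \ot A$); then lift the resulting derivation through $\Delta_1$ via Lemma~\ref{lem:actfun}, reinterpret its source and target via Lemma~\ref{lem:theta} as $\asem{S \mid \Delta_0, \Gamma, \Delta_1}$ and $\asem{S \mid \Delta_0, A, \Delta_1}$ respectively, and finally postcompose with $h$.

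The step I expect to be the least mechanical is the $\ccut$ case, since it is the only one requiring all three auxiliary lemmas and a careful matching of interpretations on both sides of the substituted formula. That said, I do not anticipate any genuine obstacle: once the algebraic content of how stoups and contexts behave under concatenation has been packaged into Lemmas~\ref{lem:theta}, \ref{lem:actfun}, and \ref{lem:varphi}, the soundness proof is a routine bookkeeping exercise, and the final clause about derivations of $\stseq{A}{~}{C}$ follows by specializing to $S = A$ and $\Gamma$ empty, since then $\asem{A \mid ~} = A$.
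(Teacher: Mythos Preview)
Your proposal is correct and matches the paper's proof essentially step for step: the paper also proceeds by induction on $f$, handles $\ax$ and $\IR$ by $\id$, trivializes $\IL$/$\otL$ via the equality of interpretations, treats $\uf$ by precomposing with $\asem{\lam\mid\Gamma}$, treats $\otR$ by postcomposing $\varphi$ with the tensor of the inductive hypotheses, and handles $\scut$ and $\ccut$ exactly as you describe using Lemmata~\ref{lem:theta}, \ref{lem:actfun}, and~\ref{lem:varphi}. Your account of the $\ccut$ case is in fact slightly more explicit than the paper's about where Lemma~\ref{lem:theta} is invoked, but the construction is identical.
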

\begin{proof}
  We proceed by induction on $f$.
\begin{itemize}
\setlength\itemsep{1em}
\item Case $f = \infer[\ax]{\stseq{C}{~}{C}}{}$.  
  We take $\sound\,f \eqdf \id : C \tto C$. 


\item
  Case $f = \infer[\scut]{\stseq{S}{\Gamma',\Delta}{C}}{\stseq[f']{S}{\Gamma'}{A} & \stseq[g]{A}{\Delta}{C}}$.
  There exist $\sound\,f' : \asem{S \mid \Gamma'} \tto A$ and $\sound\,g : \asem{A \mid \Delta} \tto C$ by the induction hypothesis, and therefore $\asem{\sound\,f' \mid \Delta} : \asem{S \mid \Gamma',\Delta} = \asem{\asem{S \mid \Gamma'} \mid \Delta} \tto \asem{A \mid \Delta}$ by Lemmata~\ref{lem:theta} \& \ref{lem:actfun}.
  We take $\sound\,f \eqdf \sound\,g \comp \asem{\sound\,f' \mid \Delta}$.

\item Case $f = \infer[\ccut]{\stseq{S}{\Delta_0,\Gamma',\Delta_1}{C}}{\stseq[f']{\n}{\Gamma'}{A} & \stseq[g]{S}{\Delta_0,A,\Delta_1}{C}}$.
  Similarly to the previous case,
  we take $\sound\,f \eqdf \sound\,g \comp \asem{h \mid \Delta_1}$, where
  $h = (\id\ot\sound\,f') \comp \varphi_{S,\Delta_0,\Gamma'} : \asem{S \mid \Delta_0,\Gamma'} \tto \asem{S\mid \Delta_0}\ot A$ is defined using Lemma~\ref{lem:varphi}.

\item Case $f = \infer[\uf]{\stseq{\n}{A,\Gamma'}{C}}{\stseq[f']{A}{\Gamma'}{C}}$.
  We take $\sound\,f \eqdf \sound\,f' \comp \asem{\lam\mid \Gamma'}$.

\item Case $f = \infer[\IL]{\stseq{\I}{\Gamma}{C}}{\stseq[f']{\n}{\Gamma}{C}}$ or $f = \infer[\otL]{\stseq{A\ot B}{\Gamma}{C}}{\stseq[f']{A}{B,\Gamma}{C}}$.
  In either case we take $\sound\,f\eqdf \sound\,f'$, since the interpretation of the premise sequent is equal to the interpretation of the conclusion.




\item Case $f = \infer[\IR]{\stseq{\n}{~}{\I}}{}$.
  We take $\sound\,f\eqdf \id : \I \tto \I$.


\item Case $f = \infer[\otR]{\stseq{S}{\Gamma_1,\Gamma_2}{C_1 \ot C_2}}{\stseq[f_1]{S}{\Gamma_1}{C_1} & \stseq[f_2]{\n}{\Gamma_2}{C_2}}$.
  Take $\sound\,f\eqdf (\sound\,f_1\ot \sound\,f_2) \comp \varphi_{S,\Gamma_1,\Gamma_2}$.
%
\end{itemize}%
\end{proof}

\section{An Equational Theory on Cut-Free Derivations}
\label{sec:adequacy}

In this section, we establish a bijective correspondence between the skew monoidal categorical calculus (the free skew monoidal category $\Fsk(\Var)$) and the skew monoidal sequent calculus. We show that the translations $\sound$ and $\cmplt$ witnessing soundness and completeness are mutually inverse up to an appropriate equational theory on derivations. 


An equational theory is needed because many different sequent calculus derivations can be sent by the soundness translation to equivalent categorical calculus derivations (corresponding to the same map in $\Fsk(\Var)$).
For example, the different derivations
\begin{equation}\label{deriv:diff1}
\small 
\tag{$*$}
\vcenter{
\infer[\scut]{\stseq{\n}{X}{X}}{
  \infer[\otR]{\stseq{\n}{X}{\I \ot X}}{
    \infer[\IR]{\stseq{\n}{~}{\I}}{} 
    & 
    \infer[\uf]{\stseq{\n}{X}{X}}{
      \infer[\ax]{\stseq{X}{~}{X}}{
      }
    }
  } 
  & 
  \infer[\otL]{\stseq{\I \ot X}{~}{X}}{
    \infer[\IL]{\stseq{\I}{X}{X}}{
      \infer[\uf]{\stseq{\n}{X}{X}}{
        \infer[\ax]{\stseq{X}{~}{X}}{} 
        }
      }
    }
  }
}
\qquad
\vcenter{
\infer[\uf]{\stseq{\n}{X}{X}}{
  \infer[\ax]{\stseq{X}{~}{X}}{
  }
}
}
\end{equation}
are sent by the translation $\sound$ to the same derivation $\id \comp \lambda_X : \I \ot X \tto \I$.
Likewise, the two different derivations
\begin{equation}\label{deriv:diff2} 
\small
\tag{$\dagger$}
\vcenter{\infer[\IL]{\stseq{\I}{~}{\I}}{\infer[\IR]{\stseq{\n}{~}{\I}}{}}} \qquad
\vcenter{\infer[\ax]{\stseq{\I}{~}{\I}}{}}
\end{equation}
are both mapped to the same derivation $\id : \I \tto \I$.
Here we begin to address this overabundance of derivations in two steps:
\begin{enumerate}
\item We restrict to \emph{cut-free} derivations by relying on eliminability of  $\scut$ and $\ccut$ (i.e., their admissibility in the cut-free fragment), and identify a derivation with cuts with the corresponding cut-free derivation.
\item We impose a further equivalence relation on cut-free derivations.
\end{enumerate}
In Section~\ref{sec:focusing}, we will take a third step of identifying \emph{canonical representatives} for the induced equivalence classes of derivations, corresponding to a natural focused subsystem of the sequent calculus.
This will complete the picture and provide a powerful coherence theorem for skew monoidal categories.

\begin{lemma}[Eliminability of cuts]\label{lem:admits-cut}
Each of the rules $\scut$ and $\ccut$
is eliminable, i.e., admissible in the cut-free fragment in the sense that, given cut-free derivations of its premises, there is a cut-free derivation of its conclusion.
\end{lemma}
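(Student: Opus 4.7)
The plan is to prove the two eliminability statements simultaneously, by well-founded induction on the triple consisting of (i) the size of the cut formula $A$, (ii) the height of the right premise derivation, and (iii) the height of the left premise derivation, ordered lexicographically. For both $\scut$ and $\ccut$, I would case-split first on the last rule of the right premise, since that is where the cut formula occupies a distinguished position that structural rules can interact with.

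For $\scut$ with right premise ending in $\ax$ (so $\Delta$ is empty and $C = A$), the output is simply the left premise. When the right premise ends in $\otR$, which does not touch the stoup, the cut commutes upward, producing a recursive $\scut$ call on the left $\otR$-premise with the same cut formula but a strictly smaller right derivation. The cases where the right premise ends in $\IL$ or $\otL$ are principal: here I would further case-split on the last rule of the left premise. In most subcases the cut simply commutes through, yielding a recursive call on the left premise with the same cut formula and same right derivation but strictly smaller left derivation (this is where the third component of the measure is used); in the base subcases where the left premise is $\ax$, $\IR$, the output is immediate. The only substantial reduction happens when the left premise ends in $\otR$ introducing $A_1 \ot A_2$ and the right premise ends in $\otL$ on the same formula: there the single $\scut$ reduces to a $\ccut$ on the strictly smaller subformula $A_2$ followed by a $\scut$ on $A_1$, both justified by the outermost decrease in formula size.

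For $\ccut$, the cut formula $A$ sits inside the context of the right premise, so most cases are commuting: $\IL$ and $\otL$ act only on the stoup (unchanged by the cut), while $\otR$ splits the context and I recurse on whichever of its two premises contains $A$, in both cases with the same cut formula but strictly smaller right derivation. The interesting case is when the right premise ends in $\uf$ that promotes the cut formula $A$ itself from context to stoup position: here I convert the $\ccut$ into a $\scut$ on the same formula $A$ with the premise of the $\uf$ as new right derivation, which is strictly smaller in the secondary component of the measure. The other $\uf$ subcase (where the moved formula is some other $B$ to the left of $A$) just commutes. The remaining possibilities (right ending in $\ax$ or $\IR$) are vacuous since the right sequent's context contains $A$ and is therefore non-empty.

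The main obstacle is coordinating the mutual dependence of $\scut$ and $\ccut$: $\scut$ spawns $\ccut$ calls in the principal $\otR$-$\otL$ reduction, while $\ccut$ spawns $\scut$ calls in the $\uf$ reduction, so both admissibility proofs must run in parallel off a single measure. The technical crux is confirming that every recursive invocation is genuinely smaller in the lexicographic order --- in particular, that the $\ccut$-to-$\scut$ conversion strictly decreases the right-derivation height while keeping $|A|$ fixed, and that the $\scut$-to-$\ccut$-plus-$\scut$ reduction strictly decreases $|A|$. Once the measure is set up correctly, the remaining case work is laborious but routine, mirroring the structure of the Agda formalization referenced in the introduction.
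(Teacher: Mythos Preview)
Your proposal is correct, and broadly tracks the paper's approach: the paper elides the proof here, pointing instead to the admissibility argument for the focused calculus (Lemma~\ref{lemma:foc-admits}), which is likewise a mutual lexicographic induction on the cut formula and the premise derivations, with the same principal $\otR$/$\otL$ reduction spawning a $\ccut$ on $A_2$ inside a $\scut$ on $A_1$, and the same $\uf$ case converting $\ccut$ to $\scut$.

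The one organizational difference worth noting is the primary case-split for $\scut$. You analyze the \emph{right} premise $g$ first (treating $\otR$ as a commuting case and $\IL/\otL$ as the principal cases, then unfolding $f$), which requires ordering the measure as $(|A|,\text{height of }g,\text{height of }f)$. The paper's pattern instead analyzes the \emph{left} premise first: it commutes $\scut$ through $\uf/\IL/\otL$ at the root of $f$ until $f$ is $\ax$, $\IR$, or $\otR$, at which point the shape of $g$ is constrained by the cut formula (in the focused version, completely determined). Both orderings yield a valid well-founded recursion, but your phrasing that $\otR$ ``does not touch the stoup'' is slightly off---it passes the stoup $A$ to its first premise, which is exactly why the commutation $\scut(f,\otR(g_1,g_2)) = \otR(\scut(f,g_1),g_2)$ type-checks. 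This does not affect correctness.
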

We elide the proof of cut admissibility here, since it follows the same pattern as the proof of Lemma~\ref{lemma:foc-admits} in Section~\ref{sec:focusing} used to establish the stronger focusing completeness result.\footnote{The proof appears in the conference version \cite{UVZ:seqsmcMFPS18} where cut admissibility is stated as Prop.~3.3.} 
We moreover assert the following:

\begin{lemma}\label{lem:skeweqns}
  The cut-elimination algorithm from the proof of Lemma~\ref{lem:admits-cut} validates the equations in Figures~\ref{fig:skeweqns} and \ref{fig:skeweqns-ctd}, in the sense that they hold for $\scut$ and $\ccut$ as defined operations on cut-free derivations.
\end{lemma}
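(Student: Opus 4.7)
The plan is to prove each equation in Figures~\ref{fig:skeweqns} and \ref{fig:skeweqns-ctd} by structural induction on one or more of the derivations appearing on the left-hand side, unfolding the recursive definitions of $\scut$ and $\ccut$ produced by the cut-elimination algorithm of Lemma~\ref{lem:admits-cut}. Since $\scut\,f\,g$ and $\ccut\,f\,g$ are defined by simultaneous recursion on $f$ and $g$ with a case split determined by the last rule applied in each (distinguishing \emph{principal} cases, in which the cut formula is freshly introduced on both sides, from \emph{commutative} cases, where it is not), the natural strategy is to proceed by the same induction that defines the algorithm and reduce every equation to either (i) a trivial equality of derivations, (ii) a direct use of the induction hypothesis, or (iii) a combination of the two after a rule commutation.

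Concretely, I would group the equations by which cut operator they concern and handle them in the following order. First, the equations governing interaction of $\scut$ with $\ax$ (the left and right unit laws for $\scut$) are immediate from the base cases of the algorithm, as $\scut\,\ax\,g = g$ and $\scut\,f\,\ax = f$ are built into the definition. Second, the equations governing $\ccut$ with $\ax$ on the right, and with $\uf$-introductions, follow by a short induction on the context-splitting. Third, the associativity-style equations for $\scut$ with $\scut$, $\scut$ with $\ccut$, and $\ccut$ with $\ccut$ (including the equations that permute two cuts whose cut formulae land in disjoint subcontexts) require an outer induction on the ``middle'' derivation, then a case analysis on its last rule; in each commutative case the cut commutes past the rule and the induction hypothesis closes the goal, while in the principal cases one unfolds both sides and matches the recursive calls literally. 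Fourth, the equations expressing how $\scut$ and $\ccut$ commute with $\IL$, $\otL$, $\IR$, $\otR$ and $\uf$ on the non-cut-formula side are entirely commutative cases and are dispatched uniformly.

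The main obstacle is not conceptual but combinatorial: the cut operators are defined by a rather wide case split (particularly $\ccut$, which must inspect both the position of the cut formula inside the context of the second premise and the last rule of the first premise), so verifying each equation requires checking a large matrix of combinations of last rules, and a few cases require an auxiliary commutation lemma to line up nested inductive calls of $\scut$ and $\ccut$. The trickiest equations are those that mix $\scut$ and $\ccut$ (for instance, when substituting into the stoup of a derivation that itself arose from a contextual cut), because one has to rewrite the right-hand side using the inductive definition before the induction hypothesis applies; here one typically needs a small lemma relating $\scut\,(\ccut\,f\,g)\,h$ to $\ccut\,f\,(\scut\,g\,h)$ after the appropriate reassociation of contexts, which itself is proved by induction.

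Since the proof is purely a mechanical case analysis of this kind, and since every equation in Figures~\ref{fig:skeweqns} and \ref{fig:skeweqns-ctd} has been machine-checked in our Agda formalisation (as noted in the introduction), I would present the proof by verifying in detail a small number of representative cases---one principal case for $\scut$ with $\otR$/$\otL$, one principal case for $\ccut$ with $\IR$/$\IL$, and one genuinely non-trivial commutative case for a $\scut$--$\ccut$ interaction---and refer the reader to the formalisation for the remainder. This keeps the narrative manageable while making fully precise what each case looks like.
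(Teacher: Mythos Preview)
The paper does not give a written proof of this lemma at all: it is merely \emph{asserted} (``We moreover assert the following''), with the verification delegated entirely to the Agda formalisation mentioned in the introduction. Your proposal---a mechanical case analysis by structural induction on the derivations, unfolding the recursive clauses of $\scut$ and $\ccut$---is exactly the kind of argument such a formalisation carries out, and your suggestion to spell out a handful of representative cases while pointing to the machine-checked code for the rest matches the paper's own stance.

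Two small points of calibration. First, your ``fourth'' group of equations (commutation of the cuts with $\IL$, $\otL$, $\IR$, $\otR$) does not appear in Figures~\ref{fig:skeweqns} and~\ref{fig:skeweqns-ctd}; those figures contain only the unit laws, the associativity laws, the parallel-cut exchange laws, and the three $\uf$-commutation laws. The commutations you mention are needed as auxiliary facts in the inductive proofs, not as target equations. Second, the right unit law $\scut\,f\,\ax = f$ is not literally a base case of the algorithm (which, as in the focused version of Lemma~\ref{lemma:foc-admits}, recurses primarily on the first argument); it requires its own small induction on $f$. Neither point affects the soundness of your overall plan.
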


\begin{figure}  
\scriptsize
Unitality of identity wrt.\ cut 
\begin{align}
  \infer[\scut]{\tight{A}{\Delta}{C}}{
    \infer[\id]{\tight A~A}{} 
    & 
    \tight[f]{A}{\Delta}{C}
  }
  \quad &= \quad
  \tight[f]{A}{\Delta}{C} 
\label{eq:skewmulti1a}
  \\[1em]
  \infer[\ccut]{\flexi{S}{\Gamma,A,\Delta}{C}}{\infer[\uf]{\loose{A}{A}}{
    \infer[\id]{\tight A~A}{}} 
    & 
    \flexi[f]{S}{\Gamma,A,\Delta}{C}
  }
  \quad &= \quad 
  \flexi[f]{S}{\Gamma,A,\Delta}{C} 
\label{eq:skewmulti1b}
  \\[1em]
  \infer[\scut]{\flexi{S}{\Gamma}{A}}{\flexi[g]{S}{\Gamma}{A} & \infer[\id]{\tight A~A}{}}
   \quad &= \quad  
  \flexi[g]{S}{\Gamma}{A} 
\label{eq:skewmulti1c}
\end{align}
Associativity of cut
\begin{multline}
\infer[\scut]{\flexi{S}{\Gamma,\Delta,\Lambda}{C}}{\infer[\scut]{\flexi{S}{\Gamma,\Delta}{B}}{\flexi[f]{S}{\Gamma}{A} & \tight[g]{A}{\Delta}{B}} & \hspace*{-3pt}\tight[h]{B}{\Lambda}{C}}  
\quad = \\
  \infer[\scut]{\flexi{S}{\Gamma,\Delta,\Lambda}{C}}{\flexi[f]{S}{\Gamma}{A} & 
    \hspace*{-5pt}\infer[\scut]{\tight{A}{\Delta,\Lambda}{C}}{\tight[g]{A}{\Delta}{B} & \tight[h]{B}{\Lambda}{C}}} \label{eq:skewmulti2a}
\end{multline}
\begin{multline}
\infer[\scut]{\flexi{S}{\Delta_0,\Gamma,\Delta_1,\Lambda}{C}}{\infer[\ccut]{\flexi{S}{\Delta_0,\Gamma,\Delta_1}{B}}{\loose[f]{\Gamma}{A} & \flexi[g]{S}{\Delta_0,A,\Delta_1}{B}} & \tight[h]{B}{\Lambda}{C}}
   \\
\quad = \\
\infer[\ccut]{\flexi{S}{\Delta_0,\Gamma,\Delta_1,\Lambda}{C}}{\loose[f]{\Gamma}{A} & 
    \infer[\scut]{\flexi{S}{\Delta_0,A,\Delta_1,\Lambda}{C}}{\flexi[g]{S}{\Delta_0,A,\Delta_1}{B} & \tight[h]{B}{\Lambda}{C}}} \label{eq:skewmulti2b}
\end{multline}
\begin{multline}
\infer[\ccut]{\flexi{S}{\Lambda_0,\Delta_0,\Gamma,\Delta_1,\Lambda_1}{C}}{\infer[\ccut]{\loose{\Delta_0,\Gamma,\Delta_1}{B}}{\loose[f]{\Gamma}{A} & \loose[g]{\Delta_0,A,\Delta_1}{B}} & \flexi[h]{S}{\Lambda_0,B,\Lambda_1}{C}}
\quad =  \\[6pt]
\infer[\ccut]{\flexi{S}{\Lambda_0,\Delta_0,\Gamma,\Delta_1,\Lambda_1}{C}}{
  \loose[f]{\Gamma}{A} 
  & 
  \infer[\ccut]{\flexi{S}{\Lambda_0,\Delta_0,A,\Delta_1,\Lambda_1}{B}}{
    \loose[g]{\Delta_0,A,\Delta_1}{B} 
    & 
    \flexi[h]{S}{\Lambda_0,B,\Lambda_1}{C}
  }
}
\label{eq:skewmulti2c}
\end{multline}
\caption{Equations governing the cut rules}
\label{fig:skeweqns}
\end{figure}

\begin{figure}  
\scriptsize
Parallel cuts commute
\begin{multline}
\infer[\scut]{\flexi{S}{\Gamma_1,\Delta_1,\Gamma_2,\Delta_2}{C}}{
    \flexi[f_1]{S}{\Gamma_1}{A} &
    \infer[\ccut]{\tight{A}{\Delta_1,\Gamma_2,\Delta_2}{C}}{
      \loose[f_2]{\Gamma_2}{B} &
      \tight[g]{A}{\Delta_1,B,\Delta_2}{C}}}
  \quad = \\
  \infer[\ccut]{\flexi{S}{\Gamma_1,\Delta_1,\Gamma_2,\Delta_2}{C}}{
    \loose[f_2]{\Gamma_2}{B} &
    \infer[\scut]{\flexi{S}{\Gamma_1,\Delta_1,B,\Delta_2}{C}}{
      \flexi[f_1]{S}{\Gamma_1}{A} &
      \tight[g]{A}{\Delta_1,B,\Delta_2}{C}}} \label{eq:skewmulti3a}
\end{multline}
\begin{multline}
\infer[\ccut]{\flexi{S}{\Delta_0,\Gamma_1,\Delta_1,\Gamma_2,\Delta_2}{C}}{
    \loose[f_1]{\Gamma_1}{A} &
    \infer[\ccut]{\flexi{S}{\Delta_0,A,\Delta_1,\Gamma_2,\Delta_2}{C}}{
      \loose[f_2]{\Gamma_2}{B} &
      \flexi[g]{S}{\Delta_0,A,\Delta_1,B,\Delta_2}{C}}}
  \quad = \\
  \infer[\ccut]{\flexi{S}{\Delta_0,\Gamma_1,\Delta_1,\Gamma_2,\Delta_2}{C}}{
    \loose[f_2]{\Gamma_2}{B} &
    \hspace*{-5pt}\infer[\ccut]{\flexi{S}{\Delta_0,\Gamma_1,\Delta_1,B,\Delta_2}{C}}{
      \loose[f_1]{\Gamma_1}{A} &
      \hspace*{-2pt}\flexi[g]{S}{\Delta_0,A,\Delta_1,B,\Delta_2}{C}}} \label{eq:skewmulti3b}
\end{multline}
Cut commutes with shift
\begin{align}
\infer[\scut]{\loose{A,\Gamma,\Delta}{C}}{
  \infer[\uf]{\loose{A,\Gamma}{B}}{
    \tight[f]{A}{\Gamma}{B}} &
    \tight[g]{B}{\Delta}{C}}
  \quad &= \quad
\infer[\uf]{\loose{A,\Gamma,\Delta}{C}}{
  \infer[\scut]{\tight{A}{\Gamma,\Delta}{C}}{
    \tight[f]{A}{\Gamma}{B} &
    \tight[g]{B}{\Delta}{C}}} \label{eq:skewmulti4a}
\\[0.5em]
\infer[\ccut]{\loose{A,\Delta_0,\Gamma,\Delta_1}{C}}{
  \loose[f]{\Gamma}{B} &
  \infer[\uf]{\loose{A,\Delta_0,B,\Delta_1}{C}}{
    \tight[g]{A}{\Delta_0,B,\Delta_1}{C}}}
  \quad &= \quad
\infer[\uf]{\loose{A,\Delta_0,\Gamma,\Delta_1}{C}}{
  \infer[\ccut]{\tight{A}{\Delta_0,\Gamma,\Delta_1}{C}}{
    \loose[f]{\Gamma}{B} &
    \tight[g]{A}{\Delta_0,B,\Delta_1}{C}}} \label{eq:skewmulti4b}
\\[0.5em]
\infer[\ccut]{\loose{\Gamma,\Delta}{C}}{
  \loose[f]{\Gamma}{A} &
  \infer[\uf]{\loose{A,\Delta}{C}}{
    \tight[g]{A}{\Delta}{C}}}
  \quad &= \quad 
\infer[\scut]{\loose{\Gamma,\Delta}{C}}{
    \loose[f]{\Gamma}{A} &
    \tight[g]{A}{\Delta}{C}} \label{eq:skewmulti4c}
\end{align}
\caption{Equations governing the cut rules (continued)}
\label{fig:skeweqns-ctd}
\end{figure}

As shown by the example of \eqref{deriv:diff2}, however, restricting the domain of $\sound$ to cut-free derivations is not enough to obtain injectivity.
We therefore identify cut-free derivations up to an equivalence relation $\circeq$, defined as the least congruence induced by the equations in Figure~\ref{fig:cutfreeeqns}.

\begin{figure}[t]
\[
\footnotesize
\begin{array}{c@{\quad \circeq\quad}c}
\proofbox{
\infer[\ax]{\stseq{\I}{~}{\I}}{
}
}
&
\proofbox{
\infer[\IL]{\stseq{\I}{~}{\I}}{
  \infer[\IR]{\stseq{\n}{~}{\I}}{
  }
}
}
\\[6pt]
\proofbox{
\infer[\ax]{\stseq{A \ot B}{~}{A \ot B}}{
}
}
&
\proofbox{
\infer[\otL]{\stseq{A \ot B}{~}{A \ot B}}{
  \infer[\otR]{\stseq{A}{B}{A \ot B}}{
   \infer[\ax]{\stseq{A}{~}{A}}{
   }
   &
   \infer[\uf]{\stseq{\n}{B}{B}}{
     \infer[\ax]{\stseq{B}{~}{B}}{
     }
   }
 }
} 
}
\\[6pt]
\proofbox{
\infer[\otR]{\stseq{\n}{A', \Gamma, \Delta}{A \otimes B}}{
  \infer[\uf]{\stseq{\n}{A', \Gamma}{A}}{
    \stseq{A'}{\Gamma}{A}
  }
  &
  \stseq{\n}{\Delta}{B}
}
}
&
\proofbox{
\infer[\uf]{\stseq{\n}{A', \Gamma, \Delta}{A \otimes B}}{
  \infer[\otR]{\stseq{A'}{\Gamma, \Delta}{A \otimes B}}{
    \stseq{A'}{\Gamma}{A}
    &
    \stseq{\n}{\Delta}{B}
  }
}
}
\\[6pt]
\proofbox{
\infer[\otR]{\stseq{\I}{\Gamma, \Delta}{A \otimes B}}{
  \infer[\IL]{\stseq{\I}{\Gamma}{A}}{
    \stseq{\n}{\Gamma}{A}
  }
  &
  \stseq{\n}{\Delta}{B}
}
}
&
\proofbox{
\infer[\IL]{\stseq{\I}{\Gamma, \Delta}{A \otimes B}}{
  \infer[\otR]{\stseq{\n}{\Gamma, \Delta}{A \otimes B}}{
    \stseq{\n}{\Gamma}{A}
    &
    \stseq{\n}{\Delta}{B}
  }
}
}
\\[6pt]
\proofbox{
\infer[\otR]{\stseq{A' \ot B'}{\Gamma, \Delta}{A \otimes B}}{
  \infer[\otL]{\stseq{A' \ot B'}{\Gamma}{A}}{
    \stseq{A'}{B', \Gamma}{A}
  }
  &
  \stseq{\n}{\Delta}{B}
}
}
&
\proofbox{
\infer[\otL]{\stseq{A' \ot B'}{\Gamma, \Delta}{A \otimes B}}{
  \infer[\otR]{\stseq{A'}{B', \Gamma, \Delta}{A \otimes B}}{
    \stseq{A'}{B', \Gamma}{A}
    &
    \stseq{\n}{\Delta}{B}
  }
}
}
\end{array}
\]
\caption{Equivalences imposed on cut-free derivations}
\label{fig:cutfreeeqns}
\end{figure}
\begin{lemma}\label{lem:soundcompat}
For all $f , g : \stseq{S}{\Gamma}{C}$, $f \circeq g$ implies
$\sound \, f \doteq \sound\, g$. 
\end{lemma}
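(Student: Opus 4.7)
My plan is to proceed by induction on the derivation of $f \circeq g$. Since $\circeq$ is defined as the least congruence generated by the equations in Figure~\ref{fig:cutfreeeqns}, the reflexivity, symmetry, transitivity, and congruence cases are all immediate: $\doteq$ is itself a congruence on categorical calculus derivations, and the soundness translation $\sound$ is defined by structural recursion on the derivation of $\stseq{S}{\Gamma}{C}$, so equivalences of subderivations translate to equivalences of the outputs obtained after applying $\id$, $\dcomp$, $\ot$, or a structural map. Hence it suffices to check that $\sound$ maps each of the six generating equations in Figure~\ref{fig:cutfreeeqns} to an instance of $\doteq$.

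Two of those generating equations are easy. For the axiom equation on $\I$, both $\sound(\ax_\I)$ and $\sound(\IL(\IR))$ reduce literally to $\id_\I$. For the axiom equation on $A \ot B$, unfolding the definitions of $\sound$, $\psi$, and $\varphi$ on the right-hand side produces $(\id_A \ot \lam_B) \comp \al_{A,\I,B} \comp (\rho_A \ot \id_B)$, which equals $\id_{A \ot B}$ by skew monoidal axiom (b). The commutations of $\otR$ with $\IL$ and with $\otL$ are also essentially immediate, since $\sound(\IL f) = \sound f$ and $\sound(\otL f) = \sound f$ by definition, and the relevant instances of $\varphi$ satisfy $\varphi_{\I,\Gamma,\Delta} = \varphi_{\n,\Gamma,\Delta}$ and $\varphi_{A \ot B,\Gamma,\Delta} = \varphi_{A,(B,\Gamma),\Delta}$ on the nose; both sides of $\sound$ are syntactically the same derivation.

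The substantive case is the commutation of $\otR$ with $\uf$, where, after expanding definitions, the two sides of $\sound$ differ by the equation
\[
(\asem{\lam_{A'} \mid \Gamma} \ot \id) \comp \varphi_{\n,(A',\Gamma),\Delta} \doteq \varphi_{A',\Gamma,\Delta} \comp \asem{\lam_{A'} \mid \Gamma,\Delta}.
\]
I would prove this as a specialisation of a general naturality statement for the auxiliary operator $\varphi'$ from the proof of Lemma~\ref{lem:varphi}: for any derivation $h : A \tto A'$,
\[
(\asem{h \mid \Gamma} \ot \id) \comp \varphi'_{A,\Gamma,\Delta} \doteq \varphi'_{A',\Gamma,\Delta} \comp \asem{h \mid \Gamma,\Delta}.
\]
Taking $h = \lam_{A'}$ and using $\varphi_{\n,(A',\Gamma),\Delta} = \varphi'_{\I \ot A',\Gamma,\Delta}$ then recovers the required equation. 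The general naturality statement is proved by induction on $\Gamma$, using an analogous naturality lemma for $\psi$ (also proved by induction on $\Gamma$, using naturality of $\al$) and naturality of $\rho$ and $\al$ as recorded in Figure~\ref{fig:skewmon}.

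The main obstacle is the bookkeeping required to set up and verify these naturality lemmas for $\psi$ and $\varphi'$: each inductive step is a short calculation using naturality of $\al$ and, at the base case for $\varphi'$, naturality of $\rho$, but one must also keep track of how the context action $\asem{-\mid \Gamma}$ distributes over composition and tensor, which is itself a small auxiliary induction on $\Gamma$ (essentially functoriality of $\asem{-\mid \Gamma}$). Once these auxiliary lemmas are in place, all six generating equations of $\circeq$ reduce to direct calculations in the categorical calculus, and the result follows.
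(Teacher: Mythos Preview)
Your proposal is correct and follows the same approach as the paper, which records only the one-line justification ``By induction on the proof of $f \circeq g$.'' You have correctly unfolded the details that the paper leaves implicit, including the key naturality lemma for $\varphi'$ needed in the $\otR/\uf$ commutation case; note only that Figure~\ref{fig:cutfreeeqns} contains five generating equations, not six.
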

\begin{proof}
  By induction on the proof of $f \circeq g$.
\end{proof}

\begin{lemma}\label{lem:cmpltcompat}
For all $f , g : A \tto C$, $f \doteq g$ implies $\cmplt\, f \circeq \cmplt\, g$.
\end{lemma}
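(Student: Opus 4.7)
The plan is to proceed by induction on the derivation of $f \doteq g$. Since $\doteq$ is defined as the least congruence containing the equations of Figure~\ref{fig:skewmon}, we have two kinds of cases to handle: the \emph{congruence closure} cases (reflexivity, symmetry, transitivity, and compatibility with $\dcomp$ and $\otimes$) and the \emph{base equation} cases. The congruence closure cases reduce to showing that the operations used by $\cmplt$ respect $\circeq$: namely that $\scut$ is a congruence with respect to $\circeq$ on its two arguments, and that the derivation scheme used to translate the $\otimes$ rule (an $\otL$ above an $\otR$ with a $\uf$ on the right premise) is a congruence with respect to $\circeq$ on its two subderivations. The first of these requires a separate induction on cut-free derivations showing that the cut-elimination algorithm of Lemma~\ref{lem:admits-cut} is well-defined up to $\circeq$; the second is immediate since the $\otimes$ scheme is a plain context built from cut-free rule applications.

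For the base equations, the category laws translate straightforwardly: the unitalities and associativity of $\dcomp$ become the unitalities and associativity of $\scut$, which hold by Lemma~\ref{lem:skeweqns} (equations \eqref{eq:skewmulti1a}, \eqref{eq:skewmulti1c}, and \eqref{eq:skewmulti2a}). For functoriality of $\otimes$ and naturality of $\lam$, $\rho$, $\al$, I would unfold $\cmplt$ on both sides. The resulting cuts between $\otL$/$\otR$/$\uf$/$\IR$ steps and the $\ax$, $\lam$, $\rho$, $\al$ translations are eliminated using the equations of Figure~\ref{fig:skeweqns} (to rearrange and slide cuts through structural rules) and the equations of Figure~\ref{fig:cutfreeeqns} (to commute $\otR$ past $\uf$, $\IL$, $\otL$ when needed). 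The naturality of $\lam$ in particular exploits equation~\eqref{eq:skewmulti4a} (cut-$\uf$ commutation) together with the $\otR$/$\uf$ equation of Figure~\ref{fig:cutfreeeqns}.

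The skew monoidal axioms (a)--(e) require honest computation. For each axiom, I would fully unfold $\cmplt$ on both sides to obtain derivations with $\scut$ nodes, then perform the cut reductions dictated by Lemma~\ref{lem:admits-cut} to get genuine cut-free derivations. For (a), (b), (c), (d), the resulting cut-free derivations are identical on the nose or agree after a single application of an equation in Figure~\ref{fig:cutfreeeqns}. The main obstacle is axiom (e), the pentagon, since two compositions of three $\al$'s with identities and tensors must be reduced to cut-free form and compared: both reduce through sequences of $\otL$ and $\otR$ applications with $\ax$ leaves, and a careful bookkeeping along the iterated $\otL$ and $\otR$ spine shows that the two resulting cut-free derivations are identical. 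Once all base cases are checked, the induction closes and the lemma follows.
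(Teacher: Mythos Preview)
Your approach is essentially the same as the paper's: induction on the proof of $f \doteq g$, relying on compatibility of the admissible cut operations with $\circeq$ and on the equations of Figures~\ref{fig:skeweqns} and~\ref{fig:skeweqns-ctd}. The paper's proof is a one-line sketch that omits the case analysis you spell out; note only that since $\scut$ and $\ccut$ are defined by mutual recursion, the compatibility argument must cover both (the paper makes this explicit), and your finer claims about which base cases are ``identical on the nose'' versus requiring a $\circeq$-step are plausible but not verified in the paper itself.
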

\begin{proof}
  By induction on the proof of $f \doteq g$, after showing that the defined operations $\scut$ and $\ccut$ are compatible with the relation $\circeq$, and relying on the fact that they satisfy the equations in Figures \ref{fig:skeweqns} and \ref{fig:skeweqns-ctd}.
\end{proof}
It is possible to show that soundness is left inverse to completeness.
\begin{lemma}\label{lem:soundcmplt}
For all $f : A \tto C$, we have $\sound\,(\cmplt\,f) \doteq f$.
\end{lemma}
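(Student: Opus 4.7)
The plan is to proceed by induction on the derivation $f : A \tto C$, with one case per rule of the categorical calculus (Figure~\ref{fig:catcalc}). The cases for $\id$ and $\dcomp$ are definitional: $\cmplt\,\id_A = \ax$ and $\sound\,\ax = \id_A$, while $\cmplt(g \comp f') = \scut(\cmplt\,f', \cmplt\,g)$, so $\sound(\cmplt(g \comp f')) = \sound(\cmplt\,g) \comp \asem{\sound(\cmplt\,f') \mid ~}$; since $\asem{- \mid ~}$ acts as the identity on morphisms, this collapses to $\sound(\cmplt\,g) \comp \sound(\cmplt\,f')$, and the inductive hypothesis closes the case.

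The case $f = g \ot h$ is where the skew monoidal axioms first intervene. Here $\cmplt(g \ot h) = \otL(\otR(\cmplt\,g, \uf(\cmplt\,h)))$; unfolding $\sound$ using the definitions of $\varphi_{A,(~),(B)}$ (Lemma~\ref{lem:varphi}) and $\psi_{A,\I,(B)}$ (Lemma~\ref{lem:psi}), and invoking the inductive hypothesis, produces
\[
(g \ot (h \comp \lam_B)) \comp \al_{A,\I,B} \comp (\rho_A \ot \id_B).
\]
Bifunctoriality rewrites this as $(g \ot h) \comp \bigl((\id_A \ot \lam_B) \comp \al_{A,\I,B} \comp (\rho_A \ot \id_B)\bigr)$, and the parenthesised factor equals $\id_{A \ot B}$ by axiom (b), so the expression reduces to $g \ot h$.

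The cases for the structural morphisms are settled by running $\sound$ on the specific derivations (\ref{deriv:lam}), (\ref{deriv:rho}), (\ref{deriv:al}). For $\lam_X$ the calculation is immediate: the outer $\otL$ and $\IL$ rules do not affect the interpretation, the $\uf$ rule inserts a $\lam$, and the innermost $\ax$ gives $\id_X$, yielding $\id_X \comp \lam_X \doteq \lam_X$. For $\rho_X$ one computes $\varphi_{X,(~),(~)} \doteq \id_{X \ot \I} \comp \rho_X \doteq \rho_X$, and the interpretation of $\otR(\ax,\IR)$ simplifies to $(\id_X \ot \id_\I) \comp \rho_X \doteq \rho_X$.

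The main obstacle is the $\al$ case. One must unfold $\varphi_{X,(~),(Y,Z)}$ through iterated applications of $\psi$ and $\asem{- \mid -}$, obtaining a composite built out of $\al_{X,\I \ot Y,Z}$, $\al_{X,\I,Y} \ot \id_Z$, $(\rho_X \ot \id_Y) \ot \id_Z$, and $\id_X \ot (\lam_Y \ot \id_Z)$. Two ingredients then simplify this long string: first, the naturality of $\al$ in its middle argument migrates $\id_X \ot (\lam_Y \ot \id_Z)$ past $\al_{X,\I \ot Y, Z}$ to produce $\al_{X,Y,Z} \comp ((\id_X \ot \lam_Y) \ot \id_Z)$; second, an instance of axiom~(b), together with functoriality of $\ot$, collapses the remaining unit-involving factor $(\id_X \ot \lam_Y) \comp \al_{X,\I,Y} \comp (\rho_X \ot \id_Y)$ to $\id_{X \ot Y}$, after which the $\ot$ functor and category laws leave only $\al_{X,Y,Z}$. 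The calculations are mechanical but bulky; the conceptual point is that every structural morphism introduced by $\uf$, $\otR$, and the two associator occurrences produced by $\varphi$ and $\psi$ is systematically neutralised by axiom~(b) and the naturality of $\al$.
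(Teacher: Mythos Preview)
Your proof is correct and follows the same approach as the paper: induction on the structure of $f$, with the $\ot$ case reducing to axiom~(b) exactly as the paper shows. The paper in fact displays only the $\ot$ case, whereas you also work out $\lam$, $\rho$, and $\al$; your $\al$ computation is right, though note that you have silently applied axiom~(b) once already (to collapse the inner $\otR/\uf$ contribution to $\lam_B \ot \id_C$) before listing the four surviving factors.
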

\begin{proof}
  By induction on $f$.
  We only show the proof of the
  case $f = f_1 \ot f_2$, where $f_1 : A \tto C$ and $f_2 : B \tto D$.
\small
\begin{align*}
\sound\,\left( \cmplt\, (f_1 \ot f_2)\right)
& \doteq
\sound \, \left(
\proofbox{
\infer[\otL]{\stseq{A \ot B}{~}{C \ot D}}{
\infer[\otR]{\stseq{A}{B}{C \ot D}}{
  \stseq[\cmplt\, f_1]{A}{~}{C}
  & 
  \infer[\uf]{\stseq{\n}{B}{D}}{
    \stseq[\cmplt\, f_2]{B}{~}{D}
  }
}}
}
\right) \tag{defn. $\cmplt$} 
\\
& \doteq 
(\sound\,(\cmplt\,f_1)\ot(\sound\,(\cmplt\,f_2) \circ \lam)\circ (\al \comp (\rho \ot \id)) \tag{defn. $\sound$}\\
& \doteq
(f_1\ot(f_2 \circ \lam))\circ (\al \comp (\rho \ot \id)) \tag{induction hypothesis} \\
& \doteq f_1 \ot f_2 \tag{skew monoidal equations}
\\ &\tag*{$\square$}  
\end{align*}
\end{proof}
On the other hand, the postcomposition of completeness with soundness sends a derivation $f : \stseq{S}{\Gamma}{C}$ to a derivation of $\cmplt\,(\sound\,f) : \stseq{\asem{S \mid \Gamma}}{~}{C}$, and so for general $S$ and $\Gamma$, the two derivations are not directly comparable.
We can repair this discrepancy by realizing that there is a one-to-one correspondence between derivations $\stseq{S}{\Gamma}{C}$ and derivations $\stseq{\asem{S \mid \Gamma}}{~}{C}$, when considered up to $\circeq$.
\begin{lemma}[Invertibility of $\IL$ and $\otL$]
  \label{lem:invILotL}
  The following rules are admissible.
\[
\begin{array}{c@{\quad\quad}c}
\infer[\ILinv]{\stseq{\n}{\Gamma}{C}}{
  \stseq{\I}{\Gamma}{C}
}
&
\infer[\otLinv]{\stseq{A}{B,\Gamma}{C}}{
  \stseq{A \ot B}{\Gamma}{C}
}
\end{array}
\]
Moreover, $\ILinv$ and $\otLinv$ are compatible with $\circeq$, and inverse to the rules $\IL$ and $\otL$ in the sense that
\[ \ILinv\, (\IL\, f) = f \qquad \IL\, (\ILinv\, f) \circeq  \otLinv \]
\[ \otLinv\, (\otL\, f) = f \qquad \otL\, (\otLinv\, f) \circeq f \]
for all derivations $f$ of the appropriate type.
\end{lemma}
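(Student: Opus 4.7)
The plan is to define $\ILinv$ and $\otLinv$ by induction on the given cut-free derivation $f$. Because the conclusion of each rule to be inverted has a non-empty stoup (either $\I$ or $A \ot B$), the last inference in $f$ can only be one of: the axiom $\ax$, the matching left rule ($\IL$ for $\ILinv$, $\otL$ for $\otLinv$), or $\otR$ with the same non-empty stoup in its first premise. The rules $\uf$ and $\IR$ cannot appear at the root, because they produce a conclusion with empty stoup. For $\ILinv$ we send the axiom $\ax : \stseq{\I}{~}{\I}$ to $\IR$, strip off $\IL$ when it is the last rule, and commute $\ILinv$ past $\otR$ by recursing into the first premise. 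For $\otLinv$ we proceed analogously, sending $\ax : \stseq{A \ot B}{~}{A \ot B}$ to the $\eta$-expansion $\otR(\ax, \uf\,\ax)$, stripping $\otL$, and commuting past $\otR$ into the first premise.

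Compatibility with $\circeq$ then follows by induction on the derivation of $f \circeq g$, where most cases are immediate from the induction hypothesis. The only interesting base equations from Figure~\ref{fig:cutfreeeqns} are the $\eta$-equations (for $\ILinv$, the first equation; for $\otLinv$, the second) and the $\otR$-permutation equations whose first premise has a matching stoup shape. In each of these, unfolding $\ILinv$ or $\otLinv$ on both sides reduces the two terms to literally identical cut-free derivations. The strict equalities $\ILinv(\IL\, f) = f$ and $\otLinv(\otL\, f) = f$ hold by construction of the inverse operations.

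For the converse composites $\IL(\ILinv\, f) \circeq f$ and $\otL(\otLinv\, f) \circeq f$, we induct on $f$. The $\IL$ (resp.\ $\otL$) case is immediate, and the $\otR$ case uses the corresponding $\otR$-permutation equation of Figure~\ref{fig:cutfreeeqns} to pull $\IL$ (resp.\ $\otL$) inside the first premise, where we then invoke the induction hypothesis. The chief obstacle, and indeed the very reason the lemma is stated up to $\circeq$ rather than strict equality, is the $\ax$ base case: for $\ILinv$ we compute $\IL(\ILinv\, \ax) = \IL\,\IR$, which is related to $\ax$ only by the first $\eta$-equation of Figure~\ref{fig:cutfreeeqns}; and for $\otLinv$ we compute $\otL(\otLinv\, \ax) = \otL(\otR(\ax, \uf\,\ax))$, related to $\ax$ only by the second $\eta$-equation. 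These two $\eta$-equations are therefore precisely what enables the invertibility statement to go through, and they match exactly the two base equations imposed on cut-free derivations.
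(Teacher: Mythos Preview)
Your proposal is correct and follows exactly the approach the paper intends: the paper's own proof is the single line ``By induction on derivations,'' and your case analysis on the last rule, your treatment of the $\ax$ base cases via the $\eta$-equations, and your use of the $\otR$-permutation equations for the inductive step are precisely how that induction unfolds. In short, you have spelled out in full what the paper leaves implicit.
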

\begin{proof}
  By induction on derivations.
\end{proof}


\begin{lemma}\label{lem:leftinvstar}
For all $S$ and $\Gamma$, the following rule is admissible:
\[
\infer[\otLinvstar]{\stseq{S}{\Gamma , \Delta}{C}}{\stseq{\asem{S \mid \Gamma}}{\Delta}{C}}
\]
Moreover, $\otLinvstar$ is compatible with $\circeq$, and inverse to the rule
\[
\infer[\otLstar]{\stseq{\asem{S \mid \Gamma}}{\Delta}{C}}{\stseq{S}{\Gamma , \Delta}{C}}
\]
obtained by iterated application of $\otL$ and $\IL$, in the sense that

\[ 
\otLinvstar\, (\otLstar\, f) = f \qquad \otLstar\, (\otLinvstar\, f) \circeq f \]
\end{lemma}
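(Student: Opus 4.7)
My plan is to define $\otLinvstar$ by induction on $\Gamma$ using the admissible rules $\ILinv$ and $\otLinv$ from Lemma~\ref{lem:invILotL}, mirroring the inductive construction of $\otLstar$ from $\IL$ and $\otL$. In the base case $\Gamma = (\,)$, we have $\asem{S \mid (\,)} = \ssem{S}$, so the rule is either the identity (when $S = A$) or a single application of $\ILinv$ (when $S = \n$). In the inductive case $\Gamma = B,\Gamma'$, we use the equality $\asem{S \mid B,\Gamma'} = \asem{\,\ssem{S}\ot B \mid \Gamma'\,}$ (a direct computation from the definitions in Section~\ref{sec:seqcalc}): given $f : \stseq{\asem{S\mid B,\Gamma'}}{\Delta}{C}$, the inductive hypothesis applied to the non-empty stoup $\ssem{S}\ot B$ and context $\Gamma'$ yields $g : \stseq{\ssem{S}\ot B}{\Gamma',\Delta}{C}$, then $\otLinv\,g$ yields $\stseq{\ssem{S}}{B,\Gamma',\Delta}{C}$, which is already the right sequent in the case $S = A$, or requires one final application of $\ILinv$ in the case $S = \n$.

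Compatibility of $\otLinvstar$ with $\circeq$ follows by a straightforward induction on $\Gamma$, using the compatibility of $\ILinv$ and $\otLinv$ from Lemma~\ref{lem:invILotL} at each step.

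The two inverse equations are proved by induction on $\Gamma$, reducing at each step to the corresponding equations for the non-starred rules. For the strict equation $\otLinvstar\,(\otLstar\,f) = f$: in the base case this is immediate (using $\ILinv\,(\IL\,f) = f$ when $S = \n$); in the step case, $\otLstar$ applied to $f : \stseq{S}{B,\Gamma',\Delta}{C}$ produces (modulo the $\ILinv/\IL$ book-keeping at the front for $S = \n$) a derivation of the form $\otLstar\,(\otL\,h)$ for the appropriate $h$, and then $\otLinvstar$ unwinds these applications, with $\otLinv\,(\otL\,h) = h$ giving the strict equality. For the equation $\otLstar\,(\otLinvstar\,f) \circeq f$: this is only up to $\circeq$ because, given a derivation whose root is not necessarily an $\otL$ or $\IL$, the weaker identities $\otL\,(\otLinv\,f) \circeq f$ and $\IL\,(\ILinv\,f) \circeq f$ are needed, and congruence of $\circeq$ lifts these through the induction.

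The main obstacle will be careful bookkeeping of the two different "kinds" of stoup ($S = A$ vs.\ $S = \n$) in the induction, since the $\n$ case requires one extra $\ILinv$ (resp.\ $\IL$) wrapper, and one must verify that this wrapper appears in exactly the right place so that $\otLinvstar$ and $\otLstar$ compose correctly. Once the definitions are set up symmetrically, however, all the verifications reduce mechanically to the corresponding lemmas about $\IL$ and $\otL$ and their inverses, so the argument is routine.
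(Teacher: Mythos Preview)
Your proposal is correct and matches the paper's own proof, which is stated tersely as ``By induction on $\Gamma$, using Lemma~\ref{lem:invILotL}.'' Your explicit unfolding of the induction---including the case split on $S$ to handle the single $\ILinv$/$\IL$ wrapper and the observation that all recursive calls land in the non-empty-stoup case---is exactly the intended argument.
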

\begin{proof}
  By induction on $\Gamma$, using Lemma~\ref{lem:invILotL}.
\end{proof}
Moreover, we can observe that the operation $\otLinvstar$ has no effect upon the soundness translation.
\begin{lemma}\label{lem:soundinv}
For all $f : \stseq{\asem{S\mid\Gamma}}{\Delta}{C}$, we have $\sound\,(\otLinvstar\,f) \doteq \sound\,f$.
\end{lemma}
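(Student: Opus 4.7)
The plan is to prove the lemma by induction on $\Gamma$, paralleling the recursive definition of $\otLinvstar$ from Lemma~\ref{lem:leftinvstar}. The crux of the argument will be encapsulated in two auxiliary invariance properties: first, that $\sound(\ILinv\,f) \doteq \sound\,f$ for every $f : \stseq{\I}{\Gamma}{C}$; and second, that $\sound(\otLinv\,f) \doteq \sound\,f$ for every $f : \stseq{A\ot B}{\Gamma}{C}$. These properties mirror the soundness proof (Theorem~\ref{thm:sound}), where $\IL$ and $\otL$ are themselves interpreted as the identity thanks to the coincidences $\asem{\I \mid \Gamma} = \asem{\n \mid \Gamma}$ and $\asem{A\ot B \mid \Gamma} = \asem{A \mid B, \Gamma}$.

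The two invariance properties will be proved by induction on $f$, following the definition of $\ILinv$ and $\otLinv$ from Lemma~\ref{lem:invILotL}. Most cases are immediate: for instance, in the $\IL$-rule case we have $\ILinv(\IL\,g) = g$ and the claim is trivial, while in the $\otR$-rule case we have $\ILinv\,(\otR(g,h)) = \otR(\ILinv\,g, h)$, and the claim follows from the induction hypothesis together with the observation that the $\varphi$ components agree on the nose when the stoup is modified (both $\varphi_{\I, \Gamma_1, \Gamma_2}$ and $\varphi_{\n, \Gamma_1, \Gamma_2}$ unfold to $\varphi'_{\I, \Gamma_1, \Gamma_2}$ by the definition in Lemma~\ref{lem:varphi}, and similarly $\varphi_{A\ot B, \Gamma_1, \Gamma_2} = \varphi_{A, (B,\Gamma_1), \Gamma_2}$). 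The only case requiring a substantive equational argument is the $\ax$-case of $\otLinv$: here $\otLinv\,\ax_{A\ot B}$ unfolds to a derivation whose soundness translation reduces to $(\id \ot \lam) \comp \al \comp (\rho \ot \id)$, which equals $\id_{A\ot B}$ by skew monoidal axiom (b).

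With the two auxiliary invariance properties in hand, the main statement will follow by induction on $\Gamma$. In the base case $\Gamma = ()$, either $S$ is a non-empty stoup --- in which case $\otLinvstar\,f = f$ and the claim is immediate --- or $S = \n$, in which case $\otLinvstar\,f = \ILinv\,f$ and the claim reduces to the first invariance property. For the inductive step $\Gamma = A, \Gamma'$, one uses the identification $\asem{S \mid A, \Gamma'} = \asem{\ssem{S}\ot A \mid \Gamma'}$ to apply the induction hypothesis for the smaller context $\Gamma'$ with non-empty stoup $\ssem{S}\ot A$, yielding a derivation of $\stseq{\ssem{S}\ot A}{\Gamma', \Delta}{C}$ whose $\sound$-translation is $\doteq \sound\,f$. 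Successive applications of $\otLinv$ and (if $S = \n$) $\ILinv$, together with the auxiliary invariance properties, then complete the argument.

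The main obstacle will be the careful verification of the invariance properties in each rule case for $\ILinv$ and $\otLinv$, especially the $\otR$-cases, where one must track that the $\varphi$-components of the soundness translation agree on the nose when the stoup is modified. Fortunately, this agreement is built into the definition of $\varphi$ via $\varphi'$, so no essential equational reasoning is needed except in the $\ax$-case of $\otLinv$ noted above, where the relevant identity is exactly the skew monoidal unit-associator coherence~(b).
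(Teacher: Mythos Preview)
Your proposal is correct, and since the paper states this lemma without proof (deferring to the Agda formalization), there is no explicit argument to compare against. Your strategy---establishing the auxiliary invariance properties $\sound(\ILinv\,f) \doteq \sound\,f$ and $\sound(\otLinv\,f) \doteq \sound\,f$ by induction on cut-free $f$, then combining them by induction on $\Gamma$ following the recursive structure of $\otLinvstar$---is the natural one and matches the shape of the relevant definitions. Your identification of axiom~(b) as the only substantive equational input (in the $\ax$ case of $\otLinv$, where $\sound(\otLinv\,\ax_{A\ot B}) = (\id_A \ot \lam_B) \comp \al_{A,\I,B} \comp (\rho_A \ot \id_B) \doteq \id_{A\ot B}$) is exactly right, and your observation that the $\otR$ cases go through because $\varphi_{\I,\Gamma_1,\Gamma_2} = \varphi_{\n,\Gamma_1,\Gamma_2}$ and $\varphi_{A\ot B,\Gamma_1,\Gamma_2} = \varphi_{A,(B,\Gamma_1),\Gamma_2}$ hold on the nose is correct by the definition $\varphi_{S,\Gamma,\Delta} = \varphi'_{\ssem{S},\Gamma,\Delta}$.
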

Finally, combining these results we can obtain our desired bijection.
\begin{lemma}[Strong completeness]\label{lem:strcompl} For any derivation 
  $f : \asem{S \mid \Gamma} \tto C$, there is a derivation
  $\strcmplt\,f : \stseq{S}{\Gamma}{C}$ given by
  $\strcmplt\,f \eqdf \otLinvstar\, (\cmplt\, f)$.
\end{lemma}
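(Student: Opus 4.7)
The lemma is essentially a direct composition of two previously established results, so the proposal is to verify that the types align and that the indicated definition is well-formed. The substantive content is conceptual: unlike Theorem~\ref{thm:cmplt}, which produces sequent calculus derivations only for antecedents of the restricted shape $A \mid {}$ (singleton stoup, empty context), strong completeness yields derivations for arbitrary stoup $S$ and context $\Gamma$.

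First I would invoke Theorem~\ref{thm:cmplt} on the given categorical calculus derivation $f : \asem{S \mid \Gamma} \tto C$, obtaining a cut-free sequent calculus derivation $\cmplt\,f : \stseq{\asem{S \mid \Gamma}}{~}{C}$.

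Next I would specialize the admissible rule $\otLinvstar$ from Lemma~\ref{lem:leftinvstar} to the empty tail context $\Delta = ()$. In general, $\otLinvstar$ transforms a derivation of $\stseq{\asem{S \mid \Gamma}}{\Delta}{C}$ into a derivation of $\stseq{S}{\Gamma,\Delta}{C}$; with $\Delta$ empty, the source type becomes $\stseq{\asem{S \mid \Gamma}}{~}{C}$ and the target type becomes $\stseq{S}{\Gamma}{C}$. Applying this instance of $\otLinvstar$ to $\cmplt\,f$ therefore yields a derivation of $\stseq{S}{\Gamma}{C}$, which is exactly $\strcmplt\,f$ by definition.

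No real obstacle arises: both ingredients are already in hand, and the two type indices match up on the nose (using that the list $\Gamma,()$ is $\Gamma$). The only delicate point is to ensure that the ambient lemma machinery quantifies over an arbitrary tail context $\Delta$ so that the specialization $\Delta = ()$ is legitimate, which is exactly how Lemma~\ref{lem:leftinvstar} is stated. The compatibility of $\otLinvstar$ with $\circeq$ recorded in that lemma is not used here, but it will be the key ingredient needed later when showing that $\strcmplt$ is a two-sided inverse to $\sound$ up to $\circeq$.
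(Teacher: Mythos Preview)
Your proposal is correct and matches the paper's approach exactly: the paper does not even supply a separate proof for this lemma, since the definition $\strcmplt\,f \eqdf \otLinvstar\,(\cmplt\,f)$ is itself the entire content, relying on Theorem~\ref{thm:cmplt} and the $\Delta = ()$ instance of Lemma~\ref{lem:leftinvstar} just as you describe. One small inaccuracy: $\cmplt\,f$ as produced by Theorem~\ref{thm:cmplt} is not literally cut-free (it uses $\scut$ to translate $\dcomp$), though in Section~\ref{sec:adequacy} derivations are implicitly identified with their cut-free normalizations via Lemma~\ref{lem:admits-cut}; this does not affect the argument here.
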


\begin{lemma}\label{lem:strcmpltcompat}
For all $f , g : \asem{S \mid \Gamma} \tto C$, $f \doteq g$ implies
$\strcmplt\, f \circeq \strcmplt\, g$.
\end{lemma}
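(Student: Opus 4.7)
The plan is to derive this statement as an essentially one-line consequence of the compatibility lemmas already established. By the definition $\strcmplt\,f \eqdf \otLinvstar(\cmplt\,f)$ given in Lemma~\ref{lem:strcompl}, and likewise for $g$, the goal $\strcmplt\,f \circeq \strcmplt\,g$ unfolds to $\otLinvstar(\cmplt\,f) \circeq \otLinvstar(\cmplt\,g)$. So the proof decomposes into two steps: first, propagate the hypothesis $f \doteq g$ through $\cmplt$; second, propagate the resulting $\circeq$ through $\otLinvstar$.

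For the first step, Lemma~\ref{lem:cmpltcompat} directly gives $\cmplt\,f \circeq \cmplt\,g$ from $f \doteq g$. For the second step, Lemma~\ref{lem:leftinvstar} asserts that the admissible rule $\otLinvstar$ is compatible with $\circeq$, so applying it to the two $\circeq$-equivalent derivations $\cmplt\,f$ and $\cmplt\,g$ preserves the equivalence. Chaining these two steps yields the desired conclusion.

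There is no real obstacle at this stage of the development: all the substantive inductive work has already been carried out. In particular, Lemma~\ref{lem:cmpltcompat} already takes care of the fact that $\scut$ and $\ccut$, used by $\cmplt$ to translate $\id$ and $\dcomp$, validate the equations of Figures~\ref{fig:skeweqns} and \ref{fig:skeweqns-ctd} and are compatible with $\circeq$; and Lemma~\ref{lem:leftinvstar} has already reduced the compatibility of $\otLinvstar$ (an iterated application of $\otLinv$ and $\ILinv$) to the compatibility of $\otLinv$ and $\ILinv$ with $\circeq$, which is part of Lemma~\ref{lem:invILotL}. Thus the proof of the present lemma amounts to recording the composite.
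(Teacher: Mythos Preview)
Your proposal is correct and matches the paper's intended argument: the paper states this lemma without proof, evidently because it follows immediately by unfolding $\strcmplt$ and chaining Lemma~\ref{lem:cmpltcompat} with the $\circeq$-compatibility of $\otLinvstar$ from Lemma~\ref{lem:leftinvstar}, exactly as you do.
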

\begin{lemma}\label{lem:soundstrcmplt}
For all $f : \asem{S \mid \Gamma} \tto C$, we have $\sound\,(\strcmplt\,f) \doteq f$.
\end{lemma}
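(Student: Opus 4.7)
The plan is to derive this immediately from the two lemmas immediately preceding it, namely Lemma~\ref{lem:soundcmplt} (soundness is left inverse to completeness on the categorical calculus) and Lemma~\ref{lem:soundinv} (the operation $\otLinvstar$ is invisible to $\sound$).

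Unfolding the definition given in Lemma~\ref{lem:strcompl}, we have $\strcmplt\,f = \otLinvstar\,(\cmplt\,f)$, where $\cmplt\,f$ is a derivation of $\stseq{\asem{S\mid \Gamma}}{~}{C}$. Applying Lemma~\ref{lem:soundinv} to $\cmplt\,f$ (with the context $\Delta$ instantiated to the empty list) gives
\[
\sound\,(\otLinvstar\,(\cmplt\,f)) \doteq \sound\,(\cmplt\,f).
\]
Then Lemma~\ref{lem:soundcmplt} applied to $f : \asem{S\mid\Gamma}\tto C$ gives $\sound\,(\cmplt\,f)\doteq f$, and the two equivalences chain together to yield the desired result.

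There is no real obstacle here: the content has already been discharged in Lemmas~\ref{lem:soundcmplt} and \ref{lem:soundinv}, and the present statement is essentially just packaging them together. The only thing to double-check is that Lemma~\ref{lem:soundinv} is applicable in the ``empty $\Delta$'' case, which is immediate from its statement since it is universally quantified over $\Delta$. Hence the proof reduces to a two-step rewriting, and can be presented in a single displayed calculation.
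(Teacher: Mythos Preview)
Your proof is correct and matches the paper's own argument exactly: the paper simply says the result is immediate from Lemmata~\ref{lem:soundcmplt} and \ref{lem:soundinv}, which is precisely the two-step rewriting you spell out.
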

\begin{proof}
Immediate from Lemmata~\ref{lem:soundcmplt} and \ref{lem:soundinv}.
\end{proof}

\begin{lemma}\label{lem:strcmpltsound}
 For all $f : \stseq{S}{\Gamma}{C}$, we have $\strcmplt \,(\sound\,f) \circeq f$.
\end{lemma}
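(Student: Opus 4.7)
The plan is to reduce this to an intermediate claim and then proceed by induction. Specifically, I would first show that
$$\cmplt\,(\sound\,f) \circeq \otLstar\,f$$
as derivations of $\stseq{\asem{S\mid\Gamma}}{~}{C}$. Granting this claim, the lemma follows immediately: since $\otLinvstar$ is compatible with $\circeq$ and $\otLinvstar\,(\otLstar\,f) = f$ by Lemma~\ref{lem:leftinvstar}, we get
$$\strcmplt\,(\sound\,f) \;=\; \otLinvstar\,(\cmplt\,(\sound\,f)) \;\circeq\; \otLinvstar\,(\otLstar\,f) \;=\; f.$$

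To prove the intermediate claim, I would induct on $f$. The cases $\ax$ and $\IR$ are immediate. The left-logical cases $\IL$ and $\otL$ are also straightforward: $\sound$ acts as the identity on the underlying categorical derivation in these cases (the interpretations of premise and conclusion coincide), and $\otLstar$ simply pushes these same rules one layer further, so the inductive hypothesis applies directly. The case $\uf$, where $\sound\,f = \sound\,f' \comp \asem{\lam\mid\Gamma}$, is handled by unfolding $\cmplt$ across the composition into a $\scut$, then commuting with $\otR / \uf / \IL$ using the cut-free equations of Figure~\ref{fig:cutfreeeqns}.

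The demanding cases are $\scut$, $\ccut$, and $\otR$, because $\sound$ uses the auxiliary derivations $\psi_{A,B,\Gamma}$ and $\varphi_{S,\Gamma,\Delta}$ from Lemmata~\ref{lem:psi} and~\ref{lem:varphi}, each built by iterating $\al$ and $\rho$. To manage these, I would establish helper lemmas characterizing $\cmplt\,(\asem{g \mid \Gamma})$, $\cmplt\,(\psi_{A,B,\Gamma})$, and $\cmplt\,(\varphi_{S,\Gamma,\Delta})$ up to $\circeq$ as specific sequent-calculus derivations built from iterated $\otR$, $\otL$, $\IL$, and $\uf$ applied to identity axioms. With these in hand, the long chains of $\scut$'s produced by $\cmplt \circ \sound$ simplify, via the associativity and parallel-cut equations of Figures~\ref{fig:skeweqns}--\ref{fig:skeweqns-ctd}, to the iterated-$\otL/\IL$ form of $\otLstar\,f$, modulo the $\otR / \uf$, $\otR / \IL$, $\otR / \otL$ commutations of Figure~\ref{fig:cutfreeeqns}.

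The main obstacle is the bookkeeping: each structural map ($\lam$, $\rho$, $\al$) reifies under $\cmplt$ to a particular derivation, and the cut-compositions against these must be normalized in exactly the right order. No new conceptual difficulty arises beyond what is already captured by the cut-equations and the cut-free equivalences; the equations of Figure~\ref{fig:cutfreeeqns} were in fact designed precisely to identify the alternative ways an $\otR$ can interact with a left rule on its first premise, which is exactly the interaction generated by computing $\cmplt \circ \sound$ on the $\otR$ case.
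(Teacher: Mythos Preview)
Your factoring through the intermediate claim $\cmplt(\sound f) \circeq \otLstar f$ is a clean way to organize the argument, and the reduction to Lemma~\ref{lem:leftinvstar} is correct. The paper's proof is simply ``by induction on $f$'', which your proposal carries out in more detail; the helper lemmata you identify for $\cmplt(\asem{g\mid\Gamma})$ and $\cmplt(\varphi_{S,\Gamma,\Delta})$ are exactly what the induction requires in the nontrivial cases.

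There is one genuine slip in your case analysis: you list $\scut$ and $\ccut$ among the ``demanding cases'', but by this point in Section~\ref{sec:adequacy} the paper has explicitly restricted to cut-free derivations, and the relation $\circeq$ is defined only on cut-free derivations. Hence $f$ in Lemma~\ref{lem:strcmpltsound} is cut-free, and there are no $\scut$ or $\ccut$ cases in the induction at all. The only cases are $\ax$, $\IR$, $\IL$, $\otL$, $\uf$, and $\otR$. Of these, the genuinely demanding ones are $\uf$ (where $\sound$ introduces $\asem{\lam\mid\Gamma'}$) and $\otR$ (where $\sound$ introduces $\varphi_{S,\Gamma_1,\Gamma_2}$, built from $\psi$ and $\rho$). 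Your description of how to discharge those two cases via the helper lemmata and the equations of Figures~\ref{fig:skeweqns}--\ref{fig:cutfreeeqns} is on target; you simply have two fewer cases than you thought, and the auxiliary map $\psi$ only enters indirectly through $\varphi$ in the $\otR$ case rather than through a separate $\scut$ case.
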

\begin{proof}
  By induction on $f$.
\end{proof}
\begin{theorem}
  The translations $\sound$ and $\strcmplt$ witness a one-to-one correspondence between derivations of $\stseq{S}{\Gamma}{C}$ (considered up to $\circeq$) and derivations of $\asem{S\mid\Gamma} \tto C$ (considered up to $\doteq$).
  As a special case, $\sound$ and $\cmplt$ witness a one-to-one correspondence between derivations $\stseq{A}{~}{C}$ and derivations $A \tto C$.
\end{theorem}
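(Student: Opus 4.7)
The plan is simply to assemble the lemmas that were established just before the statement. First I would observe that by Lemma~\ref{lem:soundcompat}, $\sound$ descends to a well-defined function $\sound : \stseq{S}{\Gamma}{C}/{\circeq} \to \asem{S \mid \Gamma} \tto C / {\doteq}$, and by Lemma~\ref{lem:strcmpltcompat}, $\strcmplt$ descends to a well-defined function $\strcmplt$ in the opposite direction. Then the mutual inverseness follows directly from Lemmas~\ref{lem:soundstrcmplt} and \ref{lem:strcmpltsound}: the first gives $\sound \circ \strcmplt = \id$ on equivalence classes of categorical-calculus derivations, and the second gives $\strcmplt \circ \sound = \id$ on equivalence classes of sequent-calculus derivations.

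For the special case where the stoup contains a single formula $A$ and the context is empty, I would unfold the definitions to see that the general bijection specializes to the claimed one. Since $\asem{A \mid ~} = \csem{\ssem{A}}{~} = A$, the codomain becomes derivations $A \tto C$ as desired. Moreover, by the definition $\strcmplt\, f = \otLinvstar\,(\cmplt\,f)$ and Lemma~\ref{lem:leftinvstar} specialized to $\Gamma = ()$ (where $\otLinvstar$ is the identity operation, modulo $\circeq$, since no $\otL$ or $\IL$ applications need to be undone), $\strcmplt$ agrees with $\cmplt$. Thus the specialization of the general bijection yields the claimed correspondence between $\stseq{A}{~}{C}/{\circeq}$ and $A \tto C/{\doteq}$.

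There is really no obstacle left at this point, since all the substantive work has been done in the preceding lemmas. The only subtlety worth double-checking is the claim that $\otLinvstar$ collapses to the identity in the empty-context case; this is immediate from the definition of $\otLinvstar$ by induction on $\Gamma$ (base case $\Gamma = ()$ returns its input unchanged), so the reduction $\strcmplt = \cmplt$ holds not just up to $\circeq$ but definitionally. Consequently, the statement of the theorem follows by packaging together Lemmas~\ref{lem:soundcompat}, \ref{lem:strcmpltcompat}, \ref{lem:soundstrcmplt}, and \ref{lem:strcmpltsound}, with the special case obtained by the routine unfolding above.
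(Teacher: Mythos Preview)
Your proposal is correct and matches the paper's approach: the paper gives no explicit proof of this theorem, leaving it to follow immediately from the preceding Lemmata~\ref{lem:soundcompat}, \ref{lem:strcmpltcompat}, \ref{lem:soundstrcmplt}, and \ref{lem:strcmpltsound}, exactly as you assemble them. Your unfolding of the special case (that $\asem{A \mid ~} = A$ and that $\otLinvstar$ is definitionally the identity when $\Gamma$ is empty and the stoup holds a formula, so $\strcmplt$ coincides with $\cmplt$) is the intended reading.
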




\section{A Focused Subsystem of Canonical Derivations}
\label{sec:focusing}

If we consider the congruence relation $\circeq$ on cut-free 
derivations as a term rewrite system just by directing every equation
in Figure~\ref{fig:cutfreeeqns} from left to right, we can notice it is weakly
confluent and strongly normalizing, hence strongly confluent with
unique normal forms.  It turns out that these normal forms admit a simple
direct description, corresponding to a natural \emph{focused} subsystem
of the skew monoidal sequent calculus.

In the style of Andreoli \cite{Andreoli92}, we present the focused subsystem as a sequent calculus with an
additional \emph{phase annotation} on sequents (see Figure~\ref{fig:focusing}), which alternates between
$\mathsf{L}$ (``left phase'') and $\mathsf{R}$ (``right phase'').

\begin{figure}[t]
\[
\begin{array}{c@{\quad \quad}c@{\quad \quad}c}
\infer[\uf]{\stseqL{\n}{A, \Gamma}{C}}{
  \stseqL{A}{\Gamma}{C}
}
&
\infer[\switch]{\stseqL{T}{\Gamma}{C}}{
  \stseqR{T}{\Gamma}{C}
}
\hspace*{-12mm}
&
\infer[\axatm]{\stseqR{X}{~}{X}}{
}
\\[6pt]
\infer[\IL]{\stseqL{\I}{\Gamma}{C}}{
  \stseqL{\n}{\Gamma}{C}
}
& &
\infer[\IRfoc]{\stseqR{\n}{~}{\I}}{
}
\\[6pt]
\infer[\otL]{\stseqL{A \ot B}{\Gamma}{C}}{
  \stseqL{A}{B, \Gamma}{C}
}
& &
\hspace*{5mm}
\infer[\otRfoc]{\stseqR{T}{\Gamma, \Delta}{A \otimes B}}{
  \stseqR{T}{\Gamma}{A}
  &
  \stseqL{\n}{\Delta}{B}
} 
\end{array}
\]
Note: $T$ ranges over \emph{irreducible} stoups $T = X$ or $T = \n$.
\caption{Rules of the focused sequent calculus}
\label{fig:focusing}
\end{figure}

Observe that in an $\mathsf{L}$-sequent there is no restriction on the stoup, but
in an $\mathsf{R}$-sequent the stoup is forced to be \emph{irreducible,} meaning that it is either empty or atomic.\footnote{It is worth mentioning that Zeilberger \cite{Zei} did not need an explicit phase annotation in the focused sequent calculus for the Tamari order. This was because restricting to sequents with a non-empty stoup (in the absence of the $\uf$ rule and the rules for $\I$), the phase can be uniquely determined by asking whether the stoup is irreducible.}

The focused calculus is manifestly sound, in the sense that if one erases phase annotations, all of the rules are either rules of the original calculus or else (in the case of $\switch$) have conclusion equal to their premise.
\begin{proposition}[Focusing soundness]
For any focused derivation
$f : \stseqLR{S}{\Gamma}{C}{P}$ (where $P \in \{\mathsf{L}, \mathsf{R}\}$), there is a derivation $\emb_P\, f : \stseq{S}{\Gamma}{C}$.
\end{proposition}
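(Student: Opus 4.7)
The plan is to define the embeddings $\emb_\mathsf{L}$ and $\emb_\mathsf{R}$ simultaneously by mutual induction on the focused derivation $f$, performing a case analysis on the last rule applied. Since every rule of the focused calculus (Figure~\ref{fig:focusing}) has the same conclusion shape as a corresponding rule of the unfocused calculus (Figure~\ref{fig:seqcalc}) once the phase annotations are erased, each case of the definition is essentially forced.

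In more detail: for the $\mathsf{R}$-phase rules, I would set $\emb_\mathsf{R}(\axatm) \eqdf \ax$ (noting that the focused axiom is restricted to atoms but the unfocused $\ax$ applies at any formula, so this is a special case), $\emb_\mathsf{R}(\IRfoc) \eqdf \IR$, and $\emb_\mathsf{R}(\otRfoc\,(f_1,f_2)) \eqdf \otR\,(\emb_\mathsf{R}\,f_1,\emb_\mathsf{L}\,f_2)$. For the $\mathsf{L}$-phase rules (other than $\switch$), each maps to its unfocused counterpart applied to the recursive embedding: $\emb_\mathsf{L}(\uf\,f) \eqdf \uf\,(\emb_\mathsf{L}\,f)$, $\emb_\mathsf{L}(\IL\,f) \eqdf \IL\,(\emb_\mathsf{L}\,f)$, and $\emb_\mathsf{L}(\otL\,f) \eqdf \otL\,(\emb_\mathsf{L}\,f)$. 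The one rule without a direct counterpart is $\switch$, which only changes the phase annotation; the natural choice is $\emb_\mathsf{L}(\switch\,f) \eqdf \emb_\mathsf{R}\,f$, reflecting the fact that after erasing phases the premise and conclusion of $\switch$ coincide.

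Since the two phases invoke each other (through $\switch$ on one side and $\otRfoc$ on the other), the recursion is mutual but structurally decreasing on the focused derivation, so the definition is well-founded. There is no serious obstacle: the statement is essentially a bookkeeping result saying that the phase-annotated system is just a restriction of the unfocused system. The only thing worth flagging is the atomic restriction in $\axatm$, which is harmless for embedding because $\ax$ is available at every formula in the unfocused calculus; the real content of that restriction will only become visible when proving the converse (focusing completeness), where one must show that general identities at compound formulae can be replaced by $\eta$-expanded focused derivations up to $\circeq$.
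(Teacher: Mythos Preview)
Your proposal is correct and matches the paper's approach exactly: the paper simply remarks that the focused calculus is ``manifestly sound'' because erasing phase annotations turns every focused rule into a rule of the original calculus (with $\switch$ collapsing since its premise and conclusion coincide), and your mutual induction spells out precisely this erasure. There is nothing to add.
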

Less obviously, the focused calculus is also complete, and indeed optimal in a sense that we will soon make precise.

As in Andreoli's original formulation for linear logic, we can think of the focused calculus as defining a backwards proof search strategy which attempts to build a derivation of a sequent, starting from the root.
Beginning in an $\mathsf{L}$-phase, the invertible rules $\IL$ and $\otL$ are applied to break down the formula in the stoup and transform it into a list of additional subformulae in the context.
Once the stoup is irreducible, there is a choice to either apply the $\uf$ rule (if the stoup is empty) to move another formula into the stoup and repeat the inversion process, or else apply the $\switch$ rule and go into $\mathsf{R}$-phase.
During an $\mathsf{R}$-phase, the non-invertible rule $\otR$ can be tried as necessary to attempt to continue proof search (which involves a non-deterministic splitting of the context, and moves back into $\mathsf{L}$-phase for the right premise), as can the rules $\IR$ and $\ax$ to attempt to finish off the derivation.
The crucial point is that this search strategy will always succeed in finding a proof, if one exists.


\begin{theorem}[Focusing completeness]\label{thm:foccmplt}
  For any derivation $f : \stseq{S}{\Gamma}{C}$, there is a
  focused derivation $\focus\, f : \stseqL{S}{\Gamma}{C}$.
\end{theorem}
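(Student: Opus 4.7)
The plan is to first invoke Lemma~\ref{lem:admits-cut} to reduce to a cut-free derivation of $\stseq{S}{\Gamma}{C}$, and then define $\focus$ by induction on the cut-free derivation, translating each rule into the focused calculus. The translation is direct for the invertible left rules $\IL$, $\otL$, and $\uf$; two cases need admissibility lemmas: $\ax$ (since $\axatm$ applies only to atoms) and $\otR$ (since $\otRfoc$ requires an irreducible stoup).

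Before the main induction I would establish two admissibility lemmas for the focused calculus. First, a general axiom rule producing a focused derivation $\stseqL{A}{~}{A}$ for every formula $A$, by induction on $A$: atoms go via $\switch$ and $\axatm$; $\I$ goes via $\IL$ followed by $\switch$ and $\IRfoc$; and $A_1 \ot A_2$ goes via $\otL$, reducing the goal to $\stseqL{A_1}{A_2}{A_1 \ot A_2}$, which is obtained by feeding the inductive hypothesis $\stseqL{A_1}{~}{A_1}$ and $\uf$ applied to the inductive hypothesis $\stseqL{A_2}{~}{A_2}$ into the general right tensor rule below. Second, a general right tensor rule taking focused derivations of $\stseqL{S}{\Gamma}{A}$ and $\stseqL{\n}{\Delta}{B}$ and producing a focused derivation of $\stseqL{S}{\Gamma, \Delta}{A \ot B}$. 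This is proved by induction on the focused derivation of the first premise: if it ends with $\IL$, $\otL$, or $\uf$, we push the $\otR$ beneath it using the inductive hypothesis on the smaller subderivation, then reapply the same left rule at the bottom; if it ends with $\switch$, then $S$ is irreducible, and we simply apply $\switch$ followed by $\otRfoc$, using the second premise directly on the right branch.

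With these in hand, the main induction on the cut-free derivation $f$ becomes routine: the $\ax$ case invokes the general axiom; the $\uf$, $\IL$, and $\otL$ cases apply the corresponding focused rule to the inductive hypothesis; the $\IR$ case uses $\switch$ and $\IRfoc$; and the $\otR$ case invokes the general right tensor rule.

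The main obstacle will be the inductive structure of the general right tensor lemma, which must handle every possible last rule of the first premise's focused derivation and where one has to verify that the resulting derivation really sits in the $\mathsf{L}$-phase subcalculus. The cases for $\IL$, $\otL$, and $\uf$ there encode precisely the left-over-right rule permutations imposed as equations on cut-free derivations in Figure~\ref{fig:cutfreeeqns}; this is no coincidence, since focused derivations are meant to coincide with the normal forms of the associated left-to-right rewriting system, and the present theorem is exactly what is needed to certify the existence of such normal forms.
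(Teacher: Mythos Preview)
Your proposal is correct and matches the paper's approach closely. The admissibility lemmas you isolate for the general axiom and the general $\otR$ are exactly those of Lemma~\ref{lemma:foc-admits}, proved in the same way (induction on $A$ for $\ax$, induction on the first premise for $\otR$, with the $\switch$ case discharged by $\otRfoc$). The only difference is in how cuts are handled: the paper's text proves $\scut$ and $\ccut$ admissible \emph{directly} in the focused calculus as part of Lemma~\ref{lemma:foc-admits}, whereas you first invoke Lemma~\ref{lem:admits-cut} to eliminate cuts in the unfocused system and then only need the non-cut admissibility cases. The paper explicitly acknowledges your route as the one taken in the Agda formalization (``the domain of the $\focus$ function is restricted to cut-free derivations, factoring the proof of Theorem~\ref{thm:foccmplt} via Lemma~\ref{lem:admits-cut}''), so both variants are sanctioned; your factoring simply shifts the cut-elimination work to the earlier lemma, whose proof the paper says follows the same pattern anyway.
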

The proof of focusing completeness follows a standard pattern: one first shows that all of the rules of the original calculus are admissible for $\mathsf{L}$-phase sequents, and then obtains Theorem~\ref{thm:foccmplt} as an immediate corollary by induction on $f$.
\begin{lemma}\label{lemma:foc-admits}
Each of the rules
\[
\small
\begin{array}{c}
\infer[\ax]{\stseqL{A}{~}{A}}{}
\qquad
\infer[\IR]{\stseqL{\n}{~}{\I}}{}
\qquad
\infer[\otR]{\stseqL{S}{\Gamma,\Delta}{A \ot B}}{\stseqL{S}{\Gamma}{A} & \stseqL{\n}{\Delta}{B}}
\\[2ex]
\infer[\scut]{\stseqL{S}{\Gamma, \Delta}{C}}{
  \stseqL{S}{\Gamma}{A}
  & 
  \stseqL{A}{\Delta}{C}
}
\qquad
\infer[\ccut]{\stseqL{S}{\Delta_0, \Gamma, \Delta_1}{C}}{
  \stseqL{\n}{\Gamma}{A}
  & 
  \stseqL{S}{\Delta_0, A, \Delta_1}{C}
}
\end{array}
\]
is admissible in the sense that given focused derivations of its premises there is a focused derivation of its conclusion.
\end{lemma}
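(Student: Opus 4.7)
The plan is to establish admissibility of the five rules in an order that respects their dependencies: first $\IR$, then $\otR$, then $\ax$, and finally $\scut$ and $\ccut$ by a mutual induction.

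The rule $\IR$ is immediate via $\switch\,\IRfoc$. For $\otR$, since the stoup $S$ of the left premise $f : \stseqL{S}{\Gamma}{A}$ need not be irreducible as required by $\otRfoc$, I would proceed by induction on $f$: if $f$ ends in one of the invertible $\mathsf{L}$-phase rules ($\IL$, $\otL$, or $\uf$), recurse on its sole premise together with $g$ and re-apply the same rule; if $f$ ends in $\switch$, combine its $\mathsf{R}$-phase subderivation with $g$ via $\otRfoc$ and then re-enter $\mathsf{L}$-phase with $\switch$. The rule $\ax : \stseqL{A}{~}{A}$ then follows by induction on $A$: atoms reduce to $\switch\,\axatm$, the unit to $\IL$ followed by $\switch\,\IRfoc$, and $B \ot C$ by $\otL$ applied to the $\otR$-admissible combination of $\stseqL{B}{~}{B}$ and $\uf$ applied to $\stseqL{C}{~}{C}$, both obtained from the induction hypothesis.

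The bulk of the work lies in proving $\scut$ and $\ccut$ simultaneously admissible, for which I would use a lexicographic induction on the cut formula $A$ and the combined size of the two premise derivations. The cases fall into patterns familiar from standard cut-elimination. \emph{Axiom} cases apply when one premise is of the form $\switch\,\axatm$ on the cut formula and simply return the other. \emph{Commutative} cases, where either premise ends in an $\mathsf{L}$-phase rule that does not introduce the cut formula, permute the cut upwards via the induction hypothesis at a strictly smaller derivation size. \emph{Principal} cases, where $f$ introduces the cut formula on the right (via $\switch$ of $\IRfoc$ or $\otRfoc$) and $g$ consumes it on the left (via $\IL$ or $\otL$), reduce to cuts on the proper subformulae of $A$, strictly decreasing the cut formula in the induction measure.

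The main obstacle will be managing the interaction between the cut rules and the shift rule $\uf$, since the stoup discipline refines where a substituted formula may land. When $\ccut$ meets a right premise of the form $\uf\,g'$ with the cut position at the head of the context, the cut must be reassociated into a $\scut$ against $g'$ — precisely the content of equation~\eqref{eq:skewmulti4c} — and conversely $\scut$ against a $\uf$-ed left premise must be pushed inside the shift. Analogous bookkeeping is needed to propagate $\ccut$ past $\otRfoc$ depending on whether the cut position lies in the left or right sub-context of the splitting. Verifying that every such reassociation either strictly decreases the cut formula or preserves it while strictly decreasing derivation size — so that the lexicographic induction is well-founded — is the delicate point of the argument, but the structural case analysis itself is routine.
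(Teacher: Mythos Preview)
Your proposal is correct and follows essentially the same strategy as the paper: $\IR$ by $\switch\,\IRfoc$, $\otR$ by induction on the first premise, $\ax$ by induction on the formula using the admissible $\otR$, and the two cuts by a mutual lexicographic induction on the cut formula and derivation size. The one organizational difference is that the paper introduces two auxiliary cut rules $\scutR$ (with $\mathsf{R}$-phase first premise) and $\ccutR$ (with $\mathsf{R}$-phase second premise) and proves all four cuts simultaneously admissible; this cleanly factors out the case analysis that arises when a premise ends in $\switch$, rather than peeking through the $\switch$ inline as you do. Your approach works too, but note that in cases like $\ccut(f,\switch(\otRfoc(g_1,g_2)))$ you must recurse on the freshly built derivation $\switch\,g_1$ rather than on a literal subderivation of $g$, so your secondary measure really must be a numerical size (as you say) and not the subderivation order.
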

\begin{proof}
  Rule $\IR$ is immediately derivable as $\IR = \switch(\IRfoc)$, while $\otR$ is admissible by induction on the derivation of the first premise.
  Admissibility of $\ax$ can be shown in various ways, 
  perhaps most simply by induction on $A$ after first showing admissibility of $\otR$.
  For example, if $A = A_1 \ot A_2$, we derive the axiom $\stseqL{A_1\ot A_2}{~}{A_1\ot A_2}$ by invoking $\otR$ and by appeal to the induction hypothesis on $A_1$ and $A_2$:

\[
\small
  \infer[\otL]{\stseqL{A_1 \ot A_2}{~}{A_1 \ot A_2}}{
  \infer[\otR]{\stseqL{A_1}{A_2}{A_1 \ot A_2}}{
   \infer[\ax]{\stseqL{A_1}{~}{A_1}}{
   }
   &
   \infer[\uf]{\stseqL{\n}{A_2}{A_2}}{
     \infer[\ax]{\stseqL{A_2}{~}{A_2}}{
     }
   }
 }
}
  \]
  Admissibility of the cut rules is only a bit more elaborate, requiring us to first introduce two more cut rules involving $\mathsf{R}$-phase sequents,

\[
\small
\infer[\scutR]{\stseqL{T}{\Gamma, \Delta}{C}}{
  \stseqR{T}{\Gamma}{A}
  & 
  \stseqL{A}{\Delta}{C}
}
\quad
\infer[\ccutR]{\stseqR{T}{\Delta_0, \Gamma, \Delta_1}{C}}{
  \stseqL{\n}{\Gamma}{A}
  & 
  \stseqR{T}{\Delta_0, A, \Delta_1}{C}
}
\]
  and then prove admissibility of the four rules by a mutual lexicographic induction on the cut formula $A$ and on the pair of derivations of the premises.
  The proof is relatively long but mechanical. We show all essential cases below.
  \begin{description}
  \item~ Definition of $\scut(f,g)$ for $f : \stseqL{S}{\Gamma}{A}$ and $g : \stseqL{A}{\Delta}{C}$:
    \begin{itemize}
    \item Case $f = \uf\,f'$ for some $f' : \stseqL{A'}{\Gamma'}{A}$. In particular, $S = \n$ and $\Gamma = A',\Gamma'$. We define:
\begin{multline*}
\small
\proofbox{
\infer[\scut]{\stseqL{\n}{A',\Gamma',\Delta}{C}}{
\infer[\uf]{\stseqL{\n}{A', \Gamma'}{A}}{
  \stseqL[f']{A'}{\Gamma'}{A}
}
  & 
  \stseqL[g]{A}{\Delta}{C}
}
}
\quad \eqdf \\[-10pt]
\small
\proofbox{
\infer[\uf]{\stseqL{\n}{A',\Gamma',\Delta}{C}}{
  \infer[\scut]{\stseqL{A'}{\Gamma',\Delta}{C}}{
    \stseqL[f']{A'}{\Gamma'}{A} &
    \stseqL[g]{A}{\Delta}{C}
  }
}
}
\end{multline*}
    \item Case $f = \IL\,f'$ or $\otL\,f'$. Similar to $f = \uf\,f'$.
    \item Case $f = \switch\,f'$ for some $f' : \stseqR{T}{\Gamma}{A}$. In particular, $S = T$ is irreducible. We define:
\[
\small
\proofbox{
\infer[\scut]{\stseqL{T}{\Gamma,\Delta}{C}}{
\infer[\switch]{\stseqL{T}{\Gamma}{C}}{
  \stseqR[f']{T}{\Gamma}{A}
}
  & 
  \stseqL[g]{A}{\Delta}{C}
}
}
~ \eqdf ~
\proofbox{
\infer[\scutR]{\stseqL{T}{\Gamma,\Delta}{C}}{
    \stseqR[f']{T}{\Gamma}{A} &
    \stseqL[g]{A}{\Delta}{C}
}
}
\]
    \end{itemize}
  \item~ Definition of $\scutR(f,g)$ for $f : \stseqR{T}{\Gamma}{A}$ and $g : \stseqL{A}{\Delta}{C}$:
    \begin{itemize}
    \item Case $f \eqdf \axatm$. In particular, $T = X = A$ and $\Gamma$ is
      empty. We define:

\[
\small
\proofbox{
\infer[\scutR]{\stseqL{X}{\Delta}{C}}{
  \infer[\axatm]{\stseqR{X}{~}{X}}{}
  & 
  \stseqL[g]{X}{\Delta}{C}
}
}
~ \eqdf ~
  \stseqL[g]{X}{\Delta}{C}
\]
    \item Case $f = \IRfoc$. In particular, $T = \n$, $\Gamma$ is empty, and $A = \I$. Then $g$ is necessarily of the form $g = \IL\,g'$ for some $g' : \stseqL{\n}{\Delta}{C}$, and we define:
\[
\small
\proofbox{
\infer[\scutR]{\stseqL{\n}{\Delta}{C}}{
  \infer[\IRfoc]{\stseqR{\n}{~}{\I}}{}
  & 
  \infer[\IL]{\stseqL{\I}{\Delta}{C}}{\stseqL[g']{\n}{\Delta}{C}}
}
}
~ \eqdf ~
\stseqL[g']{\n}{\Delta}{C}
\]

    \item Case $f = \otRfoc(f_1,f_2)$ for some $f_1 : \stseqR{T}{\Gamma_1}{A_1}$ and $f_2 : \stseqL{\n}{\Gamma_2}{A_2}$. In particular, $\Gamma = \Gamma_1,\Gamma_2$ and $A = A_1 \ot A_2$. Then $g$ is necessarily of the form $g = \otL\,g'$ for some $g' : \stseqL{A_1}{A_2,\Delta}{C}$, and we define:
\begin{multline*}
\small
\proofbox{
\infer[\scutR]{\stseqL{T}{\Gamma_1,\Gamma_2,\Delta}{C}}{
  \infer[\otRfoc]{\stseqR{T}{\Gamma_1,\Gamma_2}{A_1\ot A_2}}{\stseqR[f_1]{T}{\Gamma_1}{A_1} & \stseqL[f_2]{\n}{\Gamma_2}{A_2}}
  & 
  \infer[\otL]{\stseqL{A_1\ot A_2}{\Delta}{C}}{\stseqL[g']{A_1}{A_2,\Delta}{C}}
}
}
\quad \eqdf \\
\small
\proofbox{
\infer[\ccut]{\stseqL{T}{\Gamma_1,\Gamma_2,\Delta}{C}}{
  \stseqL[f_2]{\n}{\Gamma_2}{A_2} &
  \infer[\scutR]{\stseqL{T}{\Gamma_1,A_2,\Delta}{C}}{
    \stseqR[f_1]{T}{\Gamma_1}{A_1} &
    \stseqL[g']{A_1}{A_2,\Delta}{C}
  }
}
}
\end{multline*}
    \end{itemize}
  \item~ Definition of $\ccut(f,g)$ for $f : \stseqL{\n}{\Gamma}{A}$ and $g : \stseqL{S}{\Delta_0,A,\Delta_1}{C}$:
    \begin{itemize}
    \item Case $g = \uf\,g'$ for some $g' : \stseqL{A'}{\Delta'}{C}$. In particular, $S = \n$ and $\Delta_0,A,\Delta_1 = A',\Delta'$. There are two cases, depending on whether or not the context $\Delta_0$ is empty.
      \begin{itemize}
      \item If $\Delta_0$ is empty, then $A' = A$ and $\Delta' = \Delta_1$. We define:
\[
\small
      \proofbox{
        \infer[\ccut]{\stseqL{\n}{\Gamma,\Delta_1}{C}}{
          \stseqL[f]{\n}{\Gamma}{A} &
          \infer[\uf]{\stseqL{\n}{A , \Delta_1}{C}}{
            \stseqL[g']{A}{\Delta_1}{C}
            }
          }
      }
 ~ \eqdf ~ 
      \proofbox{
        \infer[\scut]{\stseqL{\n}{\Gamma,\Delta_1}{C}}{
          \stseqL[f]{\n}{\Gamma}{A} &
           \stseqL[g']{A}{\Delta_1}{C}
          }
      }    
\]

    \item If $\Delta_0 \eqdf A'',\Delta'_0$, then $A'' = A'$ and
      $\Delta' = \Delta'_0,A,\Delta_1$. We define:
\begin{multline*}
\small
      \proofbox{
        \infer[\ccut]{\stseqL{\n}{A',\Delta'_0,\Gamma,\Delta_1}{C}}{
          \stseqL[f]{\n}{\Gamma}{A} &
          \infer[\uf]{\stseqL{\n}{A' ,\Delta'_0,A, \Delta_1}{C}}{
            \stseqL[g']{A'}{\Delta'_0,A,\Delta_1}{C}
            }
          }
      }
 \quad \eqdf \\
\small
      \proofbox{
        \infer[\uf]{\stseqL{\n}{A',\Delta'_0,\Gamma,\Delta_1}{C}}{
          \infer[\ccut]{\stseqL{A'}{\Delta'_0,\Gamma,\Delta_1}{C}}{
            \stseqL[f]{\n}{\Gamma}{A} &
            \stseqL[g']{A'}{\Delta'_0,A,\Delta_1}{C}
          }
          }
      }    
\end{multline*} 
      \end{itemize}
    \item Case $g = \IL\,g'$ for some $g' : \stseqL{\n}{\Delta_0,A,\Delta_1}{C}$. In particular, $S = \I$. We define:
\begin{multline*}
\small
      \proofbox{
        \infer[\ccut]{\stseqL{\I}{\Delta_0,\Gamma,\Delta_1}{C}}{
          \stseqL[f]{\n}{\Gamma}{A} &
          \infer[\IL]{\stseqL{\I}{\Delta_0,A, \Delta_1}{C}}{
            \stseqL[g']{\n}{\Delta_0,A,\Delta_1}{C}
            }
          }
      }
\quad \eqdf \\[-10pt]
\small
      \proofbox{
        \infer[\IL]{\stseqL{\I}{\Delta_0,\Gamma,\Delta_1}{C}}{
          \infer[\ccut]{\stseqL{\n}{\Delta_0,\Gamma,\Delta_1}{C}}{
            \stseqL[f]{\n}{\Gamma}{A} &
            \stseqL[g']{\n}{\Delta_0,A,\Delta_1}{C}
          }
          }
      }    
\end{multline*}
    \item Case $g = \otL\,g'$: similar to $g = \IL\,g'$.
    \item Case $g = \switch\,g'$ for some $g' : \stseqR{T}{\Delta_0,A,\Delta_1}{C}$. In particular, $S = T$ is irreducible. We define:
\begin{multline*}
\small
      \proofbox{
        \infer[\ccut]{\stseqL{T}{\Delta_0,\Gamma,\Delta_1}{C}}{
          \stseqL[f]{\n}{\Gamma}{A} &
          \infer[\switch]{\stseqL{T}{\Delta_0,A, \Delta_1}{C}}{
            \stseqR[g']{T}{\Delta_0,A,\Delta_1}{C}
            }
          }
      }
\quad \eqdf \\
\small
      \proofbox{
        \infer[\switch]{\stseqL{T}{\Delta_0,\Gamma,\Delta_1}{C}}{
          \infer[\ccutR]{\stseqR{T}{\Delta_0,\Gamma,\Delta_1}{C}}{
            \stseqL[f]{\n}{\Gamma}{A} &
            \stseqR[g']{T}{\Delta_0,A,\Delta_1}{C}
          }
          }
      }    
\end{multline*}
    \end{itemize}
    
  \item Def.\ of $\ccutR(f,g)$ for $f : \stseqL{\n}{\Gamma}{A}$ and $g : \stseqR{T}{\Delta_0,A,\Delta_1}{C}$:
    \begin{itemize}
    \item Case $g = \otRfoc(g_1,g_2)$ for some $g_1 : \stseqR{T}{\Lambda_1}{C_1}$ and $g_2 : \stseqL{\n}{\Lambda_2}{C_2}$. In particular, $\Lambda_1,\Lambda_2 = \Delta_0,A,\Delta_1$ and $C = C_1 \ot C_2$. We proceed by checking if the formula $A$ occurs in $\Lambda_1$ or $\Lambda_2$.
      \begin{itemize}
      \item If $A$ occurs in $\Lambda_1$, we have $\Lambda_1 =
        \Delta_0,A,\Delta'_1$ and $\Delta_1 = \Delta'_1,\Lambda_2$ for some $\Delta'_1$. We define:
\begin{multline*}
\small
     \proofbox{
     \infer[\ccutR]{\stseqR{T}{\Delta_0,\Gamma,\Delta'_1,\Lambda_2}{C_1\ot C_2}}{
       \stseqL[f]{\n}{\Gamma}{A} &
       \infer[\otR]{\stseqR{T}{\Delta_0,A, \Delta'_1 ,\Lambda_2}{C_1 \ot C_2}}{
         \stseqR[g_1]{T}{\Delta_0,A,\Delta'_1}{C_1} &
         \stseqL[g_2]{\n}{\Lambda_2}{C_2}
       }
     }
     }
\quad \eqdf \\ 
\small
  \proofbox{
   \infer[\otR]{\stseqR{T}{\Delta_0,\Gamma,\Delta'_1,\Lambda_2}{C_1\ot C_2}}{
     \infer[\ccutR]{\stseqR{T}{\Delta_0,\Gamma,\Delta'_1}{C_1}}{
       \stseqL[f]{\n}{\Gamma}{A} &
       \stseqR[g_1]{T}{\Delta_0,A,\Delta'_1}{C_1}
     } &
     \stseqL[g_2]{\n}{\Lambda_2}{C_2}
     }
      }    
\end{multline*}

      \item If $A$ occurs in $\Lambda_2$, we have $\Lambda_2 =
        \Delta'_0,A,\Delta_1$ and $\Delta_0 = \Lambda_1,\Delta'_0$ for some $\Delta'_0$. We define:
\begin{multline*}
 \small
      \proofbox{
    \infer[\ccutR]{\stseqR{T}{\Lambda_1,\Delta'_0,\Gamma,\Delta_1}{C_1\ot C_2}}{
        \stseqL[f]{\n}{\Gamma}{A} 
       &
        \infer[\otR]{\stseqR{T}{\Lambda_1,\Delta'_0,A, \Delta_1}{C_1 \ot C_2}}{
          \stseqR[g_1]{T}{\Lambda_1}{C_1} &
          \stseqL[g_2]{\n}{\Delta'_0,A,\Delta_1}{C_2}
             }
         }
       }
\quad \eqdf \\ 
\small
  \proofbox{
   \infer[\otR]{\stseqR{T}{\Lambda_1,\Delta'_0,\Gamma,\Delta_1}{C_1\ot C_2}}{
     \stseqR[g_1]{T}{\Lambda_1}{C_1} &
     \infer[\ccut]{\stseqL{\n}{\Delta'_0,\Gamma,\Delta_1}{C_2}}{
       \stseqL[f]{\n}{\Gamma}{A} &
       \stseqL[g_2]{\n}{\Delta'_0,A,\Delta_1}{C_2}
     }
   }
   }
\end{multline*}
      \end{itemize}

    \item Case $g = \IRfoc$ or $\axatm$: impossible.
    \end{itemize}
  \end{description}
\end{proof}
We should mention that in our formalization, the domain of the $\focus$ function is restricted to cut-free derivations, factoring the proof of Theorem~\ref{thm:foccmplt} via Lemma~\ref{lem:admits-cut}.
We can then prove the following lemmata easily by induction, and derive the first half of our main coherence theorem by combining Corollary~\ref{cor:circeqifffoceq} below with the results of Sections~\ref{sec:seqcalc} and \ref{sec:adequacy}.

\begin{lemma}\label{lem:focemb}
  For any $f : \stseqL{S}{\Gamma}{C}$, $\focus\, (\embL\, f) = f$.
\end{lemma}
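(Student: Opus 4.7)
The plan is to prove the lemma by induction on the focused derivation $f : \stseqL{S}{\Gamma}{C}$, but to make the induction go through I would simultaneously prove the companion statement for the right phase: for every focused derivation $g : \stseqR{T}{\Gamma}{C}$, $\focus(\embR\,g) = \switch\,g$. The two statements are mutually recursive because the $\switch$ rule of the focused calculus is exactly what bridges the phases in $\embL$.

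For the L-phase induction, the cases $f = \uf\,f'$, $f = \IL\,f'$, and $f = \otL\,f'$ are the routine ones. In each case, $\embL$ translates the outer rule to the same (unfocused) rule in the sequent calculus, and by the definition of $\focus$ (which is defined by induction using Lemma~\ref{lemma:foc-admits}) those L-phase rules are ``transparent'': $\focus$ applied to $\uf$/$\IL$/$\otL$ simply pushes through to the premise. The induction hypothesis then closes the case. The interesting case is $f = \switch\,g$, which reduces to the R-phase statement, since $\embL(\switch\,g) = \embR\,g$.

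For the R-phase statement, $g = \IRfoc$ is immediate because the admissible $\IR$ rule in Lemma~\ref{lemma:foc-admits} was defined precisely as $\switch(\IRfoc)$; and $g = \axatm$ with $\stseqR{X}{~}{X}$ reduces to checking that the admissible $\ax$ at an atomic formula unfolds (in the base case of its induction on the formula) to $\switch(\axatm)$. The genuinely delicate case is $g = \otRfoc(g_1, g_2)$, where $\embR(\otRfoc(g_1,g_2)) = \otR(\embR\,g_1, \embL\,g_2)$ and we must show that $\focus$ of this equals $\switch(\otRfoc(g_1,g_2))$. Using the inductive hypotheses, this reduces to a supporting commutation lemma: for any focused $g_1 : \stseqR{T}{\Gamma}{A}$ and $g_2 : \stseqL{\n}{\Delta}{B}$, the admissible $\otR$ from Lemma~\ref{lemma:foc-admits} applied to the L-phase derivations $\switch\,g_1$ and $g_2$ produces $\switch(\otRfoc(g_1, g_2))$.

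I expect that supporting commutation lemma to be the main obstacle, because the admissible $\otR$ in Lemma~\ref{lemma:foc-admits} is defined by induction on the derivation of its first premise, so its behavior is governed by the outermost rule of that derivation. The fact that its input is of the form $\switch\,g_1$ means we land in exactly the clause of that definition that packages everything back under a $\switch(\otRfoc(-,-))$, and the identity then holds by construction. Once these commutation lemmata for the admissible rules $\ax$, $\IR$, $\otR$, $\uf$, $\IL$, $\otL$ are established, the statement of Lemma~\ref{lem:focemb} follows by a uniform induction, just as similar left-inverse results (such as Lemma~\ref{lem:soundcmplt}) were handled earlier in the paper.
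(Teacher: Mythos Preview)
Your proposal is correct and follows the same approach the paper indicates (it merely says these lemmata are proved ``easily by induction''). Your explicit identification of the companion right-phase statement $\focus(\embR\,g) = \switch\,g$ and of the supporting commutation fact for the admissible $\otR$ (namely that $\otR(\switch\,g_1,g_2) = \switch(\otRfoc(g_1,g_2))$, which holds by the $\switch$ clause of the inductive definition of the admissible $\otR$) is exactly the unpacking that the paper's terse ``by induction'' elides.
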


\begin{lemma}\label{lem:foccirceq}
  For any $f, g : \stseq{S}{\Gamma}{C}$, if $f \circeq g$, then
  $\focus\, f = \focus\, g$.
\end{lemma}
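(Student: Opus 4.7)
The plan is to proceed by induction on the proof that $f \circeq g$, treating $\circeq$ as the least congruence generated by the equations in Figure~\ref{fig:cutfreeeqns}. Reflexivity, symmetry, and transitivity are immediate, so the real content is in (i) the base cases corresponding to the five generating equations and (ii) the congruence cases covering each of the sequent calculus rules.

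The congruence cases reduce to a direct appeal to the induction hypothesis. Since $\focus$ is defined compositionally on cut-free derivations---mapping each rule to the corresponding admissible rule from Lemma~\ref{lemma:foc-admits}---whenever two derivations differ only at a subderivation $f' \circeq g'$, we obtain $\focus\,f' = \focus\,g'$ by induction and therefore equality of the full focused derivations, because the admissibility operations are functions and hence produce equal outputs on equal inputs.

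For the base cases, the key observation is that each equation in Figure~\ref{fig:cutfreeeqns} has been built into the very definitions of the admissibility operations in Lemma~\ref{lemma:foc-admits}. Specifically, the two axiom-expansion equations (for $\ax$ at $\I$ and at $A\ot B$) hold because the admissibility of $\ax$ at a non-atomic formula is itself defined by eta-expansion, using the admissibility of $\IR$ and $\otR$ together with $\axatm$ to produce exactly the expanded derivation on the right-hand side. The three commutation equations (for $\otR$ commuting with $\uf$, $\IL$, and $\otL$ in its first premise) hold because the admissibility of $\otR$ is defined by induction on its first premise, with the clauses for $\uf$, $\IL$, and $\otL$ defined exactly to push these left rules past the $\otR$---which is what the right-hand sides of the equations express.

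The main obstacle will be the bookkeeping in the base cases: one must carefully match each generating equation against the precise clause in the definitions of $\focus$ and of the $\ax$/$\otR$ admissibility operations from Lemma~\ref{lemma:foc-admits}, and for the axiom-expansion case at $A\ot B$ one must additionally verify (by a subsidiary induction on $A$ and $B$) that applying $\focus$ to the already-expanded derivation on the right-hand side yields the same focused term as applying it to $\ax$ on the left-hand side. Because these equalities are definitional in our presentation of focusing, each case reduces to pattern-matching on the admissibility clauses, and no further creativity is required.
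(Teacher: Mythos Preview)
Your proposal is correct and is essentially the approach the paper has in mind: the paper merely says these lemmata are proved ``easily by induction,'' and your induction on the derivation of $f \circeq g$, with base cases discharged by the definitional clauses of the admissible $\ax$ and $\otR$ from Lemma~\ref{lemma:foc-admits}, is exactly that. One minor remark: the subsidiary induction you mention for the $A\ot B$ axiom-expansion case is not actually needed, since $\focus(\ax_{A\ot B})$ unfolds \emph{definitionally} to $\otL(\otR_{\mathrm{adm}}(\focus(\ax_A),\uf(\focus(\ax_B))))$, which is already syntactically the image under $\focus$ of the right-hand side.
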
                       


\begin{lemma}\label{lem:embfoc}
  For any $f : \stseq{S}{\Gamma}{C}$, we have
  $\embL\, (\focus\, f) \circeq f$.
\end{lemma}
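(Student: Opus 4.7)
The proof proceeds by induction on the cut-free derivation $f : \stseq{S}{\Gamma}{C}$. The straightforward cases are the ones where $\focus$ commutes with the top-level rule. Namely, for $f = \uf\,f'$, $f = \IL\,f'$, or $f = \otL\,f'$, the definition of $\focus$ just recurses on the subderivation, so $\embL\,(\focus\,f) = r\,(\embL\,(\focus\,f'))$ where $r$ is the respective rule, and by the induction hypothesis this is $\circeq$ to $r\,f' = f$. The case $f = \IR$ is immediate since $\focus\,\IR = \switch\,\IRfoc$, whose embedding is literally $\IR$.

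The two non-trivial cases are $f = \ax$ and $f = \otR(f_1,f_2)$, and the best way to handle them is to first establish two auxiliary lemmas corresponding to the admissibility proofs inside Lemma~\ref{lemma:foc-admits}. First, by induction on the formula $A$, I would show that for the admissible $\ax$ rule $\ax^{\mathsf{foc}}_A : \stseqL{A}{~}{A}$ constructed in that proof, $\embL\,\ax^{\mathsf{foc}}_A \circeq \ax$. The atomic case is trivial; the $A = \I$ case gives $\IL(\switch\,\IRfoc)$ which embeds to $\IL\,\IR$, equal to $\ax$ by the first equation of Figure~\ref{fig:cutfreeeqns}; the $A = A_1 \ot A_2$ case uses the second equation together with the second auxiliary lemma below applied to the two smaller axioms. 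Second, by induction on $f_1$ (matching the case analysis in the definition of the admissible $\otR$), I would show that for focused derivations $f_1 : \stseqL{S}{\Gamma}{A}$ and $f_2 : \stseqL{\n}{\Delta}{B}$, the embedding $\embL(\otR^{\mathsf{adm}}(f_1,f_2))$ is $\circeq$ to $\otR(\embL\,f_1,\embL\,f_2)$. The cases where $f_1$ starts with $\uf$, $\IL$, or $\otL$ push $\otR$ under the left rule in the admissibility proof, so after embedding they become exactly the last three equations of Figure~\ref{fig:cutfreeeqns}; the case $f_1 = \switch\,f_1'$ just produces $\switch(\otRfoc(f_1',f_2^{\ldots}))$ whose embedding is $\otR$ directly.

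With these two auxiliary lemmas in hand, the $\ax$ case of the main induction is immediate, and the $\otR$ case unfolds as
\[
\embL\,(\focus\,(\otR(f_1,f_2))) = \embL\,(\otR^{\mathsf{adm}}(\focus\,f_1,\focus\,f_2)) \circeq \otR(\embL\,(\focus\,f_1),\embL\,(\focus\,f_2)) \circeq \otR(f_1,f_2),
\]
using the second auxiliary lemma and then the induction hypothesis, together with the fact (implicit in the definition of $\circeq$ as a congruence) that $\otR$ is compatible with $\circeq$. The main obstacle is the $\otR$ auxiliary lemma: one must verify that the three left-commutation equations of Figure~\ref{fig:cutfreeeqns} are precisely the equations needed to identify the outputs of the admissibility algorithm. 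This is not accidental, of course — those equations were designed exactly so that the focused fragment classifies $\circeq$-equivalence classes — but making the match explicit requires care in reading off the admissibility clauses.
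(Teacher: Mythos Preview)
Your proposal is correct and is exactly the natural elaboration of the paper's terse proof, which reads in full ``By induction on $f$.'' The two auxiliary lemmas you isolate (that the admissible $\ax$ and $\otR$ of Lemma~\ref{lemma:foc-admits} embed back, up to $\circeq$, to the original $\ax$ and $\otR$) are precisely what is needed, and your observation that the three left-commutation equations of Figure~\ref{fig:cutfreeeqns} line up with the three recursive clauses of the admissible $\otR$, while the two $\eta$-like equations handle the compound-formula cases of the admissible $\ax$, is the heart of the argument.
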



\begin{corollary}\label{cor:circeqifffoceq}
  For any $f, g : \stseq{S}{\Gamma}{C}$, $f \circeq g$ iff $\focus\, f = \focus\, g$.
\end{corollary}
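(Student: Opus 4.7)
The statement to be proved is a biconditional, so I would split it into two directions. Both directions should follow directly from the preceding lemmas, so this is really a two-line corollary rather than a substantial theorem; the intellectual work has been front-loaded into Lemmas~\ref{lem:foccirceq} and \ref{lem:embfoc} (which in turn rest on the analysis of focusing completeness via Lemma~\ref{lemma:foc-admits}).

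For the forward implication, I would simply invoke Lemma~\ref{lem:foccirceq}: if $f \circeq g$ then $\focus\, f = \focus\, g$. No further argument is needed, since the lemma is stated in exactly this form.

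For the reverse implication, assume $\focus\, f = \focus\, g$. Applying $\embL$ to both sides preserves the equality, giving $\embL\,(\focus\, f) = \embL\,(\focus\, g)$. Now Lemma~\ref{lem:embfoc} states that $\embL\,(\focus\, h) \circeq h$ for any $h : \stseq{S}{\Gamma}{C}$, so in particular $f \circeq \embL\,(\focus\, f)$ and $\embL\,(\focus\, g) \circeq g$. Chaining these along the equality in the middle via transitivity of $\circeq$ yields $f \circeq g$.

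No step in this plan is a real obstacle; the corollary is a straightforward packaging of the two lemmas. The only thing worth being careful about is that we are using $=$ (strict equality) on focused derivations in the conclusion of Lemma~\ref{lem:foccirceq} and in the hypothesis of the reverse direction, so no quotienting of focused derivations is being tacitly invoked. In the final write-up I would present the argument as a short paragraph citing the two lemmas by name, and note that the upshot is that focused derivations serve as canonical representatives of $\circeq$-equivalence classes of cut-free sequent calculus derivations, which together with the results of Sections~\ref{sec:seqcalc} and \ref{sec:adequacy} gives the promised coherence theorem for $\Fsk(\Var)$.
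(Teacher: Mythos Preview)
Your proposal is correct and matches the paper's intended approach: the corollary is stated without proof immediately after Lemmata~\ref{lem:foccirceq} and~\ref{lem:embfoc}, precisely because it follows from them in the way you describe. (Lemma~\ref{lem:focemb} is not needed for this corollary, and you rightly do not invoke it.)
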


\noindent



\begin{theorem}[Coherence: equality]\label{thm:coh:eq}
  For any $f, g : A \tto C$, we have $f \doteq g$ if and only if
  $\focus\, (\cmplt\, f) = \focus\, (\cmplt\, g)$.
\end{theorem}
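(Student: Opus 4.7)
The plan is to derive this theorem as a direct corollary of the results accumulated throughout Sections~\ref{sec:adequacy} and \ref{sec:focusing}, essentially by chaining together the bijective correspondences already established. The bulk of the work has been done: we have soundness and completeness translations between the categorical calculus and the sequent calculus, compatibility of each translation with the respective equational theories, and the identification of focused derivations as canonical representatives of $\circeq$-equivalence classes via Corollary~\ref{cor:circeqifffoceq}.

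For the forward direction ($\Rightarrow$), I would argue as follows. Suppose $f \doteq g$. By Lemma~\ref{lem:cmpltcompat}, the completeness translation is compatible with $\doteq$, so $\cmplt\, f \circeq \cmplt\, g$. Then by Corollary~\ref{cor:circeqifffoceq}, which characterizes $\circeq$-equivalence in terms of equality of focused normal forms, we conclude $\focus\,(\cmplt\, f) = \focus\,(\cmplt\, g)$.

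For the backward direction ($\Leftarrow$), suppose $\focus\,(\cmplt\, f) = \focus\,(\cmplt\, g)$. Again appealing to Corollary~\ref{cor:circeqifffoceq}, this gives us $\cmplt\, f \circeq \cmplt\, g$. Applying Lemma~\ref{lem:soundcompat} (soundness respects $\circeq$), we obtain $\sound\,(\cmplt\, f) \doteq \sound\,(\cmplt\, g)$. Finally, Lemma~\ref{lem:soundcmplt} tells us that $\sound\,(\cmplt\, h) \doteq h$ for any $h$, so by two applications of this together with symmetry and transitivity of $\doteq$, we conclude $f \doteq g$.

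Since both directions reduce to straightforward invocations of prior lemmata, there is no real obstacle here; the theorem is a book-keeping consequence of the framework. The genuine content of the coherence result lives in the earlier lemmata, notably in the proof of Lemma~\ref{lem:strcmpltsound} (establishing that $\strcmplt$ inverts $\sound$ up to $\circeq$) and in the focusing argument behind Corollary~\ref{cor:circeqifffoceq}. The present theorem merely packages these facts into a decision procedure: to check $f \doteq g$, compute the focused normal forms of $\cmplt\, f$ and $\cmplt\, g$ and compare them syntactically.
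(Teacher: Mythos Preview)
Your proposal is correct and takes essentially the same approach as the paper, which does not spell out a proof but simply states that the theorem follows by combining Corollary~\ref{cor:circeqifffoceq} with the results of Sections~\ref{sec:seqcalc} and~\ref{sec:adequacy}. You have made that combination explicit, invoking precisely Lemma~\ref{lem:cmpltcompat} and Corollary~\ref{cor:circeqifffoceq} for the forward direction, and Corollary~\ref{cor:circeqifffoceq}, Lemma~\ref{lem:soundcompat}, and Lemma~\ref{lem:soundcmplt} for the backward direction.
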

\noindent
Alternatively, it should be possible to prove Theorem~\ref{thm:coh:eq} more directly without the intermediate step via cut-free-but-unfocused derivations, by reproving analogues of some of the results of Section~\ref{sec:adequacy} directly for focused derivations.
However, we have not formalized this proof strategy.

We remark that Theorem~\ref{thm:coh:eq} gives a simple algorithm for deciding equality of maps in $\Fsk(\Var)$.
Moreover, as already mentioned, the rules of Figure~\ref{fig:focusing} can also be interpreted as defining a proof search strategy, and thus an algorithm for deciding existence of maps in $\Fsk(\Var)$.
Indeed, the rules can be turned into a simple algorithm for \emph{enumerating} all elements of any homset in the free skew monoidal category without duplicates, yielding the second half of our coherence theorem.
\begin{lemma}
  For any $S, \Gamma, C$, one can compute a finite list
  $\focderivs\, (S, \Gamma, C)$ of derivations of
  $\stseqL{S}{\Gamma}{C}$ containing every such derivation exactly once. In
  particular, we can decide whether $\stseqL{S}{\Gamma}{C}$ is
  derivable.
\end{lemma}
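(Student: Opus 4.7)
The plan is to define $\focderivs$ by well-founded recursion on a suitable termination measure on sequents. The key observation underlying the argument is that in the focused calculus the shape of a sequent almost entirely determines which rule can appear at the root, with only finitely many branching choices at each step; so we can just collect every possible way of building a derivation bottom-up.

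First I would introduce a measure $\mu$ defined by
\[
\mu(\stseqL{S}{\Gamma}{C}) = 2(|S| + |\Gamma| + |C|) + 1 + [S = \n],
\qquad
\mu(\stseqR{T}{\Gamma}{C}) = 2(|T| + |\Gamma| + |C|),
\]
where $|{-}|$ counts atom and connective occurrences (extended additively to contexts, with $|\n| = 0$) and $[S = \n]$ equals $1$ when the stoup is empty and $0$ otherwise. A direct case analysis over the rules of Figure~\ref{fig:focusing} shows that every premise has strictly smaller $\mu$-value than its conclusion: the logical rules $\IL$, $\otL$, $\axatm$, $\IRfoc$ and $\otRfoc$ decrease the total formula size in the first summand, while $\uf$ and $\switch$ (which leave the formulae unchanged) decrease the secondary phase component.

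With this in hand I proceed by strong induction on $\mu$. At each sequent I enumerate the finite set of candidate last rules: in L-phase the rule is forced by the shape of $S$ ($\IL$ when $S = \I$, $\otL$ when $S = A_1 \ot A_2$, $\switch$ when $S = X$), the only ambiguity being $S = \n$, in which case both $\switch$ and, provided $\Gamma$ is non-empty, $\uf$ may apply; in R-phase the rule is determined by $C$ ($\axatm$ when $C$ is an atom with matching stoup and empty context, $\IRfoc$ when $C = \I$ with empty stoup and empty context, $\otRfoc$ when $C = A_1 \ot A_2$, with finitely many splits $\Gamma = \Gamma_1, \Gamma_2$). For each candidate last rule I invoke the induction hypothesis to obtain finite enumerations of the premises, form all tuples of choices, and concatenate over all candidates; the resulting list $\focderivs(S,\Gamma,C)$ is manifestly finite.

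Completeness and absence of duplicates both follow from unique readability of focused derivations: any derivation $d$ decomposes uniquely into its root rule, its non-deterministic parameters (the context split in $\otRfoc$, or the choice of $\uf$ vs.\ $\switch$ in the ambiguous L-phase case), and its premise derivations, so $d$ appears exactly once in the list, using that the premise sublists enumerate without repetition by induction. Decidability of derivability is then immediate: $\stseqL{S}{\Gamma}{C}$ is derivable iff $\focderivs(S,\Gamma,C)$ is non-empty. The only mildly delicate point in the plan is the calibration of the secondary component of $\mu$ so that $\uf$ and $\switch$ strictly decrease the measure despite leaving the formula data fixed; once the measure is in place, the remainder is routine structural recursion.
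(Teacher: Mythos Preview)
Your proposal is correct and follows essentially the same approach as the paper: both argue that root-first search terminates because only finitely many rules apply at each step and every premise is strictly smaller under a well-founded ranking of sequents. The paper's measure is the lexicographic triple (number of $\I$/$\ot$ occurrences, stoup emptiness with singleton $<$ empty, phase with $\mathsf{R} < \mathsf{L}$), which your single numeric measure $\mu$ encodes in a slightly different but equivalent way; your explicit appeal to unique readability for the ``no duplicates'' clause is a small elaboration the paper leaves implicit.
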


\begin{proof}
  As explained, we can consider the focused calculus as defining a
  root-first search strategy.  This search is guaranteed to terminate
  with a finite set of derivations because, for any goal sequent $\stseqLR{S}{\Gamma}{C}{P}$
  ($P \in \{\mathsf{L}, \mathsf{R}\}$), there are only finitely many
  possible instances of rules to apply, and the subgoals that they
  generate are always smaller relative to a well-founded order on
  sequents. (We can rank sequents by lexicographically ordered triples
  consisting of the number of occurrences of $\I$ and $\ot$,
  the information whether the stoup is empty or not, with singleton $<$ empty,
  and the phase, with $\mathsf{R} < \mathsf{L}$.)
\end{proof}


\begin{theorem}[Coherence: enumeration]\label{thm:coh:enum}
  For any $A, C \in \Tm$, let
  \[ \fskmaps\,(A,C) \eqdf [ \sound\,(\embL\,f) \mid f \in \focderivs\, (A, (), C) ]. \]
  For any $f : A \tto C$, there exists a unique $g \in \fskmaps\,(A,C)$ such that $f \doteq g$.
\end{theorem}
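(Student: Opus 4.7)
The plan is to establish existence and uniqueness separately, in both cases reducing to the machinery developed in Sections~\ref{sec:adequacy} and~\ref{sec:focusing}.

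For existence, given $f : A \tto C$, I would take the canonical focused derivation $h \eqdf \focus\,(\cmplt\, f)$; by construction $h \in \focderivs\,(A, (), C)$, so $g \eqdf \sound\,(\embL\, h)$ is an element of $\fskmaps\,(A,C)$. To verify $f \doteq g$, I would chain three facts already at hand: Lemma~\ref{lem:embfoc} gives $\embL\,(\focus\,(\cmplt\,f)) \circeq \cmplt\,f$; Lemma~\ref{lem:soundcompat} propagates this to $\sound\,(\embL\,(\focus\,(\cmplt\,f))) \doteq \sound\,(\cmplt\,f)$; and Lemma~\ref{lem:soundcmplt} gives $\sound\,(\cmplt\,f) \doteq f$. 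Composing these yields $g \doteq f$.

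For uniqueness, suppose $g_1, g_2 \in \fskmaps\,(A,C)$ both satisfy $f \doteq g_i$, so by transitivity $g_1 \doteq g_2$. Write $g_i = \sound\,(\embL\, f_i)$ with $f_i \in \focderivs\,(A, (), C)$. Since $\focderivs\,(A, (), C)$ enumerates focused derivations without duplicates, it suffices to prove $f_1 = f_2$. The key computation is
\[
\focus\,(\cmplt\,(\sound\,(\embL\, f_i))) = f_i,
\]
which proceeds as follows. Lemma~\ref{lem:strcmpltsound} gives $\strcmplt\,(\sound\,(\embL\, f_i)) \circeq \embL\, f_i$; unwinding $\strcmplt \eqdf \otLinvstar \comp \cmplt$ and noting that $\otLinvstar$ acts as the identity in the degenerate case of an empty context $\Gamma$ with a formula stoup (the base case of the induction in Lemma~\ref{lem:leftinvstar}), we obtain $\cmplt\,(\sound\,(\embL\, f_i)) \circeq \embL\, f_i$. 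Applying Lemma~\ref{lem:foccirceq} and then Lemma~\ref{lem:focemb} gives $\focus\,(\cmplt\,(\sound\,(\embL\, f_i))) = \focus\,(\embL\, f_i) = f_i$. Now Theorem~\ref{thm:coh:eq} applied to $g_1 \doteq g_2$ gives $\focus\,(\cmplt\, g_1) = \focus\,(\cmplt\, g_2)$, i.e., $f_1 = f_2$, and the no-duplicates property of $\focderivs$ forces the two list entries to coincide.

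The only subtle piece of bookkeeping is the identification of $\strcmplt$ with $\cmplt$ on inputs of the shape $A \tto C$ (which are exactly the degenerate shape $\asem{A \mid ()} \tto C$ coming from $\focderivs\,(A,(),C)$); once that observation is in place, the statement assembles directly from the coherence-for-equality theorem (Theorem~\ref{thm:coh:eq}) together with the duplicate-freeness of $\focderivs$.
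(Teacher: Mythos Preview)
Your proposal is correct and is essentially the argument the paper leaves implicit (the paper states the theorem without proof, treating it as a direct corollary of Theorem~\ref{thm:coh:eq}, Lemmata~\ref{lem:soundcmplt}, \ref{lem:strcmpltsound}, \ref{lem:focemb}--\ref{lem:embfoc}, and the enumeration lemma). One minor remark: in the uniqueness part, once you have established $f_1 = f_2$ as focused derivations, the equality $g_1 = g_2$ follows immediately from $g_i = \sound(\embL\,f_i)$, so the appeal to the no-duplicates property of $\focderivs$ is not actually needed (it would only matter if uniqueness were read as uniqueness of \emph{position} in the list rather than of value).
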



\section{Comparison with Bourke and Lack}
\label{sec:bourke-lack}

As mentioned in the introduction, the analysis we have presented here
is closely related to Bourke and Lack's recent characterization of
skew monoidal categories as \emph{left representable skew multicategories}
\cite{BL:multi}.
In this section we describe the relationship more explicitly.

To set the stage, let us begin by recalling the concept of a \emph{multicategory} \cite{Lei}, but following the original sequent calculus-inspired formulation given by Lambek \cite{Lambek1969}.
Thus, an ordinary \defn{multicategory} $\M$ consists first of all of a set of objects, and for list of objects $A_1,\dots,A_n$ and any object $C$, a set $\M(A_1,\dots,A_n;C)$ of \emph{multimaps} with domain $A_1,\dots,A_n$ and codomain $C$.
We write $g : \multi{\Gamma}{C}$ or $\multi[g]{\Gamma}{C}$ to depict that $g$ is a multimap in $\M(\Gamma;C)$ where $\Gamma = A_1,\dots,A_n$.
Moreover, a multicategory must include, for every object $A$, an \emph{identity} multimap $\id_A : \multi{A}{A}$; and for every pair of multimaps $f : \multi{\Gamma}{A}$ and $g : \multi{\Delta_0,A,\Delta_1}{C}$, a \emph{composition} multimap $\cut_{\Delta_0-\Delta_1}(f,g) : \multi{\Delta_0,\Gamma,\Delta_1}{C}$.
We typically leave off the subscripts for $\id$ and $\cut$ when clear from context.
Finally, all of this data must be subject to four equations:

\begingroup
\allowdisplaybreaks
\scriptsize
\begin{align}
  \infer[\cut]{\multi{\Delta_0,A,\Delta_1}{C}}{\infer[\id]{\multi {A}{A}}{} & \multi[f]{\Delta_0,A,\Delta_1}{C}}
 \quad & =\quad 
\multi[f]{\Delta_0,A,\Delta_1}{C} \label{eq:multi1}
  \\[1em]
\infer[\cut]{\multi{\Gamma}{A}}{\multi[g]{\Gamma}{A} & \infer[\id]{\multi AA}{}}
\quad & =\quad  
\multi[g]{\Gamma}{A} \label{eq:multi2}
\end{align}
\begin{multline}
\infer[\cut]{\multi{\Lambda_0,\Delta_0,\Gamma,\Delta_0,\Lambda_1}{C}}{\infer[\cut]{\multi{\Delta_0,\Gamma,\Delta_1}{B}}{\multi[f]{\Gamma}{A} & \multi[g]{\Delta_0,A,\Delta_1}{B}} & \hspace*{-3pt}\multi[h]{\Lambda_0,B,\Lambda_1}{C}}
\quad \\
\infer[\cut]{\multi{\Lambda_0,\Delta_0,\Gamma,\Delta_1,\Lambda_1}{C}}{\multi[f]{\Gamma}{A} & 
    \hspace*{-5pt}\infer[\cut]{\multi{\Lambda_0,\Delta_0,A,\Delta_1,\Lambda_1}{B}}{\multi[g]{\Delta_0,A,\Delta_1}{B} & \multi[h]{\Lambda_0,B,\Lambda_1}{C}}} \label{eq:multi3}
\end{multline}
\begin{multline}
\infer[\cut]{\multi{\Delta_0,\Gamma_1,\Delta_1,\Gamma_2,\Delta_2}{C}}{
    \multi[f_1]{\Gamma_1}{A} &
    \infer[\cut]{\multi{\Delta_0,A,\Delta_1,\Gamma_2,\Delta_2}{C}}{
      \multi[f_2]{\Gamma_2}{B} &
      \multi[g]{\Delta_0,A,\Delta_1,B,\Delta_2}{C}}}
\quad = \\
  \infer[\cut]{\multi{\Delta_0,\Gamma_1,\Delta_1,\Gamma_2,\Delta_2}{C}}{
    \multi[f_2]{\Gamma_2}{B} &
    \infer[\cut]{\multi{\Delta_0,\Gamma_1,\Delta_1,B,\Delta_2}{C}}{
      \multi[f_1]{\Gamma_1}{A} &
      \multi[g]{\Delta_0,A,\Delta_1,B,\Delta_2}{C}}} \label{eq:multi4}
\end{multline}
\endgroup

Here, following Lambek, we have chosen to present the equations using a proof-theoretic notation whose meaning should hopefully be clear.
For example, equation \eqref{eq:multi1} can also be written $\cut_{\Delta_0-\Delta_1}(\id_A,f) = f$.
Likewise, we have taken as basic structure the operations
\[ 
\cut_{\Delta_0-\Delta_1} : \M(\Gamma;A) \times \M(\Delta_0,A,\Delta_1;C) \to \M(\Delta_0,\Gamma,\Delta_1;C) 
\]
which are sometimes referred to as ``partial'' composition (or substitution) operations, since they compose the first multimap into a single argument of the second multimap.
Multicategories may be alternatively defined (cf.~\cite{Lei}) using ``parallel'' composition operations of type
\[
  \M(\Gamma_1;A_1) \times \dots \M(\Gamma_n;A_n) \times \M(A_1,\dots,A_n;C) \to \M(\Gamma_1,\dots,\Gamma_n;C)
\]
satisfying appropriate versions of associativity and unit equations.
The equivalence between these two different presentations of multicategories based on either partial or parallel composition operations appears to be folklore.\footnote{A rigorous proof of the equivalence for the operadic case (that is, for one-object multicategories) can be found in a recent monograph by Fresse~\cite[v.~1, ch.~2]{Fresse}.}

The reader may refer to \cite{Lei} for many different examples of multicategories.
Importantly, any monoidal category $(\C,\I,\ot)$ has an underlying multicategory $\M$ with the same objects and with $\M(\Gamma;C) \eqdf \C(\asem{\Gamma},C)$, where $\asem{\Gamma}$ denotes the product of the list of objects $\Gamma = A_1,\dots,A_n$ defined using some bracketing, e.g., $\asem{\Gamma} \eqdf (\ldots (\I \ot A_1) \ldots) \ot A_n$.
Conversely, a multicategory $\M$ is said to be \defn{representable} just in case for any list of objects $\Gamma = A_1,\dots,A_n$, there is an object $\asem{\Gamma}$ together with a multimap
$m_\Gamma : \multi{\Gamma}{\asem{\Gamma}}$
which is \emph{strong universal} in the sense that there exists a family of bijections
\begin{equation}\label{eq:universal}
\quad\qquad \otLstar_\Gamma : \M(\Delta_0,\Gamma,\Delta_1;C) \overset{\sim}\to \M(\Delta_0,\asem{\Gamma},\Delta_1;C)
\end{equation}
(indexed by $\Delta_0$, $\Delta_1$, and $C$)
whose inverse is the operation
\[
\cut(m_\Gamma,-) : \M(\Delta_0,\asem{\Gamma},\Delta_1;C) \to \M(\Delta_0,\Gamma,\Delta_1;C)
\]
of precomposing with $m_\Gamma$.
Representable multicategories have been studied closely by Hermida \cite{Hermida2000}, who established among other results a 2-equivalence between the 2-category of monoidal categories and strong monoidal functors and the 2-category of representable multicategories and multifunctors that preserve strong universal multimaps.
Lambek \cite{Lambek1969} already considered essentially the same notion but where $\M$ is supplied with a strong universal nullary map $i : \multi{~}{\I}$ as well as strong universal binary maps $m_{A,B} : \multi{A,B}{A\ot B}$ for every $A$ and $B$.
Lambek called this a \emph{monoidal multicategory}, but to keep the terminology consistent we will refer to it as a \defn{nullary-binary representable} multicategory.
Every representable multicategory is obviously a nullary-binary representable multicategory, but the converse is also true, since binary and nullary strong universal maps can be composed to construct strong universal maps of arbitrary arity.

Before moving on to discuss skewness, as a final remark, let us point out the clear connection (again, already made by Lambek \cite{Lambek1961,Lambek1968,Lambek1969}) between the definition of nullary-binary representable multicategory and the rules of the monoidal sequent calculus (Figure~\ref{fig:lambek}).
In a sense that can be made precise, derivations of the calculus form a free nullary-binary representable multicategory under the $\id$ and $\cut$ rules when considered modulo the appropriate notion of equivalence, where the nullary and binary strong universal maps are derived using the right rules as $\IR$ and $\otR(\id_A, \id_B)$, while the bijections \eqref{eq:universal} correspond directly to the left rules $\IL$ and $\otL$ and the fact that they are invertible rules.

In their paper \cite{BL:multi}, Bourke and Lack give a concise definition of ``skew multicategory'' after first introducing a more general notion of \emph{$\mathcal{T}$-multicategory} for any $\Cat$-enriched operad $\mathcal{T}$.
In one formulation, a $\mathcal{T}$-multicategory corresponds to a $\Cat$-enriched multicategory $\M$ equipped with a $\Cat$-enriched multifunctor into $\mathcal{T}$ that is locally a discrete opfibration, in the sense that each functor $\M(A_1,\dots,A_n;C) \to \mathcal{T}_n$ is a discrete opfibration (writing $\mathcal{T}_n$ for the category of $n$-ary operations $\mseq{*,\dots,*}{*}$ of $\mathcal{T}$).
A \emph{skew multicategory} is then just a $\mathcal{R}$-multicategory, where $\mathcal{R}_n$ is defined as equivalent to the arrow category $\mathbf{2}$ for $n > 0$, and the terminal category $\mathbf{1}$ for $n = 0$.
Bourke and Lack also sketch how to unpack this definition into a more conventional description of the structure of a skew multicategory in terms of what they call ``tight'' and ``loose'' multimaps.
In order to illuminate the relationship to the sequent calculus and for its independent interest, we give here a completely explicit but equivalent reformulation of Bourke and Lack's definition.
\begin{definition}
A \defn{skew multicategory} $\M$ consists of:
\begin{itemize}
\item a set $M$ of objects of $\M$
\item for any $S \in M \uplus \{\n\}$, any list of objects $\Gamma = A_1,\dots,A_n \in M$, and object $C \in M$, a set $\M(S\mid\Gamma;C)$ of multimaps; a multimap $f : \flexi{S}{\Gamma}{C}$ is said to be \emph{tight} if $S = A$, and \emph{loose} if $S = \n$
\item for each object $A \in M$, a tight multimap $\id : \tight A~A$
\item for every pair of a (loose or tight) multimap $f : \flexi{S}{\Gamma}{A}$ and a tight multimap $g : \tight{A}{\Delta}{C}$, a multimap $\scut(f,g) : \flexi{S}{\Gamma,\Delta}{C}$; and for every pair of a loose multimap $f : \loose{\Gamma}{A}$ and a multimap $g : \flexi{S}{\Delta_0,A,\Delta_1}{C}$, a multimap $\ccut_{\Delta_0-\Delta_1}(f,g) : \flexi{S}{\Delta_0,\Gamma,\Delta_1}{C}$
\item a family of \emph{comparison functions} $\uf : \M(A\mid\Gamma;C) \to \M(\n\mid A,\Gamma;C)$
\item satisfying all of the equations \eqref{eq:skewmulti1a}--\eqref{eq:skewmulti4c} in Figures~\ref{fig:skeweqns} and \ref{fig:skeweqns-ctd}.
\end{itemize}
\end{definition}
In brief, compared to ordinary multicategories, skew multicategories distinguish ``tight'' multimaps from ``loose'' multimaps, with an inclusion/coercion of the former into the latter, and two kinds of composition.
This also leads to a (perhaps slightly intimidating) proliferation of equations, with Lambek's original four equations \eqref{eq:multi1}--\eqref{eq:multi4} replaced by the eleven equations \eqref{eq:skewmulti1a}--\eqref{eq:skewmulti4c}.
Of course we already encountered these equations in Section~\ref{sec:adequacy}, and as we will see shortly, the skew monoidal sequent calculus indeed defines a (special kind of) skew multicategory.

The main technical difference between our formulation of skew multicategories and Bourke and Lack's is that whereas they postulate a single parallel composition operator acting on multimaps of arbitrary kind -- with some logic for determining whether the result is tight or loose -- we postulate two different partial composition operations $\scut(f,g)$ and $\ccut(f,g)$, respectively for composition \emph{into} the first argument of a tight multimap $g$, and for composition \emph{out of} a loose multimap $f$ into an arbitrary argument of another multimap.
The two formulations are equivalent, however, since Bourke and Lack's  ``polymorphic'' parallel composition operation can be specialized to obtain the partial composition operations $\scut$ and $\ccut$, and conversely, any parallel composition of tight and loose maps can be expressed using an appropriate combination of $\scut$, $\ccut$, and $\uf$.

Every skew monoidal category $(\C,\I,\ot)$ gives rise to a skew multicategory with the same objects and with multimaps $\flexi{S}{\Gamma}{C}$ defined as morphisms $\asem{S \mid \Gamma} \to C$, where $S$ is interpreted as an object of $\C$ by taking $\I$ in the loose case $S = \n$, and $\asem{S \mid \Gamma}$ then denotes the left-associated product $(\ldots (S \ot A_1) \ldots) \ot A_n$ for $\Gamma = A_1,\dots,A_n$.
Conversely, a skew multicategory equivalent to one of this form is said to be \emph{left representable}.
\begin{definition}
  A skew multicategory $\M$ is \defn{left representable}
  just in case for any $S \in M \uplus \{\n\}$ and list of objects $\Gamma = A_1,\dots,A_n \in M$, there is an object $\asem{S \mid \Gamma}$ together with a multimap $m_{S,\Gamma} : \flexi{S}{\Gamma}{\asem{S \mid \Gamma}}$ which is \emph{left universal} in the sense that there exist a family of bijections
\begin{equation}\label{eq:left-universal}
  \quad\qquad \otLstar_{S,\Gamma} : \M(S\mid\Gamma,\Delta;C) \overset{\sim}\to \M(\asem{S \mid \Gamma}\mid\Delta;C)
\end{equation}
whose inverse is the operation $\scut(m_{S,\Gamma},-)$ of precomposing with $m_{S,\Gamma}$.
\end{definition}
Analogously to the non-skew case, we also say that a skew multicategory $\M$ is \defn{nullary-binary left representable}
  just in case there is an object $\I$ with a left universal loose multimap $i : \loose{~}{\I}$, as well as an object $A \ot B$ with a left universal tight multimap $m_{A,B} : \tight{A}{B}{A\ot B}$ for every pair of objects $A,B$.
(In the terminology of \cite{BL:multi}, $\M$ is said to ``admit tight binary map classifiers and a nullary map classifier''.)
Since left universal maps are closed under ($\scut$) composition, a skew multicategory is left representable if and only if it is nullary-binary left representable (cf.~Proposition 4.5 of \cite{BL:multi}).
The following is stated as one of the main results of \cite{BL:multi}.
\begin{theorem}[Bourke \& Lack \cite{BL:multi}]\label{thm:blequiv}
  There is a 2-equivalence between the 2-category of skew monoidal categories and lax monoidal functors and the 2-category of left representable skew multicategories and skew multifunctors (that do not necessarily preserve left universal multimaps).
\end{theorem}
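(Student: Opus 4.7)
The plan is to construct 2-functors $U$ from the 2-category of skew monoidal categories (with lax monoidal functors and monoidal natural transformations) to that of left representable skew multicategories (with skew multifunctors and their transformations), and $R$ in the converse direction, and then to exhibit them as inverse equivalences. For the forward direction, a skew monoidal category $(\C,\I,\ot)$ gives rise to a skew multicategory $U\C$ whose multimaps $f : \flexi{S}{\Gamma}{C}$ are morphisms $f : \asem{S \mid \Gamma} \to C$ of $\C$, with identities, the cut operations $\scut$, $\ccut$, and the shift $\uf$ defined exactly as in the soundness proof of Section~\ref{sec:seqcalc} (in particular using Lemmata~\ref{lem:actfun}--\ref{lem:varphi}). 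The eleven skew multicategory equations~\eqref{eq:skewmulti1a}--\eqref{eq:skewmulti4c} then follow by routine computation from the skew monoidal coherence axioms. Left representability of $U\C$ is immediate: by Lemma~\ref{lem:theta}, the multimap $m_{S,\Gamma} \eqdf \id_{\asem{S\mid\Gamma}}$ is left universal, with~\eqref{eq:left-universal} being the identity map on equal hom-sets. On a lax monoidal functor $F : \C \to \mathcal{D}$ with coherences $\e^F, \m^F$, I define $UF$ by iterating $\e^F$ and $\m^F$ to build a laxator $\asem{FS\mid F\Gamma} \to F(\asem{S\mid\Gamma})$ and then postcomposing with $Ff$; the three lax monoidal axioms guarantee compatibility with the multicategory structure.

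For the backward direction, I associate to a left representable skew multicategory $\M$ the category $R\M$ whose objects are those of $\M$ and whose morphisms $A \to B$ are the tight unary multimaps $\tight{A}{~}{B}$, composed via $\scut$ in the empty context. The skew monoidal structure comes from nullary-binary left representability: unit $\I \eqdf \asem{\n\mid~}$, tensor $A\ot B \eqdf \asem{A\mid B}$, with functoriality of $\ot$ obtained by extending induced multimaps uniquely through the $m_{A,B}$. The unitors $\lam, \rho$ and associator $\al$ are defined as the unique morphisms in $R\M$ that correspond, under the bijections~\eqref{eq:left-universal}, to evident composites of the universal multimaps $i$ and $m_{A,B}$; the axioms (a)--(e) reduce to uniqueness statements verified via the skew multicategory equations. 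On 1-cells, a skew multifunctor $F : \M \to \mathcal{N}$ induces a lax monoidal functor $RF$ whose structure maps $\m^{RF}_{A,B} : FA\ot FB \to F(A\ot B)$ and $\e^{RF} : \I \to F\I$ arise by transporting $Fm_{A,B}$ and $Fi$ through the universal properties in $\mathcal{N}$.

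The 2-equivalence itself is then established by exhibiting 2-natural isomorphisms $RU \simeq 1$ and $UR \simeq 1$. The first is essentially strict: morphisms in $RU\C$ are tight unary multimaps $\tight{A}{~}{B}$ in $U\C$, which unfold by definition to morphisms $A \to B$ in $\C$, and the extracted skew monoidal structure recovers the original one. The second requires iterated use of the universal property: a multimap $\flexi{S}{\Gamma}{C}$ in $UR\M$ is by definition a morphism $\asem{S\mid\Gamma} \to C$ in $R\M$, that is, a tight unary multimap $\tight{\asem{S\mid\Gamma}}{~}{C}$ in $\M$, and applying~\eqref{eq:left-universal} step by step bijects this back with $\M(S\mid\Gamma;C)$.

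I expect the hard part to be verifying the bijective correspondence on 1-cells: lax monoidal functors between skew monoidal categories should correspond precisely to skew multifunctors between the corresponding left representable skew multicategories that are \emph{not} required to preserve left universal multimaps. The key insight is that the failure to preserve universal multimaps is exactly what supplies the laxator and unit coherences: for any skew multifunctor $F$, the image $Fm_{A,B} : \tight{FA}{FB}{F(A\ot B)}$ corresponds, under the universal property of the target's $m_{FA,FB}$ in $\mathcal{N}$, to a unique morphism $\m^F_{A,B} : FA\ot FB \to F(A\ot B)$ in $R\mathcal{N}$, and analogously $Fi$ yields $\e^F$. Translating the lax monoidal coherence axioms into the functoriality equations of $F$ then amounts to a careful but essentially mechanical unwinding using the equations of Figures~\ref{fig:skeweqns} and~\ref{fig:skeweqns-ctd} --- which is where our partial-composition reformulation of skew multicategories becomes particularly convenient, since each instance of $\m^F$ or $\e^F$ arises at a single $\scut$ or $\ccut$ rather than being tangled up in a polymorphic parallel composition.
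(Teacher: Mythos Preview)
The paper does not give its own proof of this theorem: it is stated as a result of Bourke and Lack \cite{BL:multi} and simply cited. The only commentary the paper offers is the remark immediately following the statement, noting that Bourke and Lack's constructions in \cite[\S6.1--6.2]{BL:multi} ``follow closely to our proofs of soundness (Theorem~\ref{thm:sound}) and completeness (Theorem~\ref{thm:cmplt}), respectively.'' So there is no paper-proof to compare against.

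That said, your sketch is a reasonable outline of how the equivalence is established, and it aligns with what the paper gestures at: your functor $U$ is exactly the construction underlying the soundness translation (Lemmata~\ref{lem:actfun}--\ref{lem:varphi} supply the composition operations, and Lemma~\ref{lem:theta} gives left representability), while your functor $R$ mirrors the completeness direction. Two small caveats are worth flagging. First, you should be explicit about the 2-cells on both sides (monoidal natural transformations versus transformations of skew multifunctors) and check that $U$ and $R$ are genuinely 2-functorial and that the equivalences are 2-natural; your sketch is silent on this and it is where much of the bookkeeping lives. Second, in verifying $UR \simeq 1$ you invoke the bijections~\eqref{eq:left-universal} ``step by step'', but you also need these bijections to be compatible with the skew multicategory structure (identities, both cuts, and $\uf$) in order to obtain an isomorphism of skew multicategories rather than merely a bijection on hom-sets; this is where the equations of Figures~\ref{fig:skeweqns}--\ref{fig:skeweqns-ctd} are actually used, and it deserves a sentence.
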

Note this equivalence also holds replacing lax monoidal functors by strong monoidal functors and simultaneously requiring the skew multifunctors to preserve left universal multimaps.

The skew monoidal sequent calculus and the results of Sections~\ref{sec:seqcalc}--\ref{sec:focusing} can be reunderstood in this categorical language, and vice versa:
\begin{itemize}
\item
  The restriction on the left rules in passing from the standard monoidal sequent calculus to the skew monoidal sequent calculus corresponds precisely to the weakening of the universal property for $\ot$ and $\I$ in passing from representable multicategories to left representable skew multicategories, replacing strong universality by left universality.
\item
  Bourke and Lack illustrate Theorem~\ref{thm:blequiv} by explaining how to construct a left representable skew multicategory from a skew monoidal category \cite[\S6.1]{BL:multi} and conversely \cite[\S6.2]{BL:multi}.
  These constructions follow closely to our proofs of soundness (Theorem~\ref{thm:sound}) and completeness (Theorem~\ref{thm:cmplt}), respectively.
\item
  The implication from nullary-binary left representability to left representability
  is implicitly used in the proof of Lemma~\ref{lem:leftinvstar}.
\item
  The focused sequent calculus gives a direct description of the free nullary-binary left representable skew multicategory over a set of atoms.
  Explicitly, $\mathcal{F}(\Var)$ has formulae as objects and focused derivations $\stseqL{S}{\Gamma}{C}$ as multimaps (notably, one does \emph{not} need to consider equivalence classes of derivations).
  Identity and composition are defined as in the proof of Lemma~\ref{lemma:foc-admits}, and satisfy the skew multicategory equations by Lemma~\ref{lem:skeweqns}.
  The nullary map $i : \stseqL{\n}{~}{\I}$ and the binary maps $m_{A,B} : \stseqL{A}{B}{A \ot B}$ are defined using the admissible (by Lemma~\ref{lemma:foc-admits}) right rules and identity axiom as $i \eqdf \IR$ and $m_{A,B} \eqdf \otR(\id_A,\uf\, \id_B)$, and their left universality is witnessed by the left rules $\IL$ and $\otL$.
\end{itemize}

\section{Conclusion and Future Work}
\label{sec:concl-future}

In this paper, we studied the free skew monoidal category from a
proof-theorist's point-of-view, in the spirit of Lambek's work.
We considered several different deductive systems, ranging from a
categorical calculus directly embodying the definition of the free skew monoidal category,
to a Gentzen-style sequent calculus with two forms of cut rules,
to a cut-free and focused subsystem of canonical derivations.
We learned that although skew monoidal categories have some remarkably subtle properties, the methods of proof theory are surprisingly well-suited for exploring them.
As a consequence of our coherence theorem, the focused sequent calculus
provides a very concrete description of the free skew monoidal category,
suitable for deciding equality of maps and for enumerating the set of maps
between any pair of objects.

We envisage a number of directions for future work.

One obvious direction would be to derive analogous coherence theorems for (non-monoidal) skew closed
categories \cite{Street2013skew} and for skew monoidal closed categories.
This would mean analyzing sequent calculi that correspond to the Lambek calculus with only one implication and without or with conjunction.



The fact that there can be multiple maps between a pair of objects in the free skew monoidal category also leads to some interesting questions.
It appears that one can partially order derivations in a canonical way for both the categorical calculus and the sequent calculus.
In particular, we can have a greatest element, i.e., a preferred derivation for any derivable sequent, and have soundness and
completeness preserve these partial orders. Moreover, one may ask whether this
ordering coincides with the canonical ordering induced by Lack and Street's
faithful functor $\Fsk \to \Delta_\bot$, viewing $\Delta_\bot$ as a
2-category with the pointwise ordering on monotone maps.

It is worth mentioning that there are some surprisingly elegant formulae for \emph{counting} different families of maps in the free skew semigroup category (a.k.a. intervals of the Tamari lattice \cite{Chapoton2006}), and so it may be interesting to refine Theorem~\ref{thm:coh:eq} and apply the focused sequent calculus to pursue a quantitative analysis of maps in the free skew monoidal category (similarly to how this was done for Tamari intervals in \cite{Zei}).

Finally, another more speculative direction is to develop sequent
calculi for \emph{higher-dimensional} skew monoidal and/or skew
semigroup categories.
Given the connections between the Tamari order
and the well-studied higher-dimensional polytopes known as
\emph{associahedra} \cite{TamariFestschrift}, it is natural to wonder
whether the methods of proof theory can reveal something new.

\section*{Acknowledgments}
T.U.\ was supported by the Estonian Ministry of Education
and Research institutional research grant no.~IUT33-13. N.V.\ was
supported by a research grant (13156) from Villum Fonden and
the ESF funded Estonian IT Academy research measure
(2014-2020.4.05.19-0001).
N.Z.\ was supported by a Birmingham Fellowship from
the University of Birmingham.
N.V.\ was with the IT University of Copenhagen when the first version 
of this article was written; N.Z.\ was with the University of Birmingham.

\bibliographystyle{natbib}

\newcommand{\doi}[1]{\href{http://dx.doi.org/#1}{doi: #1}}

\end{document}